\tikzset{  
    -Latex,auto,node distance =1.5 cm and 1.3 cm, thick,
    state/.style ={ellipse, draw, minimum width = 0.9 cm}, 
    point/.style = {circle, draw, inner sep=0.18cm, fill, node contents={}},  
    bidirected/.style={Latex-Latex,dashed}, 
    el/.style = {inner sep=2.5pt, align=right, sloped}  
}  
\DeclareMathOperator*{\argmax}{\arg\!\max}%
\algnewcommand\INPUT{\item[\textbf{Input:}]}%
\algnewcommand\OUTPUT{\item[\textbf{Output:}]}%
\renewcommand{\raggedright}{\leftskip=0pt \rightskip=0pt plus 0cm}
\def\hilite<#1>{%
	\temporal<#1>{\color{blue!25}}{\color{magenta}}%
	{\color{blue!55}}}
\newcolumntype{H}{>{\columncolor{blue!20}}c!{\vrule}}
\newcolumntype{H}{>{\columncolor{blue!20}}c}
\newtheorem{theorem}{Theorem}
\newtheorem{definition}{Definition}
\newtheorem{lemma}{Lemma}
\newtheorem{assumption}{Assumption}
\numberwithin{equation}{section} 
\numberwithin{theorem}{section}
\numberwithin{lemma}{section} 
\numberwithin{corollary}{section}
\numberwithin{definition}{section}
\numberwithin{proposition}{section} 
\numberwithin{remark}{section}
\numberwithin{example}{section}
\renewcommand{\oddsidemargin}{0mm}
\def\R{\mathbb R}
\definecolor{mydarkgreen}{rgb}{0,0.4,0}
\def\@makefnmark{}
\begin{document}
\vspace{-15in}

	\title{\Large {\textbf{Population-level Task-evoked Functional Connectivity\\ via Fourier Analysis}}}
	
	\author[1,*]{Kun Meng}
	\author[2]{Ani Eloyan}
		
	\affil[1]{\small Division of Applied Mathematics, Brown University, Providence, Rhode Island, USA}
\affil[2]{Department of Biostatistics, Brown University School of Public Health, Providence, Rhode Island, USA}
	\affil[*]{Correspondence: Kun Meng, e-mail: \texttt{kun\_meng@brown.edu}.}
	
	\maketitle

\begin{abstract}
Functional magnetic resonance imaging (fMRI) is a non-invasive and in-vivo imaging technique essential for measuring brain activity. Functional connectivity is used to study associations between brain regions, either while study subjects perform tasks or during periods of rest. In this paper, we propose a rigorous definition of task-evoked functional connectivity at the population level (ptFC). Importantly, our proposed ptFC is interpretable in the context of task-fMRI studies. An algorithm for estimating the ptFC is provided. We present the performance of the proposed algorithm compared to existing functional connectivity frameworks using simulations. Lastly, we apply the proposed algorithm to estimate the ptFC in a motor-task study from the Human Connectome Project. 

\end{abstract}

\noindent%
{\it Keywords:}\footnote{
\begin{itemize}
\item The project herein was supported by Grant Number 5P20GM103645 from the National Institute of General Medical Sciences.
    \item \textbf{Abbreviations:} BOLD, blood-oxygenation level-dependent; DC, dynamic connectivity; fMRI, functional magnetic resonance imaging; FC, functional connectivity; HCP, Human Connectome Project; HRF, hemodynamic response function; MLM, mixed linear model; ptFC, task-evoked functional connectivity at the population level; ptFCE, ptFC estimation; PH, persistent homology; ROI, region of interest; SBHM, spatial Bayesian hierarchical model; WSMZ, weakly stationary with mean zero.
\end{itemize}
} AMUSE algorithm; Human Connectome Project; motor-task; weakly stationary with mean zero.

\maketitle

\section{Introduction}\label{section: Introduction}

\textit{Functional magnetic resonance imaging} (fMRI) is a non-invasive brain imaging technique used to estimate both brain regional activity and interactions between brain regions. In fMRI studies, brain signals are measured on 3-dimensional volume elements (\textit{voxels}) during a certain period of time, and each signal is observed at discrete time points, e.g., to investigate neural activity. However, since neural activity occurs in milliseconds, it is impossible to directly observe it using fMRI technology. The neural activity is implicitly captured by \textit{blood-oxygenation level-dependent} (BOLD) signals. When neural activity in a brain area occurs, it is followed by localized changes in metabolism, where the corresponding local oxygen consumption increases, and then oxygen-rich blood flows to this area. This process results in an increase in oxyhemoglobin and a decrease in deoxyhemoglobin. The BOLD signal value at each time point is the difference between the oxyhemoglobin and deoxyhemoglobin levels. The signal consisting of BOLD values across all time points measures the localized metabolic activity influenced by the local brain vasculature, and it indirectly measures the localized neural activity. 

\subsection{Goal and Main Contribution}

During an fMRI experimental study, subjects either rest (resting-state fMRI) or perform tasks (task-fMRI). We focus on task-fMRI throughout this paper. While the experiment may involve one or multiple tasks, we model the effect of one of the tasks, which is considered the task of interest. In this paper, we propose a framework for modeling the functional connectivity evoked by the single task of interest, referred to as task-evoked functional connectivity, while ignoring the effect of the nuisance tasks.


Dependencies between activation in any two regions of the brain are referred to as {\it functional connectivity} \citep[FC,][]{cribben2016functional}. These dependencies have been widely studied, and several approaches have been proposed for the estimation of FC, especially resting-state FC (i.e., the FC derived from resting-state fMRI data). Task-evoked FC is fundamentally different from resting-state FC. \cite{lynch2018task} show that existing estimates of task-evoked FC cannot explain differences between the resting-state FC and the task-evoked FC. While the brain FC in an individual subject is often of interest (referred to as subject-level FC), population-level estimates of FC have been proposed in the literature and used for obtaining more robust subject-level FC estimates \citep[e.g.,][]{bowman2008bayesian, mejia2018improved}. To overcome the limitations of existing task-evoked FC estimation approaches when investigating population-level FC,  we propose a novel definition of \textit{population-level task-evoked FC} (ptFC). This definition is based on a correlation of subject-specific effects that capture task-evoked neural activity. Importantly, our proposed ptFC framework takes into account the complex biological processes of the brain response to task stimuli that many existing estimators ignore. 

Under the model assumption of the Bayesian hierarchical framework proposed by \cite{bowman2008bayesian}, our proposed ptFC is equal to their ``task-related inter-region functional connectivity." However, our proposed ptFC and our proposed approach to estimating ptFC do not depend on the referred Bayesian hierarchical framework and work for more general structures (see Section \ref{section: Definition of ptFC} for details). Furthermore, with appropriate estimation adjustments, our proposed framework has the potential to be an extension of many existing models for fMRI, particularly the Bayesian models that include grouped random terms representing signals. \cite{bowman2014brain} and \cite{zhang2015bayesian} provide thorough reviews of existing Bayesian models for fMRI data. Exploration of applying our framework to the existing models is left for future research.

\subsection{Task-fMRI Signal Types}


We are interested in modeling neural activity and BOLD signals of a set of \textit{subjects} $\omega$ from a population of interest. At each time point $t$, we denote by $Y_k(\omega;t)$ the BOLD signal value at the $k^{th}$ node of the subject $\omega$'s brain. The word ``node" refers to either a voxel or a \textit{region of interest} (ROI). Aggregated BOLD signals at a macro-area level are often of interest. That is, for each subject, BOLD signals are spatially averaged within each pre-selected ROI, and the resulting ROI-specific signals are analyzed. The BOLD signals for subject $\omega$ are represented by a vector-valued function $\{\pmb{Y}(\omega;t)=\left(Y_1(\omega;t), \cdots, Y_K(\omega;t)\right)^T \vert \, t\in \mathcal{T}\}$ on $\mathcal{T}$,
where $\mathcal{T}$ is the collection of time indices and $K$ is the number of nodes. Furthermore, we assume that $\mathcal{T}$ is a compact subset of $\mathbb{R}$. 

In this paper, we model the following three types of signals in task-fMRI studies:
\begin{enumerate}[label=\roman*]
    \item \textit{Stimulus signals}, denoted by $N(t)$, representing the experimental designs of tasks. Specifically, $N(t)=1$ when the stimulus of interest is present, and $N(t)=0$ when the stimulus is absent (see Eq.~\eqref{eq: stimulus signal for the HCP data} for an example). Given a \textit{hemodynamic response function} \citep[HRF,][]{lindquist2008statistical}, a stimulus signal $N(t)$ determines the design matrix of a \textit{general linear model} \citep[GLM,][]{friston1995analysis}. Further details are provided in Section \ref{section: Models for Signals}.
    \item \textit{Task-evoked neural activity signals} at the $k^{th}$ node, denoted by $\Phi_k[\omega;\, N(t)]$, stemming solely from the task stimulus $N(t)$, where the ``stimulus-to-activity" maps $\Phi_k[\omega; \, \bullet]: \, N(t)\mapsto \Phi_k[\omega; \, N(t)]$ depend on subjects $\omega$. The map $\Phi_k[\omega;\bullet]$ characterizes how neurons at the $k^{th}$ node of the subject $\omega$ react to stimulus $N(t)$.
    \item Observed BOLD signals $Y_k(\omega;t)=\Psi_k\left\{\omega;\, \Phi_k\left[\omega; \, N(t)\right]\right\}$ that are associated with the task-evoked neural activity $\Phi_k[\omega;\, N(t)]$, where the ``activity-to-BOLD" maps $\Psi_k\{\omega; \bullet\}:\,\Phi_k[\omega; N(t)]\mapsto Y_k(\omega;t)$ are $\omega$-dependent. The map $\Psi_k\{\omega;\bullet\}$ describes how neural activity $\Phi_k[\omega;N(t)]$ induces BOLD signal $Y_k(\omega;t)$. A comprehensive discussion on the ``activity-to-BOLD" mechanism can be found in Chapter 3 of \cite{ashby2019statistical}.
\end{enumerate}
The relationship between the three types of signals is illustrated in Figure \ref{fig: three kinds of signals}. We are interested in modeling $\Phi_k[\omega; N(t)]$. However, only BOLD signals $Y_k(\omega;t)$ are observable. The goal of many fMRI studies, including our work, is to recover  $\Phi_k[\omega;\bullet]$ by analyzing $Y_k(\omega;t)$. In Section \ref{section: A New Definition of FC}, we provide nonparametric models for $\Phi_k[\omega;\bullet]$ and $\Psi_k\{\omega;\bullet\}$. 

\begin{figure}[ht]
    \centering
    \begin{tikzpicture}[main/.style = {draw, rectangle}, node distance = 8.1cm, auto] 
\node[main] (1) {$N(t)$}; 
\node[main] (2) [right of=1] {$\Phi_k[\omega; \, N(t)]$};
\node[main] (3) [right of=2] {$Y_k(\omega;\, t)$};

\draw[->] (1) --node{``stimulus-to-activity" map $\Phi_k[\omega;\,\bullet]$} (2);
\draw[->] (2) --node{``activity-to-BOLD" map $\Psi_k\{\omega;\, \bullet\}$} (3);
\end{tikzpicture}
    \caption{Three types of signals at the $k^{th}$ node of subject $\omega$: stimulus signal $N(t)$, task-evoked neural activity signal $\Phi_k[\omega; N(t)]$, and observed BOLD signal $Y_k(\omega;t)=\Psi_k\left\{\omega; \Phi_k\left[\omega; N(t)\right]\right\}$.}\label{fig: three kinds of signals}
\end{figure}


Theoretically, time $t$ is a continuous variable, and $\mathcal{T}=[0,t^*]$, where $t^*<\infty$ denotes the end of the experiment of interest. In applications, we obtain data only at discrete and finite time points in $\mathcal{T}=\{\tau \Delta\}_{\tau=0}^T$, where $\Delta$ is the predetermined \textit{repetition time} (TR) and $T$ indicates that BOLD signals are observed at $T+1$ time points. A BOLD signal at the $k^{th}$ node of subject $\omega$ in task-fMRI consists of three components: 
\begin{enumerate}
    \item $P_k(\omega;t)$ denotes the component that is evoked solely by the experimental task $N(t)$ of interest.
    \item $Q_k(\omega;t)$ denotes the component that stems from spontaneous brain activity, e.g., the activity coordinating respiration and heartbeat, and the neural activity responding to stimuli that we are not interested in.
    \item Random error $\epsilon_k(\omega;t)$.
\end{enumerate}
We assume the components have an additive structure. Observed BOLD signals $Y_k(\omega;t)$ are of the following form.
\begin{align}\label{eqn: BOLD signal decomposition}
    Y_k(\omega;t)=P_k(\omega;t) + Q_k(\omega;t) + \epsilon_k(\omega;t), \ \ k=1,2,\cdots,K,\ \ t\in\mathcal{T},
\end{align}
where $P_k(\omega;t)$ are called \textit{task-evoked terms}. $P_k(\omega;t)$ are of primary interest and identifiable under some probabilistic conditions. The proof of the identifiability of $P_k(\omega;t)$ is provided in Appendix C using a theorem proposed by \cite{tong1991indeterminacy}. The model in Eq.~\eqref{eqn: BOLD signal decomposition} is equivalent to many existing approaches for modeling BOLD signals in task-fMRI \citep{bowman2008bayesian, joel2011relationship, zhang2013semi, warnick2018bayesian}. The  relationship between Eq.~\eqref{eqn: BOLD signal decomposition} and these approaches is discussed in Section \ref{section: Definition of ptFC} and Appendix A. Similar to these existing methods, we assume no interaction between $P_k(\omega;t)$ and $Q_k(\omega;t)$ in Eq.~\eqref{eqn: BOLD signal decomposition}. The inclusion of an interaction between these terms is discussed in Appendix H.


In statistical analysis of task-fMRI data $\{\pmb{Y}(\omega;t)\}_{t\in\mathcal{T}}$ in Eq.~\eqref{eqn: BOLD signal decomposition}, two topics are primarily of interest: (i) identification of nodes presenting task-evoked neural activity, i.e., the indices $k$ such that $P_k(\omega;t)\ne 0$, and (ii) detection of task-evoked associations between brain nodes. GLMs are commonly implemented to detect task-evoked nodes \citep{lindquist2008statistical}. Each GLM is conducted at an individual node level and is not informative for investigating associations between nodes. FC characterizes the associations between nodes captured by BOLD signals \citep{friston1993functional}. FC observed during a task experiment tends to be different from that observed in a resting-state experiment \citep{lowe2000correlations}. If the subject $\omega$ performs only one task $N(t)$ of interest, the difference between task-fMRI and resting-state fMRI is represented by the task-evoked terms $P_k(\omega;t)$. In this paper, we investigate the associations between task-evoked terms $P_1(\omega;t),\ldots, P_K(\omega;t)$ (as opposed to the associations between BOLD signals $Y_1(\omega;t),\ldots, Y_K(\omega;t)$) corresponding to FC stemming solely from the task $N(t)$ of interest.


A considerable amount of work has been done to define and estimate FC. \cite{friston1993functional} defines FC as the temporal Pearson correlation between a pair of BOLD signals across time. Since we focus on task-evoked functional connectivity in this paper, we may tentatively consider applying the temporal Pearson correlation approach to task-evoked terms, i.e., we may measure the task-evoked FC between nodes $k$ and $l$ by the following
\begin{align}\label{eq: raw Pearson correlation}
    \left\vert corr(\omega;P_k, P_l)\right\vert:=\left\vert\frac{\int_{\mathcal{T}} P_k^*(\omega;t)\times P_l^*(\omega;t)\mu(dt)}{\sqrt{ \int_{\mathcal{T}}\left\vert P_k^*(\omega;t)\right\vert^2 \mu(dt) \times \int_{\mathcal{T}}\left\vert P_l^*(\omega;t)\right\vert^2 \mu(dt)}}\right\vert, 
\end{align}
where $P^*_{k'}(\omega;t)=P_{k'}(\omega;t)-\frac{1}{\mu(\mathcal{T})}\int_{\mathcal{T}}P_{k'}(\omega;s)\mu(ds)$ for $k'\in\{k,l\}$; if $\mathcal{T}=[0,T^*]$, then $\mu(dt)=dt$; if $\mathcal{T}=\{\tau \Delta\}_{\tau=0}^T$, then $\mu(dt)$ is the counting measure $\sum_{\tau\in\mathbb{Z}}\delta_{\tau \Delta}(dt)$, where $\delta_{\tau \Delta}$ is the point mass at $\tau \Delta$. There are many other approaches to defining and estimating FC, e.g., \textit{coherence analysis} \citep{muller2001multivariate} and \textit{beta-series regression} \citep{rissman2004measuring} are widely used in FC studies. These approaches, especially that in Eq.~\eqref{eq: raw Pearson correlation}, have several limitations. In Section \ref{section: advantages of the task-fc def}, we discuss the limitations of the approaches referred to above compared to our proposed ptFC.

\subsection{Human Connectome Project Data and Paper Organization}\label{section: Human Connectome Project Data and Paper Organization}


We apply our proposed framework to a study investigating the human motor system. We use a cohort of subjects from a task-evoked fMRI study publicly available at the Human Connectome Project (HCP). The \textit{block-design} motor task used in this study is adapted from experiments by \cite{buckner2011organization} and \cite{yeo2011organization}, while details on the HCP implementation are given by \cite{barch2013function}. During the experiment, the subjects are asked to perform five tasks when presented with a cue: tap left and right fingers, squeeze left and right toes, and move their tongue. BOLD signals $Y_k(\omega;t)$ are collected from 308 subjects, and each BOLD signal is obtained with TR $\Delta=0.72$ (seconds). Each experiment lasts for about $204$ seconds, i.e., $T=283$. We focus on the brain functional connectivity evoked by the task of \textit{squeezing right toes}. The onsets of these task blocks vary across subjects. Nevertheless, the corresponding onsets of any two subjects differ in less than 0.1 seconds ($\le\Delta$). Therefore, we assume that all subjects share the same stimulus signal 
\begin{align}\label{eq: stimulus signal for the HCP data}
    N(t)=\mathbf{1}_{[86.5, 98.5)}(t)+\mathbf{1}_{[162, 174)}(t).
\end{align} 

This paper is organized as follows. Section \ref{section: A New Definition of FC} proposes models for task-evoked neural activity and BOLD signals, respectively; based on these models, we propose a rigorous and interpretable definition of the ptFC; we particularly explain the relationship between our ptFC and the ``inter-region task-related functional connectivity" proposed in \cite{bowman2008bayesian}. In Section \ref{section: Periodic Extension and Distributional Assumptions}, we introduce the probabilistic assumptions for estimating the ptFC. Section \ref{section: Estimation} presents an algorithm for estimating ptFC. In Section \ref{section: simulations}, we compare the performance of our proposed algorithm with existing approaches using simulations. In Section \ref{section: applications}, we apply the proposed approach to estimate the ptFC during a motor task using the publicly available HCP data set. Section \ref{section: Conclusions and Further Discussions} concludes the paper.


\section{Population-level Task-evoked Funtional Connectivity (ptFC)}\label{section: A New Definition of FC}

We first propose the models for neural activity and BOLD signals at individual nodes. Using these models, we provide the definition of ptFC. Then, we explain the relationship between our proposed ptFC and the Bayesian hierarchical framework in \cite{bowman2008bayesian}. Lastly, we list the advantages of ptFC compared to existing approaches. 

Hereafter, $\Omega$ denotes the collection of subjects of interest. $\Omega$ is discrete and finite. Let $\mathbb{P}$ define a study-dependent probability measure on $\Omega$. Then, $(\Omega, 2^\Omega, \mathbb{P})$ is a probability space, where $2^\Omega$ is the power set of $\Omega$. Any function defined on $\Omega$ within the context of the probability space $(\Omega, 2^\Omega, \mathbb{P})$ is a random variable (see Section 1.5 of \cite{klenke2013probability} for the definition of random variables). $\mathbb{E}$ denotes the expectation with respect to $\mathbb{P}$. For example, in the HCP experiment described in Section \ref{section: Human Connectome Project Data and Paper Organization}, $\Omega$ represents the population of all healthy adults. FMRI data are collected in a sample of size $n$ drawn from $\Omega$ according to the underlying distribution $\mathbb{P}$. For example, the HCP data set described in Section \ref{section: Human Connectome Project Data and Paper Organization} has a sample size of $n=308$. We estimate $\mathbb{P}$ by the empirical distribution $\frac{1}{n}\sum_{\omega}\delta_\omega$. The expectation and correlation with respect to this empirical distribution are the sample average and sample correlation, respectively.

\subsection{Models for Signals}\label{section: Models for Signals}

We use $h_k(t)$ to denote the HRF at the $k^{th}$ node, which corresponds to the task $N(t)$ and is common to all subjects $\omega\in\Omega$. The HRF $h_k(t)$ is solely determined by the local vascular system at the $k^{th}$ node, irrelevant to the neural activation \citep[][Chapter 3]{ashby2019statistical}. For each $\omega$, we model the task-evoked terms $P_k(\omega;t)$ in Eq.~\eqref{eqn: BOLD signal decomposition} by 
\begin{align}\label{eq: convolution model}
P_k(\omega;t)= \beta_k(\omega)\times (N*h_k)\left(t-t_{0,k}\right),\ \ t\in\mathcal{T},\ \ \omega\in\Omega,\ \ k=1,2,\cdots, K. 
\end{align}
where $\beta_k(\omega)$ is a coefficient representing the subject-specific effect, $t_{0,k}$ is the latency shared by all subjects as the reaction of the $k^{th}$ node to $N(t)$ at time $t$ is not instantaneous, and $*$ denotes the convolution operation. 

Since $\beta_k:\omega\mapsto\beta_k(\omega)$ is a random variable defined on the probability space $(\Omega, 2^\Omega, \mathbb{P})$, the model in Eq.~\eqref{eq: convolution model} is a mixed linear model \citep[MLM,][]{mclean1991unified} with responses $\{P_k(\omega;t)\}_{t\in\mathcal{T}}$, independent variables $\{(N*h_k)(t-t_{0,k})\}_{t\in\mathcal{T}}$, and random effect $\beta_k(\omega)$; the error term of the MLM is absorbed by the $\{\epsilon_k(\omega;t)\}_{t\in\mathcal{T}}$ in Eq.~\eqref{eqn: BOLD signal decomposition}. In this paper, we employ the terms ``subject-specific effect" and ``random effect" interchangeably when referring to $\beta_k(\omega)$, based on the following considerations: (i) $\beta_k(\omega)$ is subject-specific due to its dependence on the subject $\omega$; (ii) $\beta_k:\omega\mapsto\beta_k(\omega)$ is a random variable as $\beta_k$ is a function of $\omega$, which is referred to by \cite{mclean1991unified} as a ``random effect" for the MLM presented in Eq.~\eqref{eq: convolution model} \citep[see Section 2 of][]{mclean1991unified}. It is important to note that we apply the MLM only to the task term $P_k(\omega;t)$ of the BOLD signal in Eq.~\eqref{eqn: BOLD signal decomposition}. The combination of the additive model in Eq.~\eqref{eqn: BOLD signal decomposition} and the MLM in Eq.~\eqref{eq: convolution model} is a ``semi-MLM." In Section \ref{section: Estimation}, we present the estimation strategy for the terms in this semi-MLM.

The task-evoked terms $ P_k(\omega;t)$ in Eq.~\eqref{eq: convolution model}
model the neural activity evoked by task $N(t)$, and the subject-specific $\beta_{k}(\omega)$ measures the magnitude of this component. Visualizations of Eq.~\eqref{eq: convolution model} are presented in Supplementary Figure 6. Theoretically, latency depends on subject $\omega$. However, in most studies, the latency is much shorter than the corresponding TR $\Delta$, and the difference between the latency times of any two subjects is negligible. Therefore, we model the latency as $t_{0,k}$, common to all subjects $\omega\in\Omega$. With the model in Eq.~\eqref{eq: convolution model}, we represent the model in Eq.~\eqref{eqn: BOLD signal decomposition} for BOLD signals as follows.
\begin{align}\label{eq: BOLD time-course model}
    Y_k(\omega;t)=\beta_k(\omega)\times \left(N*h_k\right)\left(t-t_{0,k}\right)+R_k(\omega;t),\ \ t\in\mathcal{T},\ \ k=1, \cdots, K,\ \ \omega\in\Omega,
\end{align}
where $R_k(\omega; t)=Q_k(\omega;t)+\epsilon_k(\omega;t)$ are referred to as \textit{reference terms} for succinctness.

Task-evoked terms $P_k(\omega;t)=\beta_k(\omega)\times (N*h_k)(t-t_{0,k})= \left[\{\beta_k(\omega)\times N(\cdot-t_{0,k})\}*h_k \right](t)$ correspond to neural activity evoked by $N(t)$, where $h_k$ represent metabolism and vasculature and do not characterize neural activity. Therefore, we model neural activity signals $\Phi_k[\omega; N(t)]$ responding to $N(t)$ as $\beta_k(\omega)\times N(t-t_{0,k})$. Using the model in Eq.~\eqref{eq: BOLD time-course model}, the ``stimulus-to-activity" map $\Phi_k[\omega; \bullet]$ and ``activity-to-BOLD" map $\Psi_k\{\omega; \bullet\}$ in Section \ref{section: Introduction} are
\begin{align}\label{eq: stimulus-BOLD maps}
\begin{aligned}
& \Phi_k\left[\omega;\, \bullet\right]: N(t)\mapsto \beta_k(\omega)\times N\left(t-t_{0,k}\right), \\
& \Psi_k\left\{\omega;\, \bullet \right\}: \Phi_k\left[\omega;\, N(t)\right]\mapsto \left\{\Phi_k[\omega;\, N(\cdot)]*h_k \right\} (t) +Q_k(\omega;\, t)+\epsilon_k(\omega;t)=Y_k(\omega;\, t).
\end{aligned}
\end{align}

\subsection{Definition of ptFC}\label{section: Definition of ptFC}

In this subsection, we provide the definition of the ptFC and its relationship with the ``inter-region task-related functional connectivity" framework proposed by \cite{bowman2008bayesian}.

FC is anticipated to characterize the mechanism of neural activity rather than metabolism or vasculature. Therefore, task-evoked FC should be defined using the task-evoked neural activity signals $\Phi_k[\omega; N(t)]$. In the model of $\Phi_k[\omega; N(t)]$ presented by Eq.~\eqref{eq: stimulus-BOLD maps}, tasks $N(t)$ are fully determined by experimental designs. Additionally, latency $t_{0,k}$ can be viewed as a parameter of HRF $h_k$ since task-evoked terms can be expressed as $\{[\beta_k(\omega)\times N]*[h_k(\cdot-t_{0,k})]\}(t)$. Recall that the $h_k$ only depends on metabolism and vasculature, and it does not characterize neural activity. Therefore, task-evoked FC is expected to be determined by $\beta_k(\omega)$. Since $\beta_k(\omega)$, for $k=1,\cdots,K$, measure the magnitude of the neural activity evoked by task $N(t)$, nodes $k$ and $l$ are functionally connected at the population level during the task if one of the following scenarios holds: (i) for subjects with strong reaction to $N(t)$ at their $k^{th}$ nodes, their $l^{th}$ nodes' reaction to $N(t)$ is strong as well and vice versa; (ii) for subjects with strong reaction to $N(t)$ at their $k^{th}$ nodes, their $l^{th}$ nodes' reaction to $N(t)$ is weak and vice versa, i.e., either positively or negatively correlated. Therefore, we make the following assumptions on the distribution of $\beta_k(\omega)$ across $(\Omega,\mathbb{P})$:
\begin{itemize}
    \item If there exists a functional connection evoked by $N(t)$ between nodes $k$ and $l$, the corresponding $\beta_k(\omega)$ and $\beta_l(\omega)$ are approximately linearly associated.
    \item Each variance $\mathbb{V}\beta_k=\mathbb{E}(\beta_k^2)-\left(\mathbb{E}\beta_k\right)^2>0$, i.e., the random variable $\beta_k(\omega)$ is not deterministic, where $\mathbb{E}\beta_k^\nu \overset{\operatorname{def}}{=} \int_{\Omega}\left\{\beta_k(\omega)\right\}^\nu \,\mathbb{P}(d\omega)$ for $\nu=1,2$.
\end{itemize}
Since the correlation  $corr\left(\beta_k, \beta_l\right) \overset{\operatorname{def}}{=}\frac{\mathbb{E}\left(\beta_k\beta_l\right)-\left(\mathbb{E}\beta_k\right)\left(\mathbb{E}\beta_l\right)}{\sqrt{\mathbb{V}\beta_k\times\mathbb{V}\beta_l}}$ across the population $\Omega$ measures the linear correlation between $\beta_k(\omega)$ and $\beta_l(\omega)$, we define ptFCs as follows:
\begin{definition}\label{def: group-level task-evoked FC and EC}
Suppose the task-evoked BOLD signals $\{Y_k(\omega; t)\}_{k=1}^K$ are of the form in Eq.~\eqref{eq: BOLD time-course model}. The population-level task-evoked functional connectivity (ptFC) between the $k^{th}$ and $l^{th}$ nodes is defined as $\vert corr(\beta_k,\beta_l)\vert$.
\end{definition}
The Pearson correlation approach in Eq.~\eqref{eq: raw Pearson correlation} defines the FC between two brain nodes through the correlation across the time index space $\mathcal{T}$. In contrast, the ptFC takes the form of a correlation defined across the population space $\Omega$. While the $\mathcal{T}$-correlation and $\Omega$-correlation forms differ mathematically, they model the same brain activity mechanism and, hence, these correlations are comparable.

An advantage of ptFC is its scale-invariance. Since $\beta_k(\omega) (N*h_k)=\frac{\beta_k(\omega)}{c_1\times c_2} [(c_2 N)*(c_2 h_k)]$, the scale of $\beta_k(\omega)$ changes if the scale of $h_k$ or $N(t)$ changes. But $\vert corr(\beta_k,\beta_l)\vert$ is invariant to the transform $\beta_{k'}(\omega)\mapsto c\beta_{k'}(\omega)$, for $k'\in\{k,l\}$ and any $c\ne0$. Furthermore, using this scale-invariance, we show in Appendix C that ptFC is identifiable under some probabilistic conditions. Additionally, $\vert corr(\beta_k,\beta_l)\vert$ is invariant to the transform $\beta_{k'}(\omega)\mapsto \beta_{k'}(\omega)+c$ for $k'\in\{k,l\}$ and any $c\in\mathbb{R}$, hence we may assume $\mathbb{E}\beta_{k'}=0$ for $k'\in\{k,l\}$ in Section \ref{section: Estimation}. In addition, we have the following interpretations:
\begin{itemize}
    \item \textbf{Interpretation of $\beta_k(\omega)$}: Each signal $\notag\Phi_k\left[\omega; N(t)\right] = \beta_k(\omega)\times N\left(t-t_{0,k}\right)$ describes the neural activity evoked by task $N(t)$, excluding effects of local vasculature or metabolism. If the $k^{th}$ node of $\omega$ does not react to task $N(t)$, then $\beta_k(\omega)=0$. For given $N(t)$ and $h_k$, the magnitude of $\vert\beta_k(\omega)\vert$ indicates the strength of the reaction in the $k^{th}$ node of $\omega$ to $N(t)$.
    
    \item \textbf{Interpretation of the ptFC $\vert corr(\beta_k, \beta_l)\vert$}: The pair $\{\beta_k(\omega), \beta_l(\omega)\}_{\omega\in\Omega}$ quantifies the magnitude of neural activity in response to $N(t)$ for nodes $k$ and $l$. These two nodes have functional connectivity evoked by $N(t)$ if $\{\beta_k(\omega)\}_{\omega\in\Omega}$ and $\{\beta_l(\omega)\}_{\omega\in\Omega}$ are linearly associated across $(\Omega,\mathbb{P})$. More explicitly, a perfectly linear relationship between $\{\beta_k(\omega)\}_{\omega\in\Omega}$ and $\{\beta_l(\omega)\}_{\omega\in\Omega}$ implies strongest functional connectivity between nodes $k$ and $l$ evoked by $N(t)$. Finally, $\vert corr(\beta_k, \beta_l)\vert$ quantifies the strength of $N(t)$-induced connectivity.
\end{itemize}

\cite{bowman2008bayesian} proposed a spatial Bayesian hierarchical model (SBHM) to characterize the task-evoked (referred to as ``task-related" therein) functional connectivity at a population level. Here, we explore the relationship between our proposed ptFC and the SBHM in \cite{bowman2008bayesian}. The model for BOLD signals in the SBHM is the same as our model in Eq.~\eqref{eq: BOLD time-course model}, except the SBHM assumes $t_{0,k}=0$ and $Q_k(\omega;t)=H(t)\cdot\eta_k(\omega)$ for all $k=1,2,\ldots,K$, where $H(t)$ contains the covariates that are not of interest. In addition, each $k$ in the SBHM corresponds to a region of interest instead of a single voxel. For this BOLD signal model, \cite{bowman2008bayesian} assumed the following parametric and hierarchical structures
\begin{align}\label{eq: hierarchical structure}
    \begin{aligned}
    & \beta_k(\omega)\,\vert\, \mu_k,\alpha_{k}(\omega), \sigma_{k}^2 \sim \operatorname{Normal}(\mu_k+\alpha_k(\omega), \sigma_{k}^2), \\
    & cov(\beta_k,\beta_l\,\vert\, \mu_k,\alpha_{k}(\omega), \sigma_{k}^2, \mu_l,\alpha_{l}(\omega), \sigma_{l}^2)=0,\ \ \ \text{ if }k\ne l,\\
    & \mu_k\,\vert\,\lambda_{k}^2\sim \operatorname{Normal}(\mu_{0,k},\lambda_k^2), \ \ \ \sigma_{k}^{-2}\sim \operatorname{Gamma}(a_0,b_0), \ \ \ \lambda_k^{-2}\sim \operatorname{Gamma}(c_0,d_0),\\
    & (\alpha_1(\omega),\alpha_2(\omega),\ldots, \alpha_K(\omega))^T \,\vert\,\boldsymbol{\Gamma} \sim  \operatorname{Normal}(\boldsymbol{0}, \boldsymbol{\Gamma}),\ \ \ \boldsymbol{\Gamma}^{-1}\sim \operatorname{Wishart}\{(h_0\boldsymbol{H}_0)^{-1}, h_0\},
    \end{aligned}
\end{align}
where $\boldsymbol{\Gamma}=(\gamma_{kl})_{1\le k,l\le K}$ is a $K$-by-$K$ positive definite matrix. Finally, functional connectivity between regions $k$ and $l$ was defined as the following ratio
\begin{align}\label{eq: ptFC in Bowman 2008}
    B_{k,l} \overset{\operatorname{def}}{=} \frac{\gamma_{kl}}{\sqrt{(\gamma_{kk}+\sigma^2_k)\cdot(\gamma_{ll}+\sigma^2_l)}},\ \ \ \text{ for }k\ne l.
\end{align}
Since the ratio in Eq.~\eqref{eq: ptFC in Bowman 2008} does not depend on subjects $\omega$, the inter-region functional connectivity $B_{k,l}$ is a population-level quantity. If we assume the hierarchical structure in Eq.~\eqref{eq: hierarchical structure}, while setting $\mu_k=0$ for all $k=1,\ldots,K$, then $\vert corr(\beta_k,\beta_l)\vert=\vert B_{k,l}\vert$ when $k\ne l$. That is, up to at most a negative sign, our proposed ptFC is equal to the inter-region functional connectivity in Eq.~\eqref{eq: ptFC in Bowman 2008} if we assume the hierarchical structures in Eq.~\eqref{eq: hierarchical structure}. However, neither does the definition nor the estimation of our proposed ptFC depend on the parametric and hierarchical structures in Eq.~\eqref{eq: hierarchical structure}. Hence, our proposed ptFC is a generalization of the framework proposed by \cite{bowman2008bayesian}. 

Recently, there is an emerging consensus in the literature suggesting that brain networks undergo temporal fluctuations corresponding to experimental tasks. \textit{Dynamic connectivity} (DC) is a collection of approaches investigating these fluctuations. Although ptFC in Definition \ref{def: group-level task-evoked FC and EC} does not change over time, the ptFC framework can be directly incorporated to obtain \textit{sliding-window} based DC estimates \citep{hutchison2013dynamic}. Specifically, for a preselected window of time, BOLD signals are extracted for that time interval, and ptFC is estimated using the algorithm proposed in this paper. Next, the window is shifted in time for a certain number of time points, and ptFC is estimated for BOLD signals within the time interval of the same length as in the first step for the shifted time interval. As a result, we obtain a sequence of ptFCs illustrating the dynamic structure of FC.

\subsection{Advantages of the ptFC Compared to Existing Approaches}\label{section: advantages of the task-fc def}

In this subsection, we present the advantages of our proposed ptFC compared to some existing approaches. We first discuss the limitations of the Pearson correlation approach in Eq.~\eqref{eq: raw Pearson correlation} using the model in Eq.~\eqref{eq: convolution model}. Plugging Eq.~\eqref{eq: convolution model} into Eq.~\eqref{eq: raw Pearson correlation}, we obtain the following association between $P_k(\omega;t)$ and $P_l(\omega;t)$.
\begin{align}\label{eq: Pearson correlation with the convolution model}
    \left\vert corr(\omega;P_k, P_l)\right\vert=\left\vert \, \int_{\mathcal{T}} \phi_k(t) \times \phi_l(t)\mu(dt)\Bigg/\sqrt{ \int_{\mathcal{T}}\left\vert \phi_k(t)\right\vert^2 \mu(dt)\times \int_{\mathcal{T}}\left\vert \phi_l(t)\right\vert^2 \mu(dt)} \, \right\vert,
\end{align}
where $\phi_{k'}(t)=N*h_{k'}(t-t_{0,k'})-\frac{1}{\mu(\mathcal{T})}\int_{\mathcal{T}}N*h_{k'}(s-t_{0,k'})\mu(ds)$ for $k'\in\{k,l\}$. Eq.~\eqref{eq: Pearson correlation with the convolution model} reveals the following limitations of the Pearson correlation approach.
\begin{itemize}
    \item \noindent\textbf{Only nuisance parameters:} Based on the reasoning in Section \ref{section: Definition of ptFC}, task-evoked neural activity is modeled by $\beta_k(\omega)$. Hence, it is counterintuitive that the quantity in Eq.~\eqref{eq: Pearson correlation with the convolution model} does not depend on $\beta_k(\omega)$ and depends only on the nuisance parameters $N(t)$ and $h_k(t)$ when considering neural activity. For example, if the $k^{th}$ node does not react to $N(t)$, then the task-evoked term $\beta_k(\omega)\times (N*h_k)(t-t_{0,k})$ is expected to be zero, i.e., $\beta_k(\omega)=0$, and there should be no task-evoked interaction between the $k^{th}$ and other nodes. However, (\ref{eq: Pearson correlation with the convolution model}) can still be very large as the non-reaction information presented by $\beta_k(\omega)= 0$ vanishes. In contrast with Eq.~\eqref{eq: Pearson correlation with the convolution model}, Definition \ref{def: group-level task-evoked FC and EC} is based on $\{(\beta_k(\omega), \beta_l(\omega))\vert\omega\in\Omega\}$.
    
    \item \noindent\textbf{Variation in HRFs:} HRFs can heavily vary across brain nodes \citep{miezin2000characterizing}. Since task-evoked FC is not expected to depend on HRFs, variation of HRFs in different brain regions should not influence task-evoked FC. However, the measurement in Eq.~\eqref{eq: Pearson correlation with the convolution model} can be small if $h_k\ne h_l$ as illustrated in Supplementary Figure 6 (c). Given that the correlation form $ corr(\beta_k,\beta_l)$ does not depend on HRFs $h_k$ and $h_l$, ptFCs $\vert corr(\beta_k,\beta_l) \vert$ are invariant to the variation in HRFs across brain nodes.  
    
    \item \noindent\textbf{Variation in latency:} $t_{0,k}$ may vary across nodes, e.g., see \cite{miezin2000characterizing} for an investigation of the left and right visual and motor cortices and the corresponding difference between response onsets. If $t_{0,k}\ne t_{0,l}$, for example $t_{0,k}< t_{0,l}$, then it is possible that node $k$ reacts to task $N(t)$ first, and then the neural activity at node $k$ causes that at node $l$. Because of this potential causality represented by $t_{0,k}< t_{0,l}$, it is natural to expect that nodes $k$ and $l$ are likely functionally connected. However, the measurement in Eq.~\eqref{eq: Pearson correlation with the convolution model} can be very small if $t_{0,k} \ne t_{0,l}$ and may not reveal the true interaction between the neural activity of two nodes. An example of this issue is illustrated in Appendix F. Since our proposed ptFCs $\vert corr(\beta_k,\beta_l)\vert$ do not involve the latency $t_{0,k}$ and $t_{0,l}$, the variation in latency does not influence our proposed ptFCs.
\end{itemize}

Coherence analysis for FC \citep{muller2001multivariate} is not influenced by any of the issues discussed above. In the coherence analysis approach, FC between BOLD signals $Y_k(\omega;t)$ and $Y_l(\omega;t)$ is measured by the \textit{coherence} evaluating the extent of the \textit{linear time-invariant relationship} between these two signals via Fourier frequencies. However, there is no guarantee that two BOLD signals are linear time-invariant if the corresponding two nodes are functionally connected. Last but not least, the Pearson correlation and coherence analysis approaches are designed to measure FC evoked by all stimuli --- both the tasks of interest and nuisance stimuli. Hence, they are not interpretable from the task-evoked FC viewpoint. On the contrary, beta-series regression \citep{rissman2004measuring} and ptFCs are designed to measure the FC  evoked by an experiment's specific task of interest. Additionally, our simulation studies in Section \ref{section: simulations} (see Table \ref{table: identification rate}) show that beta-series regression performs worse than our proposed ptFC approach in many cases.

\section{Periodic Extension and Distributional Assumptions}\label{section: Periodic Extension and Distributional Assumptions}

In order to develop the estimator of the ptFC defined in Section \ref{section: Definition of ptFC}, it is essential to make certain assumptions regarding the terms in Eq.~\eqref{eqn: BOLD signal decomposition}. In this section, we discuss general assumptions and potential verification strategies for these assumptions. Since BOLD signals are observed at discrete and finite time points, we assume throughout this paper $\mathcal{T}=\{\tau\Delta\}_{\tau=0}^T$ and $t=\tau\Delta$ for $\tau\in\{0,1,\cdots,T\}$. For any function $f:\mathcal{T} \rightarrow\R$, we extend $f$ as follows to be a periodic function on $\{\tau'\Delta\}_{\tau'\in\mathbb{Z}}$.
\begin{align}\label{eq: periodic extension}
    f(\tau'\Delta)=f(\tau\Delta),\ \ \ \tau'\equiv\tau\mbox{ (mod $T+1$)}\ \ \mbox{for }\tau=0,1,\cdots,T\mbox{ and all }\tau'\in\mathbb{Z}.
\end{align}
Hereafter, all functions on $\mathcal{T}$ are implicitly extended using Eq.~\eqref{eq: periodic extension} to be periodic functions on $\{\tau'\Delta \}_{\tau' \in \mathbb{Z}}$.

The identifiability of the task-evoked terms (see Appendix C) and our approach to estimating ptFC (see Section \ref{section: Estimation}) depend on two assumptions on the distribution of BOLD signals --- ``independence" and ``weak stationarity with zero mean." We discuss them in Sections \ref{section: Independence} and \ref{section: Weak Stationarity with Zero Mean}, respectively, and show the following: for commonly used implementations of our model, the ``independence" assumption can be checked, and the ``weak stationarity with zero mean" assumption is satisfied. 
    
\subsection{Independence}\label{section: Independence}

We assume the following assumption holds.
\begin{assumption}
The terms $\{P_k(\omega;t)\}_{t\in\mathcal{T}}$, $\{Q_k(\omega;t)\}_{t\in\mathcal{T}}$, and $\{\epsilon_k(\omega;t)\}_{t\in\mathcal{T}}$ in Eq.~\eqref{eqn: BOLD signal decomposition} are independent.
\end{assumption}
The assumption of independence between the two model terms and the random error $\{\epsilon_k(\omega;t)\}_{t\in\mathcal{T}}$ is based on the fact that $\{\epsilon_k(\omega;t)\}_{t\in\mathcal{T}}$ models the random additive noise due to measurement error and other random occurrences unrelated to the brain functional organization during the task and at rest. This assumption is common in many statistical models including functional principal component analysis \citep{di2009multilevel}, the hierarchical functional connectivity modeling \citep{bowman2008bayesian}, and others.  

Generally, testing for independence between the task term and rest terms is difficult due to the fact that most fMRI studies are either resting-state or task fMRI and we rarely obtain both resting-state and task fMRI data at a single experiment. However, for commonly used implementations of our model, the independence between $\{P_k(\omega;t)\}_{t\in\mathcal{T}}$ and $\{Q_k(\omega;t)\}_{t\in\mathcal{T}}$ can be tested using data. For example, suppose the term $Q(\omega;t)$ in Eq.~\eqref{eqn: BOLD signal decomposition} is of the following form
\begin{align}\label{eq: an example of Qk}
    Q_k(\omega;t)=\left\{\sum_{\gamma=1}^\Gamma \beta_{k,\gamma}(\omega)\times \tilde{N}_{\gamma}*\tilde{h}_{k,\gamma}(t)\right\},
\end{align}
where each $\tilde{N}_{\gamma}$ denotes either a stimulus that is not of interest or the stimulus from the spontaneous brain activity, and $\tilde{h}_{k,\gamma}$ denotes the HRF at the $k^{th}$ node corresponding to the task $\tilde{N}_{\gamma}$. The combination of Eq.~\eqref{eqn: BOLD signal decomposition} and the $Q(\omega;t)$ defined in Eq.~\eqref{eq: an example of Qk} is compatible with the model of BOLD signals implemented in \cite{bowman2008bayesian} (see Eq. (1) therein). In addition, this combination is the following MLM
\begin{align*}
    Y_k(\omega;t)=\beta_k(\omega)\times \left(N*h_k\right)\left(t-t_{0,k}\right)+\left\{\sum_{\gamma=1}^\Gamma \beta_{k,\gamma}(\omega)\times \tilde{N}_{\gamma}*\tilde{h}_{k,\gamma}(t)\right\} + \epsilon_k(\omega;t),\ \ t\in\mathcal{T},\ \ k=1, \cdots, K,\ \ \omega\in\Omega,
\end{align*}
which was also discussed in Section 8.4 of \cite{ashby2019statistical}. 
One may apply the estimation approach for MLM to estimate the random coefficients $\beta_k(\omega)$ and $\{\beta_{k,\gamma}(\omega)\}_{\gamma=1}^\Gamma$ for all $\omega\in\Omega$ \citep[e.g.,][]{mclean1991unified} and test the independence between $\beta_k(\omega)$ and $\{\beta_{k,\gamma}(\omega)\}_{\gamma=1}^\Gamma$. The independence between $\beta_k(\omega)$ and $\{\beta_{k,\gamma}(\omega)\}_{\gamma=1}^\Gamma$ indicates the independence between $P_k(\omega;t)=\beta_k(\omega)\times \left(N*h_k\right)\left(t-t_{0,k}\right)$ and $Q_k(\omega;t)$.

\subsection{Weak Stationarity with Zero Mean}\label{section: Weak Stationarity with Zero Mean}

We introduce the following concept: a stochastic process $\pmb{G}(\omega;t)=(G_1(\omega;t), \cdots, G_K(\omega;t))^T$ is called \textit{weakly stationary with mean zero} (WSMZ) if $\mathbb{E}G_k(t)=0$, for all $t\in\mathcal{T}$, and $\mathbb{E}\{G_k(t)G_l(t+s)\}$ depends only on $s$, rather than $t$, for all $k,l\in\{1,\cdots,K\}$. Suppose we artificially generate an auxiliary random variable $U:\Omega\rightarrow\mathcal{T}$ such that $U$ is uniformly distributed on $\mathcal{T}$. Since it is artificially generated, it is independent of all other random variables referred to in our model. Hereafter, we will be under the following assumption on the reference terms $\{R_k(\omega;t)\}_{t\in\mathcal{T}}$ (see Eq.~\eqref{eq: BOLD time-course model})
\begin{assumption}\label{assumption: WSMZ}
The stochastic process $\{R_k(\omega;t-U(\omega))\}_{t\in\mathcal{T}}$ is weakly stationary with mean zero.
\end{assumption}
Assumption \ref{assumption: WSMZ} does not require $\{R_k(\omega;t)\}_{t\in\mathcal{T}}$ itself to be WSMZ, and it is satisfied in most commonly used models. For example, Theorem B.3 in Appendix B shows that reference terms $\{R_k(\omega;t)=Q_k(\omega;t)+\epsilon_k(\omega;t)\}_{t\in\mathcal{T}}$ satisfy Assumption \ref{assumption: WSMZ} if $Q_k(\omega;t)$ are defined by Eq.~\eqref{eq: an example of Qk} and $\{\epsilon_k(\omega;t)\}_{t\in\mathcal{T}}$ are white noise.


\section{ptFC Estimation (ptFCE) Algorithm}\label{section: Estimation}

To estimate $\vert corr(\beta_k, \beta_l)\vert$, one needs the task-evoked terms $P_k(\omega;t)$. However, these terms are not observed in most applications. In Section \ref{section: The Estimation of Reference Signals}, we will propose an estimator (denoted as $\tilde{R}_k(\omega;t)$) for $R_k(\omega;t)$. As a result, we can obtain estimators for task-evoked terms as $\tilde{P}_k(\omega;t) \overset{\operatorname{def}}{=}Y_k(\omega;t)-\tilde{R}_k(\omega;t)$. Suppose $h_k$ and $t_{0,k}$ are known \citep[otherwise, they can be estimated using semi-parametric methods, e.g., see][]{zhang2013semi}, then a straightforward approach to estimate $\vert corr(\beta_k, \beta_l)\vert$ is the MLM-based approach as follows: we implement the following MLM
\begin{align}\label{eq: MLM-based approach}
    \tilde{P}_k(\omega;\tau\Delta)= (N*h_k)(\tau\Delta-t_{0,k}) \cdot \tilde{\beta}(\omega)+\tilde{\epsilon}(\omega; \tau\Delta),\ \ \ \tau=0,1,\ldots,T,
\end{align}
where $\{(N*h_k)(\tau\Delta-t_{0,k})\}_{\tau=0}^T$ is the design matrix (viewed as a $(T+1)$-by-1 matrix) relating $\{\tilde{P}_k(\omega;\tau\Delta)\}_{\tau=0}^T$ to the random effect $\tilde{\beta}(\omega)$, and $\{\tilde{\epsilon}_k(\omega;\tau\Delta)=\tilde{P}_k(\omega;\tau\Delta)-P_k(\omega;\tau\Delta)\}_{\tau=0}^T$ are unknown random errors. Then, $\vert corr(\beta_k, \beta_l)\vert$ is estimated by the absolute value of the sample correlation between $\tilde{\beta}_k(\omega)$ and $\tilde{\beta}_l(\omega)$ across all subjects $\omega$. While this is a simple approach for the estimation of ptFC, using simulation studies, we show that the MLM-based approach is not robust to the violation of assumptions in the model in Eq.~\eqref{eq: BOLD time-course model} and leads to worse performance compared to our proposed method. The comparisons between the MLM-based approach and our proposed method are presented in Section \ref{section: simulations}.

In this section, we propose the ptFCE algorithm based on the Fourier transform and the AMUSE algorithm \citep{tong1991indeterminacy}. First, we define the convolution as follows
\begin{align*}
    \left(N*h_k\right)\left(\tau\Delta\right) \overset{\operatorname{def}}{=} \frac{1}{T+1}\sum_{\tau'=0}^T N(\tau'\Delta)h_k\left((\tau-\tau')\Delta\right),\ \ \mbox{ for }\tau=0,1,\cdots,T.
\end{align*}
Then, we define the Fourier transform of $f$ as
\begin{align}\label{eq: def of the Fourier transform}
    \widehat{f}(\xi) \overset{\operatorname{def}}{=} \frac{1}{T+1}\sum_{\tau=0}^{T} f(\tau\Delta) e^{-2\pi i \xi (\tau\Delta)},\ \ \ \text{ for }\xi\in\mathbb{R}.
\end{align}
where $i=\sqrt{-1}$ is the imaginary unit. The $\widehat{f}(\xi)$ is a periodic function of $\xi$ with period $1/\Delta$. Throughout this paper, $\widehat{(\cdot)}$ denotes the Fourier transform. Additionally, we assume $\mathbb{E}\beta_k=\mathbb{E}R_k(t)=0$, for all $k$ and $t$, motivated by the centralization 
\begin{align*}
    Y_k(\omega;t)-\mathbb{E}Y_k(t)=\left\{\beta_k(\omega)-\mathbb{E}\beta_k\right\}(N*h_k)(t-t_{0,k}) + \left\{R_k(\omega;t)-\mathbb{E}R_k(t)\right\}.
\end{align*}
Using $\left\{\beta_k(\omega)-\mathbb{E}\beta_k\right\}$ and $\left\{R_k(\omega;t)-\mathbb{E}R_k(t)\right\}$ as $\beta_k(\omega)$ and $R_k(\omega;t)$, respectively, we model the demeaned signals $Y_k(\omega;t)-\mathbb{E}Y_k(t)$. Because of the invariance $corr(\beta_k, \beta_l)=corr((\beta_k-\mathbb{E}\beta_k), (\beta_l-\mathbb{E}\beta_l))$, the assumption $\mathbb{E}\beta_k=0$ does not prevent the detection of ptFC $\vert corr(\beta_k, \beta_l) \vert$.

Our proposed ptFC depends on neither latency $t_{0,k}$ nor reference terms $R_k(\omega;t)$. To remove the dependence of the signals on the latency $t_{0,k}$, we artificially generate an auxiliary random variable $U:\Omega\rightarrow\mathcal{T}$, which is uniformly distributed on $\mathcal{T}=\{\tau\Delta\}_{\tau=0}^T$. Specifically, applying periodic extension in Eq.~\eqref{eq: periodic extension} to $(N*h_k)$, one may verify that the two stochastic processes $(N*h_k)(t-t_{0,k}-U(\omega))$ and $(N*h_k)(t-U(\omega))$ are identically distributed. Then the distribution of $(N*h_k)(t-U(\omega))$, and hence the distribution of $(N*h_k)(t-t_{0,k}-U(\omega))$, does not depend on $t_{0,k}$. Therefore, we investigate the following time-shifted signals 
\begin{align*}
    Y_k\left(\omega;t-U(\omega)\right)=\beta_k(\omega) (N*h_k)\left(t-t_{0,k}-U(\omega)\right)+R_k\left(\omega;t-U(\omega)\right),\ \ \ \text{ for }k=1,\cdots,K.
\end{align*}
Based on the discussion above, the distributions of $Y_k\left(\omega;t-U(\omega)\right)$ do not depend on $t_{0,k}$. Then, we consider the autocovariance $\mathbb{E}\left\{ Y_k(t-U) Y_l(t+s-U) \right\}$ when estimating $\vert corr(\beta_k,\beta_l)\vert$, where $s=\underline{s}\Delta$ with $\underline{s}\in\mathbb{Z}$. Additionally, $\vert corr(\beta_k,\beta_l)\vert$ depends only on task-evoked terms $\{Y_{k}(\omega;t-U(\omega))-R_k(\omega;t-U(\omega))\}$ (see Eq.~\eqref{eq: BOLD time-course model}). Hence, we investigate the autocovariance difference $\mathbb{E}\left\{ Y_k(t-U) Y_l(t+s-U) \right\} -\mathbb{E}\left\{R_k(t-U)R_l(t+s-U)\right\}$. If $U(\omega)$, $\{\beta_k(\omega)\}_{k=1}^K$, and $\{R_k(\omega;t)\vert t\in\mathcal{T}\}_{k=1}^K$ are independent, Supplementary Theorem B.1 implies that this difference depends only on $s$. Therefore, we denote it as follows
\begin{align}\label{eq: expected difference}
  \mathcal{A}_{kl}(s) =  \mathbb{E}\left\{ Y_k(t-U) Y_l(t+s-U) \right\} -\mathbb{E}\left\{ R_k(t-U)R_l(t+s-U)\right\}, \ \ k,l=1,\cdots,K. 
\end{align}
The Fourier transform of $\mathcal{A}_{kl}(s)$ is denoted as $\widehat{\mathcal{A}_{kl}}(\xi)$ (see Eq.~\eqref{eq: def of the Fourier transform}). We propose the following function of $\xi$
\begin{align}\label{eq: def of C_kl}
    \mathcal{C}_{kl}(\xi) \overset{\operatorname{def}}{=}
    \frac{\left\vert \widehat{\mathcal{A}_{kl}}(\xi)\right\vert}{\sqrt{  \left\vert \, \widehat{\mathcal{A}_{kk}}(\xi) \,\,\widehat{\mathcal{A}_{ll}}(\xi)  \,\right\vert  }},\ \ \ \text{ for }\xi\in\mathbb{R},
\end{align}
by incorporating a normalization multiplier in Eq.~\eqref{eq: expected difference}. The property of Fourier transform indicates that $\mathcal{C}_{kl}(\xi)$ is a periodic function of $\xi$ with period $1/\Delta$. Without loss of generality, we view $\mathcal{C}_{kl}(\xi)$ as a function defined on the compact interval $[0,1/\Delta]$. Supplementary Lemma B.1 implies that $\widehat{\mathcal{A}_{kk}}(\xi) \widehat{\mathcal{A}_{ll}}(\xi)\ne 0$ except for finitely many $\xi$ in $[0,1/\Delta]$. Hence, the denominator in Eq. \eqref{eq: def of C_kl} is strictly positive except for finitely many $\xi$. We may ignore these finitely many points in the continuous interval $[0,1/\Delta]$. Supplementary Theorem B.1 provides the derivation of the Fourier transform
\begin{align*}
    \widehat{\mathcal{A}_{kl}}(\xi)=\mathbb{E}\left( \beta_k \beta_l\right) \cdot \vert \widehat{N}(\xi) \vert^2\cdot  \overline{\widehat{h}_k(\xi)}\, \widehat{h}_l(\xi) \cdot e^{2\pi i (t_{0,k}-t_{0,l})\xi},
\end{align*}
which implies the following representation of ptFC by canceling our the nuisance information contained in factors $\widehat{N}(\xi), h_k, h_l, t_{0,k}$, and $t_{0,l}$
\begin{align}\label{Eq: substitute equation}
    \mathcal{C}_{kl}(\xi)= \frac{\left\vert \widehat{\mathcal{A}_{kl}}(\xi)\right\vert}{\sqrt{  \left\vert \, \widehat{\mathcal{A}_{kk}}(\xi) \,\,\widehat{\mathcal{A}_{ll}}(\xi)  \,\right\vert  }} =\left\vert corr(\beta_k, \beta_l)\right\vert,
\end{align}
except for the finitely many $\xi\in[0,1/\Delta]$ such that $\widehat{\mathcal{A}_{kk}}(\xi) \widehat{\mathcal{A}_{ll}}(\xi) = 0$. That is, the function $\mathcal{C}_{kl}(\xi)$ of $\xi$ defined in Eq.~\eqref{eq: def of C_kl} is a constant function, and the constant is our proposed ptFC $\left\vert corr(\beta_k, \beta_l)\right\vert$. The representation in Eq.~\eqref{Eq: substitute equation} motivates the ptFCE algorithm, i.e., instead of directly estimating $\vert corr(\beta_k, \beta_l)\vert$, we propose an approach to estimating $\mathcal{C}_{kl}(\xi)$ via the Fourier transform. The main steps of our ptFCE algorithm provide  an estimation of factors $\mathcal{A}_{kl}(s)$.

BOLD signals $Y_k(\omega;t)$ are observed in experiments, and $U(\omega)$ is an auxiliary variable artificially generated in our estimation procedure (step 1 of Algorithm \ref{algorithm: ptFCE algorithm}). Hence, the term $\mathbb{E}\left\{ Y_k(t-U) Y_l(t+s-U) \right\}$ in Eq.~\eqref{eq: expected difference} can be estimated using the method of moments. However, the reference signals $R_k(\omega;t)$ implicitly contained in $Y_k(\omega;t)=P_k(\omega;t)+R_k(\omega;t)$ are not observable. Section \ref{section: The Estimation of Reference Signals} provides the estimation of $R_k(\omega;t)$.

\subsection{The Estimation of Reference Signals}\label{section: The Estimation of Reference Signals}

In this subsection, we derive an approximation $\tilde{R}_k(\omega;t)\approx R_k(\omega;t)$ from observed $Y_k(\omega;t)$. First, we note that $C_k \overset{\operatorname{def}}{=}\mathbb{E}\{(N*h_k)(t-t_{0,k}-U)\}$ is a constant depending on neither $t$ nor $t_{0,k}$. Define stochastic processes 
\begin{align*}
    J_k(\omega;t) := \beta_k(\omega) \{(N*h_k)(t-t_{0,k}-U(\omega))-C_k\},\ \ \ \text{ for }k=1,\cdots,K.
\end{align*}
For each fixed $k$, one can verify that scalar-valued stochastic process $J_k(\omega;t)$ is WSMZ conditioning on $\beta_k$, then $\mathbb{E}\left\{J_k(t) J_k(t+s) \,\big\vert\, \beta_k\right\}$ depends only on $s$. The following theorem gives the foundation for the estimation of reference terms $R_k(\omega;t)$.

\begin{theorem}\label{thm: 1st thm for AMUSE-ptFC}
For each $k\in\{1,\cdots,K\}$, suppose the BOLD signals $Y_k(\omega;t)$ are of the form in Eq.~\eqref{eq: BOLD time-course model}, and the random variable $U:\Omega\rightarrow\mathcal{T}$ is uniformly distributed. Additionally, for each $k$, suppose we are under the following conditions
\begin{itemize}
    \item the $\beta_k(\omega)$, $\{R_k(\omega;t)\}_{t\in\mathcal{T}}$, and $U(\omega)$ are independent;
    
    \item the scalar-valued stochastic process $\{R_k(\omega;t-U(\omega))\}_{t\in\mathcal{T}}$ is WSMZ;
    
    \item there exist $t^*\in\mathcal{T}$ and $\beta^*\in\mathbb{R}$ such that $ \frac{\mathbb{E}\{J_k(t) J_k(t-t^*)\vert\beta_k=\beta^*\}}{\mathbb{E}\left\{J_k(t)^2\vert\beta_k=\beta^*\right\}} \ne
    \frac{\mathbb{E}\{R_k(t-U) R_k(t-t^*-U)\}}{\mathbb{E}\left\{R_k(t-U)^2\right\}}$.
\end{itemize}
If nonsingular matrix $\pmb{A}=(a_{ij})_{1\le i, j \le 2}$ and stochastic process $\{\pmb{s}(\omega;t)=(s_1(\omega;t), s_2(\omega;t))^T\}_{t\in\mathcal{T}}$ satisfy the following 
    \begin{enumerate}[label=\roman*]
        \item given $\beta_k(\omega)=\beta^*$, $\pmb{s}(\omega;t)$ is WSMZ,
        \item $s_1(\omega;t_1)$ and $s_2(\omega;t_2)$ are uncorrelated for $t_1,t_2\in\mathcal{T}$,
        \item $\frac{\mathbb{E}\{s_1(t)s_1(t-t^*)\vert\beta_k=\beta^*\}}{\mathbb{E}\{s_1(t)\vert\beta_k=\beta^*\}} \ne \frac{\mathbb{E}\{s_2(t)s_2(t-t^*)\vert\beta_k=\beta^*\}}{\mathbb{E}\{s_2(t)\vert\beta_k=\beta^*\}}$, and \item $\pmb{A}(s_1(\omega;t), s_2(\omega;t))^T=(Y_k(\omega;t-U(\omega)) - \beta^* C_k, (N*h_k)(t-t_{0,k}-U(\omega))-C_k)^T$ for $t\in\mathcal{T}$,
    \end{enumerate}
    then there exist a non-singular diagonal matrix $\pmb{\Lambda}$ and a permutation matrix $\pmb{P}$, such that 
\begin{align}\label{AMUSE thm formula 2}
    \begin{pmatrix}
    1 & 1 \\
    \frac{1}{\beta^*} & 0
\end{pmatrix} = \pmb{A} \pmb{\Lambda}^{-1} \pmb{P}^{-1} \mbox{ and }
\begin{pmatrix}
J_k(\omega;t) \\
R_k(\omega;t-U(\omega))
\end{pmatrix} = \pmb{P}\pmb{\Lambda}\pmb{s}(\omega;t), \mbox{ for all }t\in\mathcal{T}.
\end{align}
\end{theorem}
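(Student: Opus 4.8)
The plan is to recognize the statement as an instance of the identifiability guarantee underlying the AMUSE algorithm of \cite{tong1991indeterminacy} and to verify that the particular decomposition supplied by model (\ref{eq: BOLD time-course model}) meets its hypotheses. First I would put the observed pair into canonical blind-source-separation form. Conditioning on $\beta_k(\omega)=\beta^*$ and substituting (\ref{eq: BOLD time-course model}), the definition $J_k(\omega;t)=\beta^*\{(N*h_k)(t-t_{0,k}-U(\omega))-C_k\}$ yields
\begin{align*}
Y_k(\omega;t-U)-\beta^* C_k &= J_k(\omega;t)+R_k(\omega;t-U),\\
(N*h_k)(t-t_{0,k}-U)-C_k &= \tfrac{1}{\beta^*}J_k(\omega;t),
\end{align*}
so the observed vector equals $\pmb{M}\,(J_k(\omega;t),\,R_k(\omega;t-U))^T$ with $\pmb{M}=\begin{pmatrix}1&1\\ 1/\beta^*&0\end{pmatrix}$, the matrix appearing on the left of (\ref{AMUSE thm formula 2}). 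Since $\det\pmb{M}=-1/\beta^*\ne 0$, $\pmb{M}$ is nonsingular, and the hypothesis $\pmb{A}\pmb{s}=(Y_k(\cdot-U)-\beta^*C_k,\,(N*h_k)(\cdot-t_{0,k}-U)-C_k)^T$ exhibits two factorizations $\pmb{A}\pmb{s}=\pmb{M}\pmb{v}$ of the same conditional process, where $\pmb{v}:=(J_k,\,R_k(\cdot-U))^T$.

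Next I would check that $\pmb{v}$ satisfies, conditionally on $\beta_k=\beta^*$, the same source conditions (i)--(iii) imposed on $\pmb{s}$. WSMZ of $J_k$ given $\beta_k$ was noted before the theorem; WSMZ of $R_k(\cdot-U)$ follows from the WSMZ assumption on $R_k$ and the independence of $R_k$ and $U$, since conditioning on $U=u$ turns $\mathbb{E}\{R_k(t-U)R_k(t+s-U)\}$ into a quantity depending only on $s$. The crucial mutual uncorrelatedness $\mathbb{E}\{J_k(t)R_k(t+s-U)\mid\beta_k=\beta^*\}=0$ at every pair of times comes from conditioning further on $U$: given $U$ and $\beta_k=\beta^*$, $J_k(t)$ is deterministic while $R_k(t+s-U)$ has mean zero by $R_k\perp(U,\beta_k)$, so the cross term vanishes. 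Finally, the displayed inequality in the hypothesis is precisely the statement that the normalized lag-$t^*$ autocorrelations of $v_1=J_k$ and $v_2=R_k(\cdot-U)$ differ, the analogue of condition (iii) for $\pmb{v}$.

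With both factorizations satisfying the AMUSE source conditions, I would set $\pmb{B}:=\pmb{M}^{-1}\pmb{A}$, so that $\pmb{v}=\pmb{B}\pmb{s}$, and run the standard second-order argument. Writing $\Sigma_0^{\bullet}$ and $\Sigma_{t^*}^{\bullet}$ for the lag-$0$ and lag-$t^*$ conditional covariance matrices (all diagonal, by uncorrelatedness of the components of $\pmb{v}$ and of $\pmb{s}$), the relations $\Sigma_0^v=\pmb{B}\Sigma_0^s\pmb{B}^T$ and $\Sigma_{t^*}^v=\pmb{B}\Sigma_{t^*}^s\pmb{B}^T$ combine to
\begin{align*}
\Sigma_{t^*}^v(\Sigma_0^v)^{-1}=\pmb{B}\,\Sigma_{t^*}^s(\Sigma_0^s)^{-1}\,\pmb{B}^{-1},
\end{align*}
exhibiting $\pmb{B}$ as a similarity between two diagonal matrices whose diagonal entries, the normalized lag-$t^*$ autocorrelations, are distinct. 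Matching eigenvalues forces $\Sigma_{t^*}^v(\Sigma_0^v)^{-1}=\pmb{P}\,\Sigma_{t^*}^s(\Sigma_0^s)^{-1}\,\pmb{P}^{-1}$ for a permutation $\pmb{P}$, after which $\pmb{P}^{-1}\pmb{B}$ commutes with a diagonal matrix with distinct entries and is therefore itself a nonsingular diagonal matrix $\pmb{\Lambda}$. Hence $\pmb{B}=\pmb{P}\pmb{\Lambda}$, which gives $\pmb{v}=\pmb{P}\pmb{\Lambda}\pmb{s}$ and $\pmb{M}=\pmb{A}\pmb{B}^{-1}=\pmb{A}\pmb{\Lambda}^{-1}\pmb{P}^{-1}$, the two identities of (\ref{AMUSE thm formula 2}).

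The part I expect to be most delicate is not the linear algebra but the conditional second-order bookkeeping: every moment must be taken conditionally on $\beta_k=\beta^*$, and the verification that $R_k(\cdot-U)$ inherits weak stationarity and stays uncorrelated with $J_k$ relies on the precise independence structure $U\perp\beta_k\perp\{R_k(t)\}_t$ together with the shift-invariance produced by the uniform $U$ and the periodic extension (\ref{eq: periodic extension}). Keeping the conditioning consistent across the lag-$0$ and lag-$t^*$ moments, and confirming that the denominators in the hypothesis are the conditional second moments that render $\Sigma_0^v$ (and hence $\Sigma_0^s=\pmb{B}^{-1}\Sigma_0^v\pmb{B}^{-T}$) invertible, is where the care is required; once those matrices are diagonal with distinct normalized autocorrelations, the conclusion is the classical AMUSE uniqueness.
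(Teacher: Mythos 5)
Your proposal is correct and follows essentially the same route as the paper, which simply observes that the theorem is an instance of Theorem 2 of \cite{tong1991indeterminacy} after writing the observed pair as the nonsingular mixture $\pmb{M}=\bigl(\begin{smallmatrix}1&1\\ 1/\beta^*&0\end{smallmatrix}\bigr)$ applied to $(J_k, R_k(\cdot-U))^T$ and checking the conditional source conditions, exactly as you do. The only difference is that you additionally reprove the cited AMUSE uniqueness result via the similarity of the diagonal lag-$t^*$/lag-$0$ covariance quotients, and you correctly flag (and resolve) the typo in condition (iii), whose denominators must be read as conditional second moments for the statement to be meaningful.
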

\noindent Theorem \ref{thm: 1st thm for AMUSE-ptFC} is a straightforward result of Theorem 2 in \cite{tong1991indeterminacy}. For each fixed $k$ and $\omega$, the pair $(\pmb{A}, \pmb{s}(\omega;t))$ in Theorem \ref{thm: 1st thm for AMUSE-ptFC} is derived by the AMUSE algorithm with the following 2D signal as its input
\begin{align*}
    \left\{\Big( Y_k(\omega;t-U(\omega)) - \beta_k(\omega)\cdot C_k,\,\, (N*h_k)(t-t_{0,k}-U(\omega))-C_k\Big)^T \,\Big\vert\, t\in\mathcal{T}\right\}.
\end{align*}
However, coefficient $\beta_k(\omega)$ is unknown. Since $\mathbb{E}\beta_k=0$, we ignore the term $\beta_k(\omega)\cdot C_k$ and apply the following vector-valued signal as input of the AMUSE algorithm.
\begin{align}\label{eq: AMUSE input}
    \left\{\Big( Y_k(\omega;t-U(\omega)), \,\, (N*h_k)(t-t_{0,k}-U(\omega))-C_k\Big)^T \,\Big\vert\, t\in\mathcal{T}\right\}.
\end{align}

The choice of latency $t_{0,k}$ and HRFs $h_k$ in Eq.~\eqref{eq: AMUSE input} is of importance. $t_{0,k}$ and $h_k$ can be estimated using semi-parametric approaches, e.g., the spline-based method by \cite{zhang2013semi}. In the meantime, our proposed estimator of ptFC defined by the ptFCE algorithm includes the HRFs and the latency variables reflecting the relationship between ptFC and the observed BOLD signals. Hence, our ptFCE algorithm allows the inclusion of information on HRFs and latency terms when computing the estimator. Suppose the estimators of $t_{0,k}$ and $h_k$ are denoted by $t_{0,k}^{est}$ and $h_k^{est}$, respectively. Then, we replace the unknown $t_{0,k}$ and $h_k$ in the input in Eq.~\eqref{eq: AMUSE input} with the estimated $t_{0,k}^{est}$ and $h_k^{est}$. To simplify our implementation of Eq.~\eqref{eq: AMUSE input} when analyzing the HCP data, we choose $t_{0,k}=0$ and $h_k=$ the canonical HRF (the \texttt{R} function \texttt{canonicalHRF} with default parameters in package \texttt{neuRosim}), instead of using the estimated $t_{0,k}^{est}$ and $h_k^{est}$ via some semi-parametric approaches, based on the following considerations: 1) latency times are usually much smaller than their corresponding experimental time units $\Delta$ \citep[e.g.,][p 138]{zhang2013semi} the HCP data are collected using a block design; it is known that $t_{0,k}$ are negligible if latency $t_{0,k}$ are much shorter than the length of task blocks; 3) latency $t_{0,k}$ can be incorporated into the corresponding HRF $h_k$ as its parameter, i.e., we view $h_k(\cdot-t_{0,k})$ as an HRF, then the choice of $t_{0,k}$ is equivalent to the choice of corresponding HRF (see the discussions below); 4) in block-design experiments, the influence of biased HRFs on the ptFCE algorithm is moderate as we illustrate using simulations in Section \ref{section: simulations}; this holds even if the support of biased HRFs is longer than task blocks.

Since the distribution of $U(\omega)$ is known, constants $C_k = \mathbb{E}\{(N*h_k)(t-t_{0,k}-U)\}$ can be estimated. To approximately recover $R_k(\omega;t)$ from $(\pmb{A}, \pmb{s}(\omega;t))$, we show the following result.
\begin{theorem}\label{thm: 2nd thm for AMUSE-ptFC}
For each $k\in\{1,\cdots,K\}$ and $\omega\in\Omega$, suppose the pair $(\pmb{A}, \pmb{s}(\omega;t))$ satisfies Eq.~\eqref{AMUSE thm formula 2}. Then there exists $i'\in\{1,2\}$ such that $a_{1i'}\, s_{i'}(\omega;t)=J_k(\omega;t)$.
\end{theorem}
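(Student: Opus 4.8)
The plan is to read the conclusion directly off the two matrix identities in (\ref{AMUSE thm formula 2}), exploiting the fact that a $2\times 2$ permutation matrix is either the identity $\pmb{I}$ or the single swap. Write $\pmb{M}$ for the left-hand matrix $\left(\begin{smallmatrix} 1 & 1 \\ 1/\beta^* & 0 \end{smallmatrix}\right)$. The first identity reads $\pmb{M} = \pmb{A}\pmb{\Lambda}^{-1}\pmb{P}^{-1}$, which I rearrange (multiplying on the right by $\pmb{P}\pmb{\Lambda}$) into $\pmb{A} = \pmb{M}\pmb{P}\pmb{\Lambda}$, while the second identity gives $\left(J_k(\omega;t),\, R_k(\omega;t-U(\omega))\right)^T = \pmb{P}\pmb{\Lambda}\pmb{s}(\omega;t)$. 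Writing $\pmb{\Lambda} = \operatorname{diag}(\lambda_1,\lambda_2)$, the task reduces to matching the first row of $\pmb{A}$ against the first component of $\pmb{P}\pmb{\Lambda}\pmb{s}$, which is by definition $J_k(\omega;t)$.

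The argument then splits into the two cases for $\pmb{P}$. First I would take $\pmb{P}=\pmb{I}$: then $J_k(\omega;t)=\lambda_1 s_1(\omega;t)$, and since $\pmb{A}=\pmb{M}\pmb{\Lambda}$ has first row $(\lambda_1,\lambda_2)$, we get $a_{11}=\lambda_1$ and hence $a_{11}s_1(\omega;t)=J_k(\omega;t)$, so $i'=1$. Second, for the swap $\pmb{P}=\left(\begin{smallmatrix} 0 & 1 \\ 1 & 0\end{smallmatrix}\right)$, the first component of $\pmb{P}\pmb{\Lambda}\pmb{s}$ is $\lambda_2 s_2(\omega;t)$, so $J_k(\omega;t)=\lambda_2 s_2(\omega;t)$; computing $\pmb{A}=\pmb{M}\pmb{P}\pmb{\Lambda}$ yields first row $(\lambda_1,\lambda_2)$ once more, so $a_{12}=\lambda_2$ and $a_{12}s_2(\omega;t)=J_k(\omega;t)$, giving $i'=2$. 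In either case the claimed index exists, which is exactly the assertion.

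The point that makes this painless is that the $1/\beta^*$ entry of $\pmb{M}$ occupies the second row, so it never enters the first-row products $a_{1i'}$ that the theorem concerns; the statement is really just a bookkeeping fact about how the top row $(1,1)$ of $\pmb{M}$ redistributes under right multiplication by $\pmb{P}$ and $\pmb{\Lambda}$. The only thing I would watch carefully is the index matching: a permutation acts on the rows of $\pmb{\Lambda}\pmb{s}$ but on the columns of $\pmb{M}$ when multiplying from the right, so I must confirm that the surviving component $s_{i'}$ of $\pmb{P}\pmb{\Lambda}\pmb{s}$ is paired with the correct column $i'$ of $\pmb{A}$. There is no genuine analytic obstacle, since Theorem \ref{thm: 1st thm for AMUSE-ptFC} has already produced $\pmb{\Lambda}$ and $\pmb{P}$; the remaining work is this finite two-case verification.
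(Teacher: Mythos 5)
Your proof is correct and is essentially the paper's own argument: both read the conclusion directly off the two identities in (\ref{AMUSE thm formula 2}) by elementary $2\times 2$ bookkeeping. The only cosmetic difference is that the paper routes through the second row of $\pmb{A}=\pmb{M}\pmb{P}\pmb{\Lambda}$ (deducing $a_{2j}=0$ for $j\ne i'$ and $\beta^*=a_{1i'}/a_{2i'}$, then using $a_{2i'}s_{i'}(\omega;t)=(N*h_k)(t-t_{0,k}-U(\omega))-C_k$), whereas you case-split on the permutation and work only with the first rows, which reaches the same conclusion without dividing by $a_{2i'}$.
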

\noindent The proof of Theorem \ref{thm: 2nd thm for AMUSE-ptFC} is in Appendix B. The index $i'$ in Theorem \ref{thm: 2nd thm for AMUSE-ptFC} can be computed as 
\begin{align*}
    i'=\argmax_i\left\{corr(s_i(\omega;t), (N*h_k)(t-t_{0,k}-U(\omega_j))-C_k), \mbox{ across }t \,\big\vert\, i=1,2\right\}.
\end{align*}
The AMUSE algorithm and Theorem \ref{thm: 2nd thm for AMUSE-ptFC} recover $J_k(\omega;t)=\beta_k(\omega)(N*h_k)(t-t_{0,k}-U(\omega))-\beta_k(\omega)C_k$. Again, we ignore $\beta_k(\omega)C_k$ because $\mathbb{E}\beta_k=0$, that is, $J_k(\omega;t)\approx \beta_k(\omega)(N*h_k)(\omega;t-t_{0,k}-U(\omega))$. Then we estimate $R_k(\omega;t-U(\omega))$ by $\tilde{R}_k(\omega;t-U(\omega))\overset{\operatorname{def}}{=} Y_k(\omega;t-U(\omega))-J_k(\omega;t)\approx R_k(\omega;t-U(\omega))$, for all $k$ and $\omega$. In applications, $U(\omega)$ are artificially generated and known. Then, we have the approximation $\tilde{R}_k(\omega;t)\approx R_k(\omega;t)$. We conclude the derivation of $\tilde{R}_k(\omega;t)$ from $Y_k(\omega;t)$ as follows: \\
Step 1, each observed signal $Y_k(\omega;t)$ provides the input in Eq.~\eqref{eq: AMUSE input} for the AMUSE algorithm; \\
Step 2, the AMUSE algorithm computes $(\pmb{A}, \pmb{s}(\omega;t))$; \\
Step 3, Theorem \ref{thm: 2nd thm for AMUSE-ptFC} indicates $J_k(\omega;t)=a_{1i'}s_{i'}(\omega;t)$; \\
Step 4, compute $\tilde{R}_k(\omega;t)$ using $J_k(\omega;t)$ and the artificially generated $U(\omega)$. 

Theorem \ref{thm: 1st thm for AMUSE-ptFC} is based on the weak stationarity of $\{R_k(\omega;t-U(\omega))\}_{t\in\mathcal{T}}$, and the AMUSE algorithm is based on Theorem \ref{thm: 1st thm for AMUSE-ptFC}. If the weak stationarity assumption is violated, any source of non-weak stationarity in $\{R_k(\omega;t-U(\omega))\}_{t\in\mathcal{T}}$ is incorporated in the estimation of the task-evoked terms.

\subsection{The Estimator for $\mathcal{C}_{kl}(\xi)$ and The ptFCE Algorithm}\label{section: estimator and the ptFCE algorithm}

With the observed signals $\{Y_{k'}(\omega;t)\vert k'=k,l\}_{\omega=1}^n$ and the signals $\{\tilde{R}_{k'}(\omega;t)\vert k'=k,l\}_{\omega=1}^n$ derived from $\{Y_{k'}(\omega;t)\vert k'=k,l\}_{\omega=1}^n$ using the approach introduced in Section \ref{section: The Estimation of Reference Signals}, we propose the following estimator for $\mathcal{C}_{kl}(\xi)=\vert corr(\beta_k,\beta_l)\vert$.
\begin{align}\label{eq: main formula for data}
\begin{aligned}
& \mathcal{C}_{kl}^{est,n}(\xi) \overset{\operatorname{def}}{=} \frac{ \left\vert \widehat{\mathcal{A}_{kl}^{est,n}}(\xi)\right\vert }{\sqrt{  \left\vert \widehat{\mathcal{A}_{kk}^{est,n}}(\xi) \,\, \widehat{\mathcal{A}_{ll}^{est,n}}(\xi)  \right\vert  }},\ \ \ \mbox{for all } \xi\in\mathbb{R}, \mbox{ where } \\
& \mathcal{A}_{kl}^{est,n}(s) \overset{\operatorname{def}}{=} \underline{Y}(s)-\underline{\tilde{R}}(s), \\
& \underline{Y}(s) \overset{\operatorname{def}}{=} \frac{1}{T+1}\sum_{t\in\{\tau\Delta\}_{\tau=0}^T}\left[  \frac{1}{n}\sum_{\omega=1}^n  \left\{Y_k\left(\omega;t-U(\omega)\right)Y_l(\omega;t+s-U(\omega))\right\} \right], \\
& \underline{\tilde{R}}(s) \overset{\operatorname{def}}{=}  \frac{1}{T+1}\sum_{t\in\{\tau\Delta\}_{\tau=0}^T}\left[  \frac{1}{n}\sum_{\omega=1}^n  \{\tilde{R}_k(\omega;t-U(\omega)) \tilde{R}_l(\omega;t+s-U(\omega))\} \right].
\end{aligned}
\end{align}
The periodicity property of the Fourier transform implies that $\mathcal{C}_{kl}^{est,n}(\xi)$ is a periodic function of $\xi$ with period $1/\Delta$. Supplementary Lemma B.1 implies that the estimator $\mathcal{C}_{kl}^{est,n}(\xi)$ in Eq.~\eqref{eq: main formula for data} is well-defined for all $\xi\in[0,1/\Delta]$ except for at most finitely many points $\xi$ such that $\widehat{\mathcal{A}_{kk}^{*est}}(\xi) \widehat{\mathcal{A}_{ll}^{*est}}(\xi)=0$.

$\mathcal{A}_{kl}^{est,n}(s)$ has a method of moments form except $R_k$ in (\ref{eq: expected difference}) is replaced with the estimated $\tilde{R}_k$. This has an effect on the consistency of $\mathcal{C}_{kl}^{est,n}(\xi)$ as an estimator for $\mathcal{C}_{kl}(\xi)$ and results in the asymptotic bias $\vert\lim_{n\rightarrow\infty}\mathcal{C}_{kl}^{est,n}(\xi)-\mathcal{C}_{kl}(\xi)\vert$. The bias can be small for properly chosen $\xi$. Specifically, we derive a formula of the bias as a function of $\xi$ and choose $\xi$ that minimizes the bias. Since $\tilde{R}_k(\omega;t)$ approximates $R_k(\omega;t)$, we model the difference $W_k(\omega;t) \overset{\operatorname{def}}{=} R_k(\omega;t)-\tilde{R}_k(\omega;t)$ as random noise and assume that $\pmb{W}(\omega;t) \overset{\operatorname{def}}{=} (W_1(\omega;t), \cdots,W_K(\omega;t))^T$ satisfies the following
\begin{enumerate}[label=\roman*]
    \item $\pmb{W}(\omega;t_1)$ is independent of $\pmb{W}(\omega;t_2)$ if $t_1\ne t_2$,
    \item $\pmb{W}(\omega;t)$ is WSMZ, and
    \item $\Sigma_{kl}:=\mathbb{E}[W_k(t)W_l(t)]$ for $t\in\mathcal{T}$.
\end{enumerate}
Because of $\Sigma_{kl}\le\sqrt{\Sigma_{kk}\Sigma_{ll}}$ and $\mathbb{E}\beta_k=\mathbb{E}\beta_l=0$, under the assumptions on $\pmb{W}(\omega;t)$, Supplementary Theorem B.2 implies 
\begin{align*}
    \lim_{n\rightarrow\infty}\mathcal{C}^{est,n}_{kl}(\xi) &= \frac{ \left\vert \mathbb{E}(\beta_k\beta_l)\frac{\overline{\widehat{h}_k(\xi)}\widehat{h}_l(\xi)}{\left\vert \widehat{h}_k(\xi) \widehat{h}_l(\xi) \right\vert}  e^{2\pi i \xi (t_{0,k}-t_{0,l})} + \frac{\Sigma_{kl}}{(T+1)\left\vert\widehat{N}(\xi)\right\vert^2\left\vert\widehat{h}_k(\xi) \widehat{h}_l(\xi)\right\vert} \right\vert }{ \sqrt{ \left(\mathbb{E}(\beta_k^2)+\frac{\Sigma_{kk}}{(T+1)\left\vert\widehat{N}(\xi)\widehat{h}_k(\xi)\right\vert^2}\right)\cdot \left(\mathbb{E}(\beta_l^2)+\frac{\Sigma_{ll}}{(T+1)\left\vert\widehat{N}(\xi)\widehat{h}_l(\xi)\right\vert^2}\right) } } \\
    & \approx
    \frac{ \left\vert \mathbb{E}(\beta_k\beta_l)\frac{\overline{\widehat{h}_k(\xi)}\widehat{h}_l(\xi)}{\left\vert \widehat{h}_k(\xi) \widehat{h}_l(\xi) \right\vert}  e^{2\pi i \xi (t_{0,k}-t_{0,l})} \right\vert }{ \sqrt{ \mathbb{E}(\beta_k^2) \cdot\mathbb{E}(\beta_l^2) } }  = \frac{\vert\mathbb{E}(\beta_k \beta_l)\vert}{ \sqrt{ \mathbb{E}(\beta_k^2) \mathbb{E}(\beta_l^2) } } = \vert corr(\beta_k \beta_l)\vert
\end{align*}
almost surely at frequencies $\xi$ such that
\begin{align}\label{eq: zero contamination}
    \Sigma_{kk}\Big/[(T+1)\vert\widehat{N}(\xi)\widehat{h}_k(\xi)\vert^2]\approx 0,\ \ \ \Sigma_{ll}\Big/[(T+1)\vert\widehat{N}(\xi)\widehat{h}_l(\xi)\vert^2]\approx 0.
\end{align}
The $\xi$ resulting in sufficiently large $\vert\widehat{h}_k(\xi)\vert$ and $\vert\widehat{h}_l(\xi)\vert$ satisfying (\ref{eq: zero contamination}) reduces the estimation bias. From the signal processing perspective, large $\vert \widehat{h}_k(\xi)\vert$ and $\vert \widehat{h}_l(\xi)\vert$ filter out the random noise in $W_k(\omega;t)$, and HRFs act as band-pass filters \citep[typically $0-0.15$ Hz, see][]{aguirre1997empirical}. Hence, we are interested in $\xi\in(0,0.15)$. Therefore, motivated by (\ref{eq: zero contamination}), $\vert corr(\beta_k,\beta_l)\vert$ is approximated by the median of $\mathcal{C}_{kl}^{est,n}(\xi)$ across $\xi\in(0,0.15)$, since the median is more stable than mean. One typical curve of $\{\mathcal{C}_{kl}^{est,n}(\xi)\vert\xi\in\mathbb{R}\}$ is in Figure \ref{fig: curve of C as a function of xi}. The estimator $\mathcal{C}_{kl}^{est,n}(\xi)$ in (\ref{eq: main formula for data}) and the choice of $\xi$ complete the estimation of $\mathcal{C}_{kl}(\xi)=\vert corr(\beta_k \beta_l)\vert$. The estimation procedure is concluded in the ptFCE algorithm (Algorithm \ref{algorithm: ptFCE algorithm}). 

\begin{algorithm}[h]
	\caption{ptFCE Algorithm}\label{algorithm: ptFCE algorithm}
	\begin{algorithmic}[1]
		\INPUT
        \noindent(i) Task-fMRI BOLD signals  $\{(Y_k(\omega;\tau\Delta), Y_l(\omega;\tau\Delta))\}_{\tau=0}^T$ for sampled subjects $\omega\in\{1,\cdots,n\}$; (ii) repetition time $\Delta$; (iii) the stimulus signal $\{N(\tau\Delta)\}_{\tau=0}^T$; (iv) latency $t_{0,k'}=\tau_{0,k'}\Delta$ with some $\tau_{0,k'}\in\mathbb{Z}$ for $k'\in\{k,l\}$, and their default values are $t_{0,k'}=0$ for $k'\in\{k,l\}$; (v) HRF $h_{k'}$ for $k'\in\{k,l\}$. The default HRFs are $h_k=h_l=$ the \texttt{R} function \texttt{canonicalHRF} with its default parameters.

		\OUTPUT An estimation of the ptFC $\vert corr(\beta_k, \beta_l)\vert$ between the $k^{th}$ and $l^{th}$ nodes.
		\STATE Generate i.i.d. $\{U(\omega)\}_{\omega=1}^n$ from the uniform distribution on $\mathcal{T}=\{\tau\Delta\}_{\tau=0}^T$.
		\STATE For each $\omega\in\{1,\cdots,n\}$ and $k'\in\{k,l\}$, apply the AMUSE algorithm to input (\ref{eq: AMUSE input}) and obtain estimated reference signals $\{\tilde{R}_k(\omega;\tau\Delta)\}_{\tau=0}^T$.
		\STATE Compute the estimator $\mathcal{C}_{kl}^{est,n}(\xi)$ in (\ref{eq: main formula for data}) and the median of $\{\mathcal{C}_{kl}^{est,n}(\xi)\vert\xi\in(0, 0.15)\}$. The median is the output of this algorithm. 
		\end{algorithmic}
\end{algorithm}

\begin{figure}[ht]
    \centering
    \includegraphics[scale=0.6]{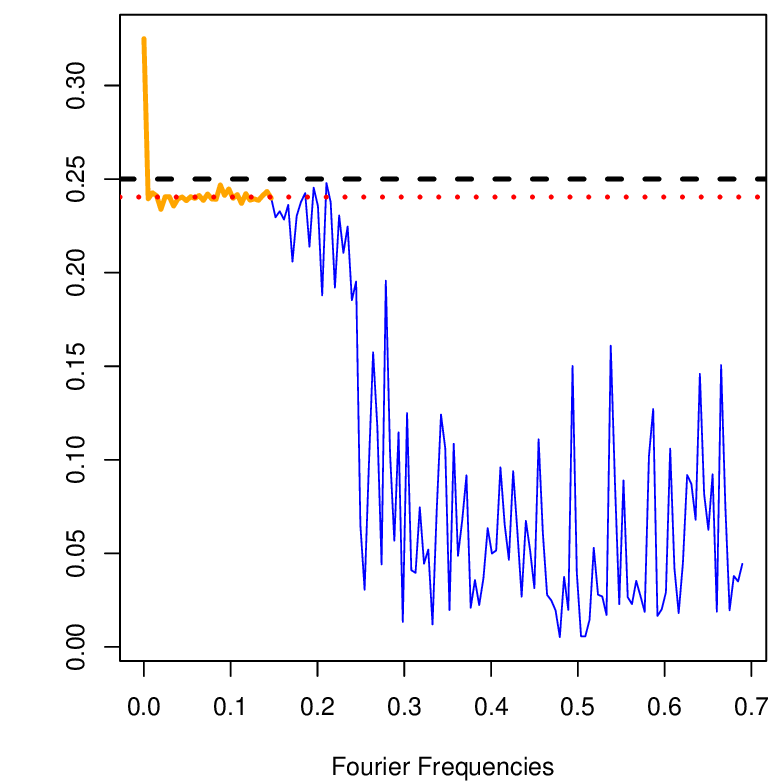}
    \caption{To generate this figure, we first generate synthetic BOLD signals using Mechanism 0 in Appendix E with sample size $n=308$ and underlying $\rho=0.25$ (presented by the black dashed line). Then we apply our proposed ptFCE algorithm to the synthetic BOLD signals generated by Mechanism 0. The solid blue curve presents the estimator $\mathcal{C}_{kl}^{est,n}(\xi)$  as a function of Fourier frequencies $\xi\in(0, \frac{1}{2\Delta})$, and the solid orange curve presents $\{\mathcal{C}_{kl}^{est,n}(\xi)\vert\xi\in(0,0.15)\}$ for taking the desired median. The unreasonably large value part of this orange curve motivates us to implement the median instead of the mean of $\{\mathcal{C}_{kl}^{est,n}(\xi)\vert\xi\in(0,0.15)\}$. The dotted red line presents the median of $\mathcal{C}_{kl}^{est,n}(\xi)$ across $\xi\in(0, 0.15)$, i.e., the output of the ptFCE algorithm for the synthetic BOLD signals.}
    \label{fig: curve of C as a function of xi}
\end{figure}


\section{Simulations}\label{section: simulations}

When comparing methods for FC estimation, direct comparisons of methods are difficult as the estimates are defined on different scales \citep{cisler2014comparison}. In the context of FC analysis, we may compare the methods by measuring their accuracy in correctly identifying the relative values of FC among pairs of nodes estimated by each approach, i.e., the performance of the algorithm in terms of detecting underlying ``weak vs. strong" patterns instead of the estimated values. Specifically, suppose we investigate two pairs of nodes. The connectivity between one pair is weak while the connectivity between another pair is strong; we are interested in whether the estimated value for the weak connectivity is smaller than that of the strong connectivity. Therefore, in this section, we only compare ptFCE with other approaches in detecting underlying ``weak vs. strong" patterns.  

We compare the ptFCE algorithm with the following methods: beta-series regression, \textit{naive Pearson correlation}, \textit{task Pearson correlation}, and coherence analysis. For the comparative analysis, we use synthetic data generated by two different mechanisms designed to mimic the properties of HCP motor data --- the first is based on the model in Eq.~\eqref{eq: BOLD time-course model} for defining ptFCs, while the second is motivated by Pearson correlations. We use the second data-generating mechanism to illustrate the comparable performance of our proposed approach with others even when the data-generating mechanism is not compatible with the ptFC framework. We design the simulations using non-canonical HRFs in generating synthetic data to illustrate the robustness of our algorithm that uses the canonical HRF as input. The two data-generating mechanisms are briefly summarized as follows. Details of them are in Appendix E.

\noindent\textbf{Mechanism 1}: \textbf{Step 1}, generate coefficients $\left(\beta_{1}(\omega), \beta_{2}(\omega), \beta_{3}(\omega)\right)^T\sim N_3\left(\pmb{0}, (\sigma_{ij})_{1\le i,j \le 3}\right)$ for $\omega=1,\cdots,n$. Assume $\rho_{ij}:=\vert\sigma_{ij}/\sqrt{\sigma_{ii}\sigma_{jj}}\vert$, for $(i,j)\in\{(1,2), (2,3)\}$ are the true underlying ptFCs between nodes $1$ and $2$ and between nodes $2$ and $3$, respectively. \textbf{Step 2}, generate reference signals $\{R_{k'}(\omega;\tau\Delta)=Q_{k'}(\omega;\tau\Delta)+\epsilon_{k'}(\omega;\tau\Delta)\}_{\tau=0}^T$, for all $\omega$ and $k'\in\{1,2,3\}$, where the non-zero dependent signals $Q_{k'}(\omega;\tau\Delta)$ are constructed using Eq.~\eqref{eq: an example of Qk} (details are available in Appendix E), and the $3nT$ noise values $\{\epsilon_{k'}(\omega;\tau\Delta) \vert k'=1,2,3; \omega=1,\cdots,n; \tau=0,\cdots,T\}\sim_{iid} N(0,V)$. \textbf{Step 3}, compute synthetic BOLD signals $\{Y_{k'}(\omega;\tau\Delta)\vert k'=1,2,3\}_{\tau=0}^T$, for $\omega=1,\cdots,n$, by 
\begin{align*}
    Y_{k'}(\omega; \tau\Delta)=9000+\beta_{k'}(\omega)\times \left(N * h_{k'}\right)(\tau\Delta) + R_{k'}(\omega; \tau\Delta).
\end{align*}
Denote $\pmb{Y}(\omega;t):=\{Y_{k'}(\omega;\tau\Delta)\}_{k'=1}^3$.

\noindent\textbf{Mechanism 2}: \textbf{Step 1}, for each $\omega\in\{1, \cdots, n\}$, generate independent normal vectors $$\{\pmb{\varepsilon}(\omega;\tau\Delta)=(\varepsilon_1(\omega;\tau\Delta), \varepsilon_2(\omega;\tau\Delta), \varepsilon_3(\omega;\tau\Delta))^T\}_{\tau=0}^T,$$ where $\pmb{\varepsilon}(\omega;\tau\Delta)\sim N_3(\pmb{0}, \pmb{\Sigma}_{task})$ if $N(\tau\Delta)=1$, and $\pmb{\varepsilon}(\omega;\tau\Delta)\sim N_3(\pmb{0}, \pmb{\Sigma}_{resting})$ if $N(\tau\Delta)=0$; matrices $\pmb{\Sigma}_{task}$ and $\pmb{\Sigma}_{resting}$ share the same diagonals, the off-diagonal elements of $\pmb{\Sigma}_{resting}$ are $0$, and $\{\varrho_{ij}\}_{i,j=1}^3$ denote the correlations deduced from covariance matrix $\pmb{\Sigma}_{task}$. \textbf{Step 2}, compute $Y_{k'}(\omega;\tau\Delta)=N*h_{k'}(\tau\Delta)+\sum_{j=1}^4 N_{j}*h_{k',j}(\tau\Delta)+\varepsilon_{k'}(\tau\Delta)$, for $k'\in\{1,2,3\}$, where the tasks $\{N_j(t)\}_{j=1}^4$ are nuisance tasks, and $\{h_{k',j}(t)\}_{j=1}^4$ are the corresponding HRFs. Details of the covariance matrices $\pmb{\Sigma}_{task}$ and $\pmb{\Sigma}_{resting}$ and nuisance tasks are in Appendix E. \textbf{Step 3}, repeat Steps 1 and 2 for all $\omega=1,\cdots,n$. 

In Mechanism 2, the zero off-diagonal elements of $\pmb{\Sigma}_{resting}$ indicate no task-evoked connectivity when the task of interest is absent. When $N(\tau\Delta)=1$, the correlation $\varrho_{ij}$ in $\pmb{\Sigma}_{task}$ measures the connectivity between nodes $i$ and $j$ evoked by task $N(\tau\Delta)=1$. We are interested in estimating the connectivity between nodes $1,2$ and nodes $2,3$. With the synthetic signals $\{\pmb{Y}(\omega;t)\}_{\omega=1}^n$ generated by Mechanisms 1 or 2, the existing methods are implemented as follows.

\noindent\textbf{Naive Pearson correlation:} For each subject $\omega$ and underlying $\rho_{ij}$ or $\varrho_{ij}$, compute the Pearson correlation between $\{Y_i(\omega;\tau\Delta)\}_{\tau=0}^T$ and $ \{Y_j(\omega;\tau\Delta)\}_{\tau=0}^T$ across all $\tau$ and denote the absolute value of this correlation by $\hat{\rho}^{naiveCorr}_{ij,\omega}$. Let $\hat{\rho}^{naiveCorr}_{ij, mean}$ and $\hat{\rho}^{naiveCorr}_{ij, median}$ denote the mean and median, respectively, of $\{\hat{\rho}^{naiveCorr}_{ij, \omega}\}_{\omega=1}^{n}$ across all $\omega$. 

\noindent\textbf{Task Pearson correlation:} For each $\omega$ and underlying $\rho_{ij}$ or $\varrho_{ij}$, compute the Pearson correlation between $\{Y_i(\omega;\tau\Delta)\vert N(\tau\Delta)=1\}$ and $ \{Y_j(\omega;\tau\Delta)\vert N(\tau\Delta)=1\}$ across $\tau$ such that $N(\tau\Delta)=1$ and denote its absolute value by $\hat{\rho}^{taskCorr}_{ij, \omega}$. Compute the mean and median of $\{\hat{\rho}^{taskCorr}_{ij, \omega}\}_{\omega=1}^{n}$ across $\omega$ and denote them by $\hat{\rho}^{taskCorr}_{ij, mean}$ and $\hat{\rho}^{taskCorr}_{ij, median}$, respectively.

\noindent Details of implementing the beta-series regression and coherence analysis to obtain estimates $(\hat{\rho}^{betaS}_{ij,mean}, \hat{\rho}^{betaS}_{ij,median})$ and $(\hat{\rho}^{Coh}_{ij, mean}, \hat{\rho}^{Coh}_{ij, median})$ are presented in Appendix G.

Let $(\rho_{12}, \rho_{23})=(\varrho_{12}, \varrho_{23})=(0.4, 0.6)$, indicating  weak ($0.4$) or strong ($0.6$) connectivity between the two node pairs. We use sample sizes $n=50$ and $308$ to illustrate the performance in different sample sizes, where 308 is the size of the HCP data set. PtFCs estimated by the ptFCE algorithm are denoted by $(\widehat{\rho}_{12}, \widehat{\rho}_{23})$. For each simulated data set, applying all the referred FC estimation methods, we obtain estimates
\begin{align*}
    \left\{ \left( \, \widehat{\rho}_{ij}, \, \hat{\rho}^{naiveCorr}_{ij, mean}, \, \hat{\rho}^{naiveCorr}_{ij, median}, \, \hat{\rho}^{taskCorr}_{ij, mean}, \, \hat{\rho}^{taskCorr}_{ij, median}, \, \hat{\rho}^{betaS}_{ij, mean}, \, \hat{\rho}^{betaS}_{ij,median}, \, \hat{\rho}^{Coh}_{ij, mean}, \, \hat{\rho}^{Coh}_{ij, median} \,\right) \,\Big\vert\, i<j \right\}_{i,j=1}^3.
\end{align*}
For each method, we evaluate whether it can identify the ``weak vs. strong" pattern. For example, our proposed ptFCE algorithm is effective if $\widehat{\rho}_{12} < \widehat{\rho}_{23}$. We repeat this procedure 500 times. The rates of correct identification of the connectivity patterns across 500 simulations are presented in Table \ref{table: identification rate}. 

\begin{table}[h]
\centering
\caption{Identification rates $\widehat{p}:=m'/m$ of different methods for the ``weak vs strong" pattern, where $m'$ is the numer of correct identifications among all $m=500$ synthetic samples. The correct identification rates in the table are presented in the $95\%$-confidence Wald-interval form \citep{brown2002confidence}, i.e., $\widehat{p}\pm z_{0.975}\cdot\sqrt{\widehat{p}(1-\widehat{p})/m}$, where $z_{0.975}\approx 1.96$ is the $0.975$-quantile of the standard normal distribution $N(0,1)$.}\label{table: identification rate}
\begin{tabular}{lllllllll}
\hline
                                    &  & Rates       &  & ($n=50$)    &  & Rates       &  & ($n=308$)   \\ \cline{3-5} \cline{7-9} 
Methods                             &  & Mech. 1 &  & Mech. 2 &  & Mech. 1 &  & Mech. 2 \\ \hline
ptFCE                               &  & $82.8(\pm 3.3)\%$    &  & $67.4(\pm 4.1)\%$    &  & $99.8(\pm 0.4)\%$    &  & $88.4(\pm 2.8)\%$    \\
MLM-based approach &  &  $69.2(\pm 4.0)\%$ &  & $58.8(\pm 4.3)\%$ &  & $84.6(\pm 3.2)\%$ &  &  $66.6(\pm 4.1)\%$ \\
naive Pearson (mean)   &  & $58.2(\pm 4.3)\%$    &  & $93.8(\pm 2.1)\%$    &  & $69.0(\pm 4.1)\%$    &  & $100(\pm 0)\%$     \\
naive Pearson (median) &  & $55.6(\pm 4.4)\%$    &  & $88.0(\pm 2.8)\%$    &  & $62.2(\pm 4.3)\%$    &  & $100(\pm 0)\%$     \\
task Pearson (mean)    &  & $47.6(\pm 4.4)\%$    &  & $100(\pm 0)\%$     &  & $48.6(\pm 4.4)\%$    &  & $100(\pm 0)\%$     \\
task Pearson (median)  &  & $47.8(\pm 4.4)\%$    &  & $100(\pm 0)\%$     &  & $50.6(\pm 4.4)\%$    &  & $100(\pm 0)\%$     \\
beta-series (mean)       &  & $49.4(\pm 4.4)\%$    &  & $96.6(\pm 1.6)\%$    &  & $48.8(\pm 4.4)\%$    &  & $100(\pm 0)\%$     \\
beta-series (median)     &  & $48.2(\pm 4.4)\%$    &  & $97.8(\pm 1.3)\%$    &  & $47.6(\pm 4.4)\%$    &  & $100(\pm 0)\%$     \\
coherence (mean)           &  & $51.0(\pm 4.4)\%$    &  & $55.8(\pm 4.4)\%$    &  & $60.8(\pm 4.3)\%$    &  & $72.0(\pm 3.9)\%$    \\
coherence (median)         &  & $51.6(\pm 4.4)\%$    &  & $57.6(\pm 4.3)\%$    &  & $58.8(\pm 4.3)\%$    &  & $68.0(\pm 4.1)\%$    \\ \hline
\end{tabular}
\end{table}

When synthetic BOLD signals are from the mechanism compatible with the definition of ptFCs --- Mechanism 1, our ptFCE algorithm performs overwhelmingly better than other methods. If the synthetic signals are from the mechanism motivated by Pearson correlations --- Mechanism 2, our proposed algorithm still provides a good identification rate and is better than the coherence analysis. Importantly, it is unlikely that the true data-generating process in task-fMRI follows the second data-generating process. Mechanism 2 implies the effect of task disappears as soon as the task is absent. However, it is known that there is a delay in brain response associated with time needed by brain vasculature to respond to the decrease in oxygen \citep{lindquist2008statistical}. Even in this unrealistic scenario, our proposed method is comparable with others. Naive Pearson correlation and coherence analysis are designed to measure FC evoked by both the task of interest and nuisance tasks. In contrast, beta-series regression, task Pearson correlations, and ptFCE are developed for quantifying the FC evoked by specific tasks of interest. Hence, only  beta-series regression, task Pearson correlations, and ptFCE algorithms are useful when estimating the FC evoked by specific tasks of interest. Additionally, the task Pearson correlation model assumes that the effect of task of interest disappears immediately when the task is absent. Hence, because of the delay in brain response, task Pearson correlations may be biased. Lastly, Table \ref{table: identification rate} shows that ptFCE performs better than the beta-series regression for data from Mechanism 1. 

In addition, we compare the MLM-based approach described in Section \ref{section: Estimation} with our proposed ptFCE algorithm. The implementation of the MLM-based approach is as follows.

\textbf{MLM-based approach:} for each subject $\omega$, we apply the AMUSE algorithm described in Section \ref{section: The Estimation of Reference Signals} and obtain estimators $\{\tilde{P}_k(\omega;\tau\Delta)\}_{\tau=0}^T$ of the true task evoked terms $\{P_k(\omega;\tau\Delta)\}_{\tau=0}^T$; we repeat this procedure for all subjects $\omega\in\Omega$. Then, we implement the MLM in Eq.~\eqref{eq: MLM-based approach} and obtain estimators $\{(\tilde{\beta}_k(\omega), \tilde{\beta}_l(\omega))\}_{\omega\in\Omega}$ of the true random-effects $\{(\beta_k(\omega), \beta_l(\omega))\}_{\omega\in\Omega}$. Lastly, we estimate the ptFC $\vert corr(\beta_k,\beta_l)\vert$ by the sample correlation between $\tilde{\beta}_k(\omega)$ and $\tilde{\beta}_l(\omega)$ across $\omega\in\Omega$.

We apply the MLM-based approach to the 500 simulations generated from Mechanisms 1 and 2, respectively, and compare the approach to other methods. The performance of the MLM-based approach and the corresponding comparison are presented in Table \ref{table: identification rate}. The results in Table \ref{table: identification rate} show that when the synthetic data are generated from Mechanism 1, although the MLM-based approach performs overwhelmingly better than the existing methods, it is inferior to our ptFCE algorithm. When the synthetic data are generated from Mechanism 2, which violates the assumptions in model (\ref{eq: BOLD time-course model}), the MLM-based approach is hardly better than a random guess.

In the simulation studies above (presented in Table \ref{table: identification rate}), we considered only 3 synthetic nodes. To further illustrate our proposed ptFCE algorithm, we apply it to 50 synthetic nodes. The simulation results for the 50-node study are presented in Figure \ref{fig: 50_node_study_plot}. The heatmap in the left panel shows the true connectivity structure evoked by the task of interest in Mechanisms 1 and 2; the heatmap in the middle panel shows the connectivity structure estimated by the ptFCE algorithm when the synthetic data are generated from Mechanism 1; the heatmap in the right panel shows the connectivity structure estimated by the ptPCE algorithm when the synthetic data are generated from Mechanism 2. The details of the 50-node study are provided in Appendix E.4.The similarity between the true connectivity and each of the estimated connectivity maps shows the capacity of our proposed ptFCE approach in estimating the true underlying task-evoked functional connectivity. 

\begin{figure}[ht]
    \centering
    \includegraphics[scale=0.46]{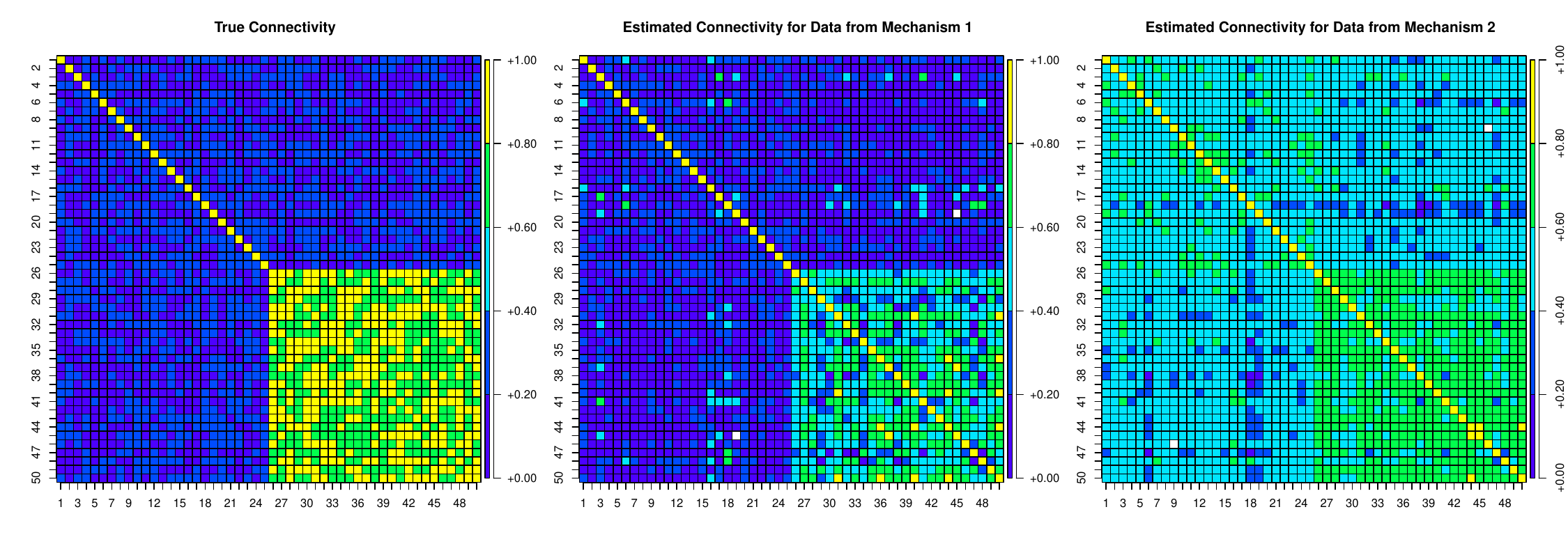}
    \caption{The simulation results of the 50-node study. The left panel shows the true connectivity structure for generating synthetic data using Mechanisms 1 and 2. The details of the data-generating procedure are provided in Appendix E.4. The yellow/green color in the bottom-right quadrant indicates that the connection between any two of nodes 26-50 is strong. For each of Mechanisms 1 and 2, we randomly generate only one synthetic subject. Then, we apply our proposed ptFCE algorithm to estimate the underlying connectivity. The estimated connectivity structures for data generated from Mechanisms 1 and 2 are presented in the middle and right panels, respectively.}
    \label{fig: 50_node_study_plot}
\end{figure}

Simulation studies showing the bias and variance of the ptFCE algorithm in estimating ptFC are provided in Appendix D. In addition, Appendix D also provides the simulation studies for different signal-to-noise ratios. PtFCE is computationally efficient and scalable. On a PC with \texttt{2.4 GHz 8-Core Intel Core i9} processor, the approximate computational time of ptFCE is 3.5 seconds for 50 subjects and 11.5 seconds for 1000 subjects.


\begin{figure}[h]
    \centering
    \includegraphics[scale=0.45]{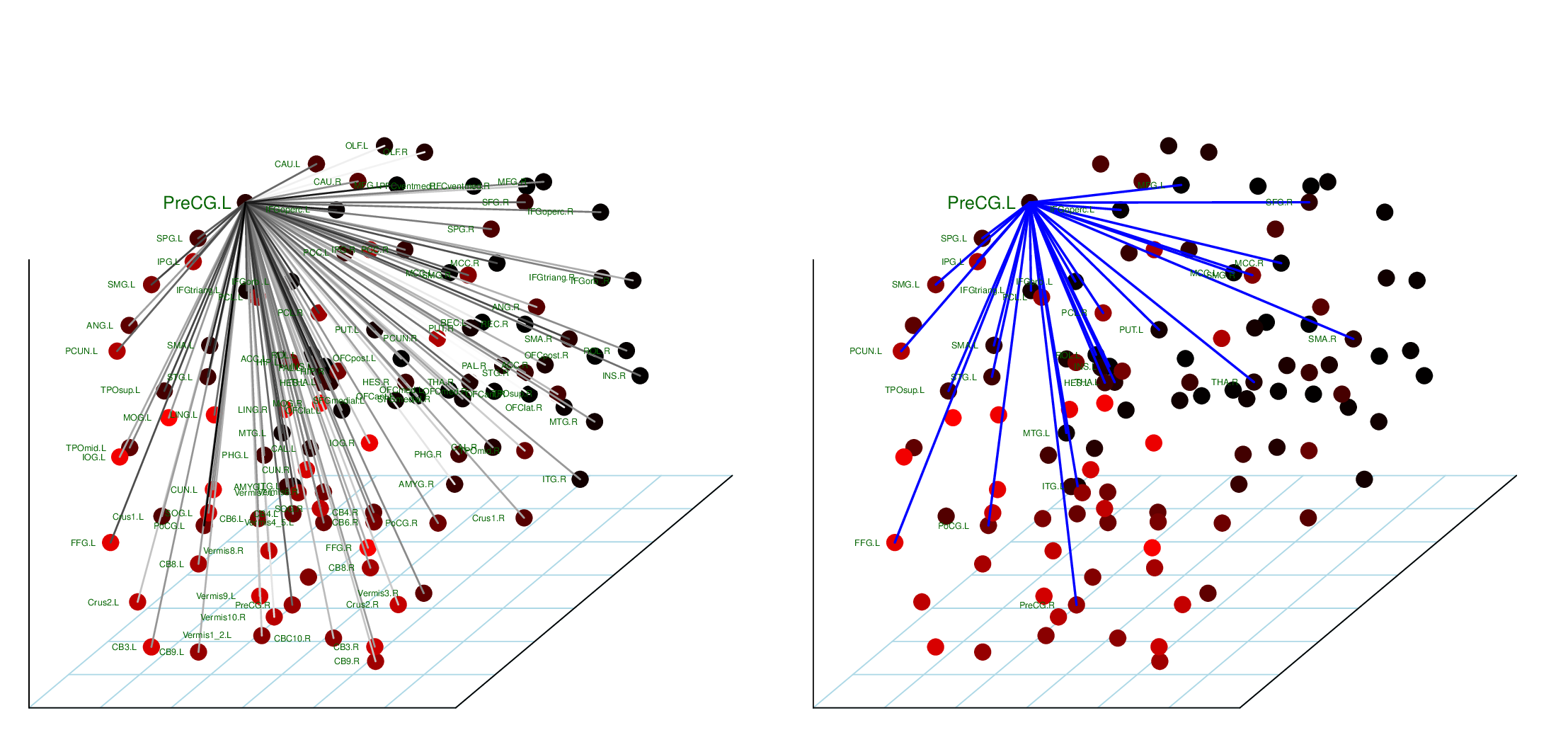}
    \caption{Each dot denotes a region from the AAL atlas, located using its corresponding MNI coordinates. The abbreviated region names are given next to each dot. We apply the ptFCE algorithm to estimate the ptFCs between region \texttt{PreCG.L} and each of the rest 116 regions. In the left panel, we use grayscale coloring of the edges to indicate the magnitude of ptFC between the corresponding two vertices; specifically, the larger a ptFC, the darker the line segment connecting the corresponding region pair. In the right panel, the presented blue line segments indicate the 30 largest ptFCs estimated by the ptFCE algorithm among all 116 regions.}
    \label{fig: 6plots12}
\end{figure}

\section{Analysis of HCP Motor Data}\label{section: applications}

In this section, we present an estimation of FC in a task-evoked functional MRI study using data from HCP. For comparison, we apply the ptFCE algorithm and existing methods to measure FC in the database of 308 subjects from HCP performing motor tasks. A detailed description of the HCP data is provided in Section \ref{section: Introduction}. We model squeezing right toes as the task of interest. Before the estimation step, we compute region-specific time courses using the AAL atlas \citep{tzourio2002automated} that consists of 120 brain regions. For each region, we extract the voxel-specific time series in that region and compute their spatial average for each time point. As a result, we obtain 120 time courses corresponding to 120 regions of interest. We select the left precentral gyrus (\texttt{PreCG.L}) as the seed region since it is located in the primary motor cortex and the motions of the right toes are associated with the left brain. We measure FC induced by $N(t)$ at a population level between the seed region and other regions using the following five approaches: the ptFCE algorithm, naive and task Pearson correlations, beta-series regression, and coherence analysis. In Section \ref{section: simulations}, the means $\hat{\rho}_{ij, mean}^{naiveCorr}$, $\hat{\rho}_{ij, mean}^{taskCorr}$, $\hat{\rho}_{ij, mean}^{betaS}$, and $\hat{\rho}^{Coh}_{ij, mean}$ tend to perform better than the corresponding medians. Hence, we show only the mean results in data analysis. Because of numerical issues, we omit regions \texttt{CB7.L}, \texttt{CB7.R}, and \texttt{CB10.L} and investigate the rest of the 117 regions. The detailed procedures for applying these approaches are described in Section \ref{section: simulations}.

Suppose $\mathcal{X}_{rj}$, for $r=1,\cdots,116$ and $j=1,\cdots,5$, are the estimated FC values between \texttt{PreCG.L} and other 116 regions computed using the referred five approaches indexed by $j$. Again, direct comparisons of methods are difficult as the estimations are defined on different scales \citep{cisler2014comparison}. Hence, we standardize the estimates by $\mathcal{X}_{r,j}^{(st)}=\frac{\mathcal{X}_{r,j}-\min\{\mathcal{X}_{r',j}\}_{r'=1}^{116}}{\max\{\mathcal{X}_{r',j}\}_{r'=1}^{116}-\min\{\mathcal{X}_{r',j}\}_{r'=1}^{116}}$,  for all $r$ and $j$, to enable comparisons. The standardized versions of the estimates are presented in Figure \ref{fig: Data Analysis Comparison}. To quantify the agreement between  estimation approaches implemented in this study, we use 0.5 as the threshold for standardized $\mathcal{X}_{r,j}^{(st)}$ to determine whether the node $r$ is connected to \texttt{PreCG.L} according to method $j$. Specifically, if $\mathcal{X}_{r,j}^{(st)}>0.5$, then method $j$ estimates that region $r$ is functionally connected to \texttt{PreCG.L}. Define $\mathbf{1}(\mathcal{X}_{r,j}^{(st)}>0.5)=:\vartheta_{r,j}$. As a result of applying each method $j$, we obtain the connectivity pattern $\{\vartheta_{r,j}\}_{r=1}^{116}$. Agreement between methods $j_1$ and $j_2$, for all $j_1,j_2=1,\cdots,5$, is measured by Cohen's kappa statistic denoted as $\kappa_{j_1, j_2}$ \citep{cohen1960coefficient} using classification results $\{\vartheta_{r,j_1}\}_{r=1}^{116}$ and $\{\vartheta_{r,j_2}\}_{r=1}^{116}$. The $\kappa_{j_1, j_2}$ along with p-values testing the null hypothesis indicating that the extent of agreement between each pair of methods is the same as random ($\kappa_{j_1, j_2}=0$) are presented in Table \ref{table: data comparison}. Using $\alpha=0.05$ and implementing multiple comparisons correction, we find that all methods have a significant agreement with each other.

The postcentral gyrus (\texttt{PoCG.L}) region is identified by all five methods as the region with strongest functional connectivity with the seed region. The estimated ptFCs indicate that the neural activity of the left precentral gyrus and that of left postcentral gyrus corresponding to squeezing right toes are highly correlated (either negatively or positively). Magnitudes of the neural activity corresponding to the task of interest in these two regions tend to be linearly dependent across the entire population. In addition to  modeling the advantages of our proposed ptFC approach described in Section \ref{section: advantages of the task-fc def}, we obtain differences between the estimates of ptFCE and those of competitor methods. We identify high connectivity between several regions and the seed region that is missed by competitor methods. Specifically, we obtain high task-evoked FC between the seed region and left/right thalamus (passing motor signals to the cerebral cortex), left/right paracentral lobule (motor nerve supply to the lower extremities), left superior temporal gyrus (containing the auditory cortex), and left Heschl gyrus (in the area of primary auditory cortex). These regions are related to the motor function or auditory cortex and can add to the existing results on functional connectivity between the precentral gyrus and the rest of the brain. 

To further visualize ptFCs induced by the task of interest, we apply MNI space coordinates of the 117 regions from the AAL atlas, where three-dimensional coordinates are obtained from \texttt{aal2.120} dataset in \texttt{R} package \texttt{brainGraph}. The regions are depicted by their MNI coordinates in Figure \ref{fig: 6plots12}. The grayscale shade of edges connecting each region to the seed region illustrates the estimated ptFC between the corresponding region and the seed region (Figure \ref{fig: 6plots12}, left panel), while the edges in the right panel of Figure \ref{fig: 6plots12} present the 30 highest ptFC values. Figure \ref{fig: 6plots12} shows that most of the large estimated ptFC values are in the left brain. This is expected since it is known that behaviors of extremities are functionally associated with contralateral brain regions \citep{nieuwenhuys2014central}.

\begin{table}
\centering
\caption{The Kappa statistics between each pair of methods are computed using the \texttt{R} function \texttt{Kappa.test} in package \texttt{fmsb}. The p-value of a Kappa statistic is presented in the parentheses below the statistic. }\label{table: data comparison}
\begin{tabular}{l|lllllllll}
\hline
Methods     & ptFCE &  & naive Persn             &  & task Persn             &  & beta series            &  & coherence              \\ \hline
ptFCE       & $*$   &  & $0.696$                 &  & $0.503$                &  & $0.260$                &  & $0.277$                \\
            &       &  & $(<0.0001)$ &  & $(<0.0001)$ &  & $(0.00636)$            &  & $(0.00402)$            \\
naive Persn & $*$   &  & $*$                     &  & $0.732$                &  & $0.391$                &  & $0.367$                \\
            &       &  &                         &  & $(<0.0001)$    &  & $(0.00022)$           &  & $(0.00051)$           \\
task Persn  & $*$   &  & $*$                     &  & $*$                    &  & $0.478$                &  & $0.497$                \\
            &       &  &                         &  &                        &  & $(<0.0001)$ &  & $(<0.0001)$ \\
beta series & $*$   &  & $*$                     &  & $*$                    &  & $*$                    &  & $0.961$                \\
            &       &  &                         &  &                        &  &                        &  & $(<0.0001)$ \\
coherence   & $*$   &  & $*$                     &  & $*$                    &  & $*$                    &  & $*$                    \\ \hline
\end{tabular}
\end{table}


\section{Conclusions}\label{section: Conclusions and Further Discussions}

In this paper, we introduce a model for task-evoked BOLD signals in fMRI studies. Building upon this model, we define a measure called ptFC, which characterizes task-evoked functional connectivity at the population level. We also address the limitations of several existing methods in the literature. We have developed the ptFCE algorithm for estimating the ptFC, demonstrating its computational efficiency. Furthermore, we provide comprehensive theoretical results regarding the properties of our proposed estimation procedure. Through simulation studies, we illustrate the superior performance of our proposed ptFCE algorithm when dealing with data generated by block-design task-fMRI data-generating mechanisms. We employ the ptFCE algorithm to estimate the ptFCs in a publicly available motor-task data set from the HCP. Notably, the results derived using our ptFCE algorithm exhibit functional connectivity patterns similar to widely used existing methods. In addition, ptFCE identifies functional connections between nodes that existing methods fail to detect. 

We plan to explore the inclusion of interaction terms such as $P_k(\omega;t)Q_k(\omega;t)$ (see Appendix H). In addition, we recognize the need for developing computationally efficient techniques to obtain variance estimates for the ptFCE algorithm. Lastly, persistent homology (PH) has been applied in brain network analyses for over a decade \citep{lee2011discriminative, li2023tree, wang2023topological}. The application of PH to our proposed ptFC is a topic for future research.

\begin{figure}[h]
    \centering
    \includegraphics[scale=0.52, angle=270]{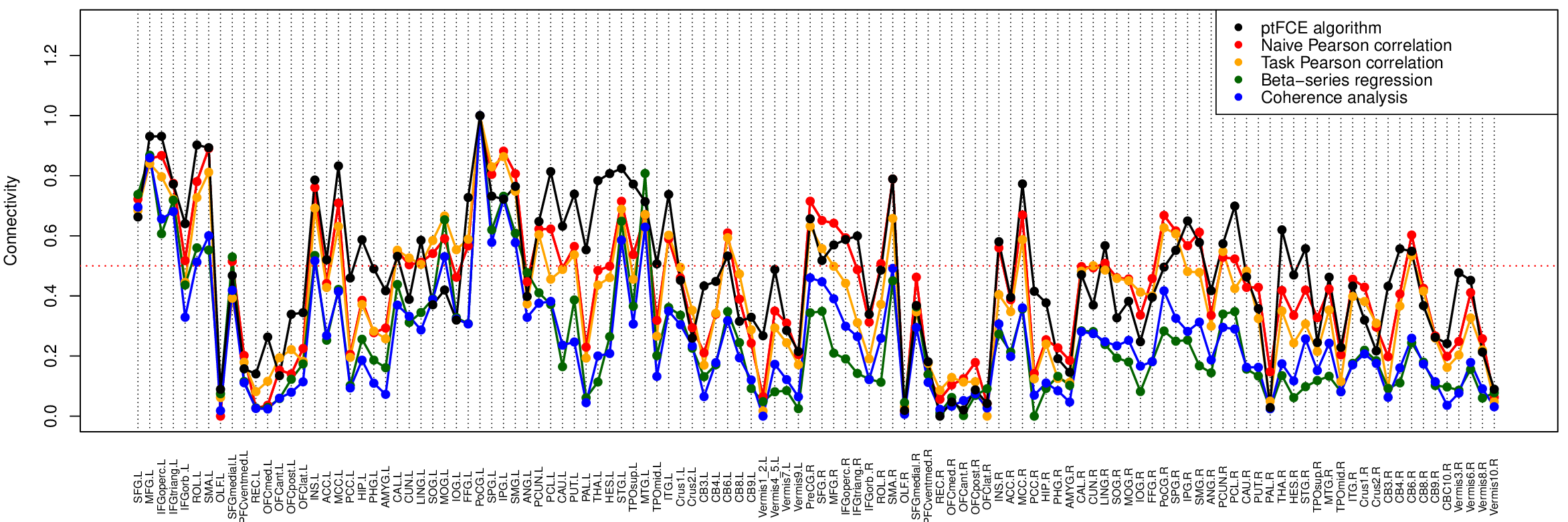}
    \caption{Illustration of estimation results from five FC estimation methods. The horizontal axis indicates the $116$ regions compared to \texttt{PreCG.L}. The abbreviations of region names are provided in the data set \texttt{aal2.120} in the \texttt{R} package \texttt{brainGraph}. The vertical axis presents the standardized connectivity measurements $\mathcal{X}_{r,j}^{(st)}$ between each region and the seed region \texttt{PreCG.R}. The red dotted horizontal line indicates the threshold 0.5 implemented for determining the connected/nonconnected relationships between \texttt{PreCG.L} and other regions.}
    \label{fig: Data Analysis Comparison}
\end{figure}

\section*{Data Availability Statement}

\noindent The data supporting the findings in Section \ref{section: applications} are publicly available on the HCP website: \\{\footnotesize\url{https://protocols.humanconnectome.org/HCP/3T/task-fMRI-protocol-details.html}} \citep{barch2013function}.\\ The \texttt{R} code for simulation studies and data analyses is available at \\ {\footnotesize \url{https://github.com/KMengBrown/Population-level-Task-evoked-Functional-Connectivity.git}}.


\clearpage
\newpage

\begin{center}
    \Large{\textbf{Appendix --- Supporting Materials for\\ ``Population-level Task-evoked Functional Connectivity\\ via Fourier Analysis"}}
\end{center}

\begin{appendix}
%
%

\section{Relationship between the proposed BOLD signal model and some existing models}\label{section: Relationship between the proposed BOLD signal model and some existing models}

In this paper, we implement the following model for observed task-fMRI BOLD signals $Y_k(\omega;t)$, for $\omega\in\Omega$ and $k\in\{1,\cdots,K\}$.
\begin{align}\label{app eq: BOLD model in our paper}
    & Y_k(\omega;t)=P_k(\omega;t)+Q_k(\omega;t)+\epsilon_k(\omega;t),\ \ \mbox{with}\\
    \notag & P_k(\omega;t)=\beta_k(\omega)\times N*h_k(t-t_{0,k}),
\end{align}
where $P_k(\omega;t)$ denotes the BOLD signal component stemming solely from the task $N(t)$ of interest, $Q_k(\omega;t)$ denotes spontaneous neural activity component and neural activity responding to nuisance tasks, and $\epsilon_k(\omega;t)$ is random error. Additionally, $t_{0,k}$ denotes the population shared latency and is usually smaller than the experimental unit \citep[e.g., see][p 138]{zhang2013semi}. The HRF shared by the whole population is $h_k(t)$. Here, we explore three existing models for BOLD signals. We show that these models are special cases of our proposed BOLD signal model (\ref{app eq: BOLD model in our paper}). 

\noindent {\bf Example 1:} Using the notations in our paper, the BOLD signal model implemented by \cite{zhang2013semi} can be represented as follows.
\begin{align}\label{app eq: BOLD in Zhang (2013)}
    Y_k(\omega;t)=X_k(\omega;t)^T d(\omega) + \sum_{\gamma=1}^\Gamma \beta_{k,\gamma}(\omega)\times \left(N_\gamma * h_{k,\gamma}\right)(t)+\epsilon_k(\omega;t),
\end{align}
where $N_\gamma(t)$ are task stimulus signals, $h_{k, \gamma}(t)$ are corresponding HRFs, and $X_k(\omega;t)^T d(\omega)$ characterizes the BOLD signal components stemming from other known sources, e.g., respiration and heartbeat. The BOLD signal model proposed by \cite{zhang2013semi} includes participant-dependent latency times. These latency times are essentially modeled as zero, since the values are usually much smaller than the corresponding experimental time unit. 

Suppose we are interested in the first experimental task $N_1(t)$. Define the following.
\begin{align}\label{app eq: notional changes}
    & \beta_k(\omega):=\beta_{k,1}(\omega),\ \ N(t):=N_1(t),\ \ P_k(\omega;t)=\beta_k(\omega)\times \left(N*h_k\right)(t), \\ 
    \notag & Q_k(\omega;t):=X_k(\omega;t)^T d(\omega) + \sum_{\gamma=2}^\Gamma \beta_{k,\gamma}(\omega)\times \left(N_\gamma * h_{k,\gamma}\right)(t).
\end{align}
Then (\ref{app eq: BOLD in Zhang (2013)}) is equivalent to the model (\ref{app eq: BOLD model in our paper}) with latency $t_{0,k}=0$, and there is no interaction term $P_k(\omega;t)Q_k(\omega;t)$.

\noindent {\bf Example 2:} Using the notations in our paper, the BOLD signal model implemented by \cite{warnick2018bayesian} is represented as follows \citep[see][Equations (1) and (2)]{warnick2018bayesian}.
\begin{align}\label{app eq: BOLD in Warnick (2018)}
    Y_k(\omega;t)=\mu_k(\omega)+\sum_{\gamma=1}^\Gamma \beta_{k,\gamma}(\omega)\times \left(N_\gamma*h_{k,\gamma}\right)(t)+\epsilon_k(\omega;t),
\end{align}
where $N_\gamma(t)$ are task stimulus signals, $h_{k,\gamma}(t)$ are corresponding HRFs, $\epsilon_k(\omega;t)$ are random error, and $\mu_k(\omega)$ present the baseline. Suppose we are interested in task $N_1(t)$. We apply the transform (\ref{app eq: notional changes}) and define $Q_k(\omega;t)=\mu_k(\omega)+\sum_{\gamma=2}^\Gamma \beta_{k,\gamma}(\omega)\times \left(N_\gamma * h_{k,\gamma}\right)(t)$. Then model (\ref{app eq: BOLD in Warnick (2018)}) is equivalent to our proposed model (\ref{app eq: BOLD model in our paper}) with latency $t_{0,k}=0$, without an interaction term $P_k(\omega;t)Q_k(\omega;t)$.

\noindent {\bf Example 3:} Motivated by the \textit{independent component analysis} framework, \cite{joel2011relationship} implemented the following model for task-fMRI BOLD signals.
\begin{align}\label{app eq: BOLD model Joel (2011)}
    \notag Y_k(\omega;t)=& M_{m,k}(\omega)\left\{\beta_{tm}(\omega)\times N*h_k(t)+\beta_{im}(\omega)\iota_{m}(t)\right\} \\
    \notag & + M_{v,k}(\omega)\left\{\beta_{tv}(\omega)\times N*h_k(t)+\beta_{iv}(\omega)\iota_{v}(t)\right\} \\
    & +M_b(\omega) \beta_b(\omega)\gamma(\omega;t),
\end{align}
where $M_{(\cdot),k}$ is the spatial mask at the $k^{th}$ node for visual cortex ($M_{v,k}$), motor cortex ($M_{m,k}$) or the whole brain ($M_{b}$), $N(t)$ is stimulus signal corresponding to the task of interest, $h_k(t)$ is an HRF, $\iota_{(\cdot)}$ is the intrinsic activity of the corresponding cortex, $\gamma(\omega;t)$ presents random noise, and $\beta_{tm}(\omega), \beta_{im}(\omega), \beta_{tv}(\omega), \beta_{iv}(\omega), \beta_{b}(\omega)$ are the weights of motor task, intrinsic motor activity, visual task, intrinsic visual activity and noise, respectively.

Since the task of interest in the HCP data is a motor task, we define the following notations. 
\begin{align*}
    & \beta_k(\omega):=M_{m,k}(\omega)\times \beta_{tm}(\omega), \\
    & Q_k(\omega):=M_{m,k}(\omega)\beta_{im}(\omega)\iota_m(t) + M_{v,k}(\omega)\left\{\beta_{tv}(\omega)\times N*h_k(t)+\beta_{iv}(\omega)\iota_{v}(t)\right\}, \\
    & \epsilon_k(\omega;t)= M_b(\omega) \beta_b(\omega)\gamma(\omega;t).
\end{align*}
Using the notations above, model (\ref{app eq: BOLD model Joel (2011)}) is the same as model (\ref{app eq: BOLD model in our paper}) with latency $t_{0,k}=0$, and there is no interaction term $P_k(\omega;t)Q_k(\omega;t)$.

\section{Lemmas, Theorems, and Their Proofs}\label{section: Lemmas, Theorems, and Their Proofs}

In this section, we provide proofs of theorems and lemmas presented in our paper. Throughout this section, the time index collection $\mathcal{T}$ denotes $\{\tau\Delta\}_{\tau=0}^T$, and $\Omega$ is a discrete and finite set. 

\begin{lemma}\label{lemma: zero points}
If $f: \mathcal{T}\rightarrow\R$ is not constant zero, its Fourier transform $\widehat{f}(\xi)$ has at most finitely many zero points (i.e., points $\xi\in\R$ such that $\widehat{f}(\xi)=0$) in any compact subset of $\R$.
\end{lemma}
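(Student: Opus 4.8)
The plan is to reduce the statement to the elementary fact that a nonzero polynomial has only finitely many roots. Recall that, by definition, $\widehat{f}(\xi) = \frac{1}{T+1}\sum_{\tau=0}^{T} f(\tau\Delta)\, e^{-2\pi i \xi \tau \Delta}$ is a finite exponential sum in $\xi$. First I would substitute $z = e^{-2\pi i \Delta \xi}$ and observe that $\widehat{f}$ becomes $p(z) := \frac{1}{T+1}\sum_{\tau=0}^{T} f(\tau\Delta)\, z^\tau$, a polynomial in $z$ of degree at most $T$. The hypothesis that $f$ is not identically zero means that at least one coefficient $f(\tau\Delta)$ is nonzero, so $p$ is not the zero polynomial.

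Next I would invoke the fundamental theorem of algebra: $p$ has at most $T$ complex roots. The zeros of $\widehat{f}$ in the variable $\xi$ correspond precisely to those $\xi\in\R$ for which $e^{-2\pi i \Delta\xi}$ is a root of $p$. Since $|e^{-2\pi i \Delta \xi}| = 1$ for all real $\xi$, only the roots of $p$ lying on the unit circle are relevant, and roots off the unit circle contribute no real $\xi$. For each unit-circle root $z_0 = e^{-2\pi i \Delta \xi_0}$, the full preimage under the $(1/\Delta)$-periodic map $\xi \mapsto e^{-2\pi i \Delta \xi}$ is the arithmetic progression $\{\xi_0 + m/\Delta : m \in \Z\}$, whose successive points are spaced $1/\Delta$ apart.

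Finally, given any compact $K \subset \R$, which in particular is bounded, each such arithmetic progression meets $K$ in only finitely many points, and there are at most $T$ relevant roots $z_0$. Summing finitely many finite sets shows that $\widehat{f}$ has at most finitely many zeros in $K$, as claimed.

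I expect no substantive obstacle here. The only points requiring a word of care are the elementary observations that $f \not\equiv 0$ forces $p \not\equiv 0$ (equivalently, that the distinct frequencies yield linearly independent exponentials), and that roots of $p$ off the unit circle are harmless because they are never attained by $e^{-2\pi i \Delta\xi}$ for real $\xi$. Alternatively, one could argue directly that $\widehat{f}$ extends to a nonzero entire function of $\xi$—being a finite linear combination of the entire functions $e^{-2\pi i \tau\Delta\xi}$—and then invoke the identity theorem to conclude that its zeros are isolated, hence finite in number on any compact set; the polynomial reduction is simply the more self-contained route and is the one I would write up.
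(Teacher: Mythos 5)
Your proof is correct, but it takes a genuinely different route from the paper. The paper extends $\widehat{f}$ to the complex function $\Phi_f(z)=\frac{1}{T+1}\sum_{\tau=0}^T f(\tau\Delta)e^{-2\pi z(\tau\Delta)}$, verifies the Cauchy--Riemann equations, invokes the Looman--Menchoff theorem to get holomorphy, and then uses the fact that zeros of a (nonzero) holomorphic function are isolated to conclude finiteness on compact sets --- i.e.\ exactly the ``alternative'' you sketch in your last sentence. Your main argument instead substitutes $z=e^{-2\pi i\Delta\xi}$, reduces $\widehat{f}$ to the polynomial $p(z)=\frac{1}{T+1}\sum_{\tau=0}^T f(\tau\Delta)z^\tau$, and counts roots. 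This is more elementary (no complex analysis beyond the fundamental theorem of algebra, and Looman--Menchoff is heavy machinery for what is visibly an entire function) and it is quantitative: you get that the zero set of $\widehat{f}$ is contained in at most $T$ arithmetic progressions of common difference $1/\Delta$, hence at most $T$ zeros per period, which is sharper than the bare finiteness the paper's argument yields. The one place your write-up is slightly off is the parenthetical claiming that $f\not\equiv 0\Rightarrow p\not\equiv 0$ rests on linear independence of the exponentials: it does not --- the coefficients of $p$ \emph{are} the values $f(\tau\Delta)$, so this implication is immediate. (Linear independence of the exponentials is what the paper's route implicitly needs in order to assert that $\Phi_f$ is not identically zero before applying the isolated-zeros theorem; your reduction sidesteps that issue entirely.) The handling of roots off the unit circle and the intersection of each progression with a bounded set are both fine.
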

\begin{proof}
Define the complex function $\Phi_f(z):=\frac{1}{T+1}\sum_{\tau=0}^T f(\tau\Delta)e^{-2\pi z (\tau\Delta)}$, for all $z=\eta+i\xi\in\mathbb{C}$. Since it is straightforward that $\Phi_f(z)$ satisfies the \textit{Cauchy-Riemann} equation $\frac{\partial}{\partial\overline{z}}\Phi_f(z)=0$, where $\frac{\partial}{\partial\overline{z}}=\frac{1}{2}(\frac{\partial}{\partial \eta}+ i \frac{\partial}{\partial \xi})$ is a Wirtinger derivative, the Looman–Menchoff theorem implies that $\Phi_f(z)$ is a holomorphic function. Then the zero points of $\Phi_f(z)$ are isolated, i.e., every zero point has a neighbourhood that does not contain any other zero point \citep[][Theorem 3.7]{conway1973functions}. Therefore, $\widehat{f}(\xi)=\Phi_f(i\xi)$ implies the desired result.   
\end{proof}

\begin{theorem}\label{thm: estimation theorem}
Suppose signals $\{Y_k(\omega;t)\vert t\in\mathcal{T}\}_{k=1}^K$, for $\omega\in\Omega$, are defined as in (\ref{app eq: BOLD model in our paper}), $\mathbb{E}\beta_k=\mathbb{E}R_k(t)=0$, and $t_{0,k}=\tau_{0,k}\Delta$ with some $\tau_{0,k}\in\mathbb{Z}$ for all $k=1,2,\cdots,K$ and $t\in\mathcal{T}$. Let the random variable $U:\Omega\rightarrow\mathcal{T}$ be uniformly distributed on $\mathcal{T}$. Furthermore, we assume that $U(\omega)$, $\{\beta_k(\omega)\}_{k=1}^K$, and $\{R_k(\omega;t)\vert t\in\mathcal{T}\}_{k=1}^K$ are independent. Then, the autocovariance differences $\mathcal{A}_{kl}(s) :=  \mathbb{E}\left\{ Y_k(t-U) Y_l(t+s-U) \right\} -\mathbb{E}\left\{ R_k(t-U)R_l(t+s-U)\right\}$ depend only on $s$, the Fourier transforms of $\mathcal{A}(s)$ are
\begin{align*}
    \widehat{\mathcal{A}}_{kl}(\xi)=\mathbb{E}\left( \beta_k \beta_l\right) \cdot \left\vert \widehat{N}(\xi) \right\vert^2 \cdot \overline{\widehat{h}_k(\xi)}\,\cdot\, \widehat{h}_l(\xi)\,\cdot\, e^{2\pi i (t_{0,k}-t_{0,l})\xi}.
\end{align*}
Furthermore, we have the following representation.
\begin{align*}
    \mathcal{C}_{kl}(\xi):= \vert \widehat{\mathcal{A}}_{kl}(\xi)\vert \Big/ \sqrt{  \vert \widehat{\mathcal{A}}_{kk}(\xi) \widehat{\mathcal{A}}_{ll}(\xi)  \vert  } = \left\vert corr\left(\beta_k, \beta_l\right)\right\vert, \mbox{ for all }\xi\in\R.
\end{align*}
\end{theorem}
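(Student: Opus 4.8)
The plan is to reduce the autocovariance difference to the autocovariance of the task-evoked part alone, use the artificial uniform shift $U$ to eliminate the $t$-dependence, and then read off the Fourier transform from the convolution and shift theorems. First I would substitute $Y_{k'}=P_{k'}+R_{k'}$ into $\mathbb{E}\{Y_k(t-U)Y_l(t+s-U)\}$ and expand into four terms. The two cross terms $\mathbb{E}\{P_k(t-U)R_l(t+s-U)\}$ and $\mathbb{E}\{R_k(t-U)P_l(t+s-U)\}$ vanish: writing $P_{k'}(t-U)=\beta_{k'}\,(N*h_{k'})(t-U-t_{0,k'})$ and using that $\beta_{k'}$ is independent of the pair $(U,\{R_j\})$ with $\mathbb{E}\beta_{k'}=0$, each cross term factors as $\mathbb{E}\beta_{k'}$ times a finite expectation, hence is zero. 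The remaining $\mathbb{E}\{R_k(t-U)R_l(t+s-U)\}$ term is precisely what the definition of $\mathcal{A}_{kl}(s)$ subtracts, so $\mathcal{A}_{kl}(s)=\mathbb{E}\{P_k(t-U)P_l(t+s-U)\}$.

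Next I would compute this surviving quantity as $\mathbb{E}(\beta_k\beta_l)\cdot\mathbb{E}_U[(N*h_k)(t-U-t_{0,k})(N*h_l)(t+s-U-t_{0,l})]$, pulling out $(\beta_k,\beta_l)$ by independence from $U$. Since $U$ is uniform on $\{\tau\Delta\}_{\tau=0}^{T}$, the $U$-average is $\frac{1}{T+1}\sum_{u=0}^{T}(N*h_k)(t-u\Delta-t_{0,k})(N*h_l)(t+s-u\Delta-t_{0,l})$. Here I would invoke that every function has been periodically extended with period $T+1$ via (\ref{eq: periodic extension}) and that $u$ runs over a full period: a change of summation variable $v\equiv(\text{index of }t)-u \pmod{T+1}$ re-centers the sum and removes $t$, leaving a circular cross-correlation of $(N*h_k)(\cdot-t_{0,k})$ and $(N*h_l)(\cdot-t_{0,l})$ that depends only on $s$. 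This is the crux and the main obstacle, since it is exactly the mechanism by which the injected uniform shift $U$ turns the non-stationary task signal into a circularly stationary process for which an autocovariance indexed by $s$ alone is well defined; the rest is mechanical Fourier analysis.

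Then I would Fourier transform the cross-correlation in $s$. Setting $\phi_{k'}(\tau\Delta):=(N*h_{k'})(\tau\Delta-t_{0,k'})$, a direct double-sum rearrangement gives $\widehat{\mathcal{A}}_{kl}(\xi)=\mathbb{E}(\beta_k\beta_l)\,\overline{\widehat{\phi}_k(\xi)}\,\widehat{\phi}_l(\xi)$, using that each $\phi_{k'}$ is real. The integer-shift hypothesis $t_{0,k'}=\tau_{0,k'}\Delta$ yields $\widehat{\phi}_{k'}(\xi)=e^{-2\pi i\xi t_{0,k'}}\,\widehat{N*h_{k'}}(\xi)$, and the convolution theorem under the normalization $(N*h_{k'})(\tau\Delta)=\frac{1}{T+1}\sum_{\tau'}N(\tau'\Delta)h_{k'}((\tau-\tau')\Delta)$ gives $\widehat{N*h_{k'}}(\xi)=\widehat{N}(\xi)\widehat{h}_{k'}(\xi)$. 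Substituting produces $\widehat{\mathcal{A}}_{kl}(\xi)=\mathbb{E}(\beta_k\beta_l)\,\lvert\widehat{N}(\xi)\rvert^{2}\,\overline{\widehat{h}_k(\xi)}\,\widehat{h}_l(\xi)\,e^{2\pi i(t_{0,k}-t_{0,l})\xi}$, as stated.

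Finally I would assemble $\mathcal{C}_{kl}(\xi)$. Taking moduli, $\lvert\widehat{\mathcal{A}}_{kl}(\xi)\rvert=\lvert\mathbb{E}(\beta_k\beta_l)\rvert\,\lvert\widehat{N}(\xi)\rvert^{2}\,\lvert\widehat{h}_k(\xi)\rvert\,\lvert\widehat{h}_l(\xi)\rvert$, while $\lvert\widehat{\mathcal{A}}_{k'k'}(\xi)\rvert=\mathbb{E}(\beta_{k'}^{2})\,\lvert\widehat{N}(\xi)\rvert^{2}\,\lvert\widehat{h}_{k'}(\xi)\rvert^{2}$ (the phase has unit modulus and $\mathbb{E}\beta_{k'}^2\ge 0$). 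The common nuisance factor $\lvert\widehat{N}(\xi)\rvert^{2}\lvert\widehat{h}_k(\xi)\rvert\lvert\widehat{h}_l(\xi)\rvert$ cancels in the ratio, leaving $\mathcal{C}_{kl}(\xi)=\lvert\mathbb{E}(\beta_k\beta_l)\rvert/\sqrt{\mathbb{E}(\beta_k^2)\mathbb{E}(\beta_l^2)}=\lvert corr(\beta_k,\beta_l)\rvert$, where the last equality uses $\mathbb{E}\beta_k=\mathbb{E}\beta_l=0$. The cancellation is valid wherever the denominator is nonzero, and by Lemma \ref{lemma: zero points} the factors $\widehat{N},\widehat{h}_k,\widehat{h}_l$ vanish at only finitely many $\xi$ in each compact set, so the substitute equation holds for all but finitely many $\xi$ per period.
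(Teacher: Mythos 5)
Your proposal is correct and follows essentially the same route as the paper's proof: kill the cross terms via independence and $\mathbb{E}\beta_{k'}=0$, use the uniform shift $U$ together with the periodic extension to reduce the $U$-average to a circular cross-correlation depending only on $s$, and then apply the shift and convolution theorems before cancelling the common nuisance factor in $\mathcal{C}_{kl}(\xi)$. You are slightly more explicit than the paper about the vanishing of the cross terms and about the finitely many frequencies where the denominator vanishes (which the paper relegates to its Web Lemma 1), but these are presentational differences, not a different argument.
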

\begin{proof}
The independence between $U(\omega)$, $\{\beta_k(\omega)\}_{k=1}^K$, and $\{R_k(\omega;t)\vert t\in\mathcal{T}\}_{k=1}^K$ implies
\begin{align}\label{app eq: convolution calculation}
\notag & \mathbb{E}[Y_k(t-U) Y_l(t+s-U)] - \mathbb{E}[R_k(t-U) R_l(t+s-U)]\\ 
\notag & = \mathbb{E}(\beta_k\beta_l)\times \mathbb{E}\left[ (N*h_k)(t-t_{0,k}-U) \times (N*h_l)(t+s-t_{0,l}-U)\right]\\ \notag
& = \mathbb{E}(\beta_k \beta_l) \times \frac{1}{T+1} \sum_{u=0}^T \Big\{(N*h_k)\left((\tau-\tau_{0,k}-u)\Delta\right)\times (N*h_l)\left((\underline{s}+\tau_{0,k}-\tau_{0,l})\Delta+(\tau-\tau_{0,k}-u)\Delta\right)\Big\}\\ \notag
& = \mathbb{E}(\beta_k \beta_l) \times \frac{1}{T+1} \sum_{v=-(\tau-\tau_{0,k})}^{T-(\tau-\tau_{0,k})} \Big\{(N*h_k)\left(-v\Delta\right)\times (N*h_l)\left((\underline{s}+\tau_{0,k}-\tau_{0,l})\Delta-v\Delta\right)\Big\}\\ 
& = \mathbb{E}(\beta_k \beta_l) \times \left[N*h_k(-\cdot)\right]*\left[N*h_l\right]\left(s+(t_{0,k}-t_{0,l})\right),
\end{align}
which only depends on $s$ and does not depend on $t$. Here, the last equality follows from the periodic extension and the definition of convolution. Then, we have the following Fourier transform,
\begin{align*}
\widehat{\mathcal{A}}_{kl}(\xi)& = \mathbb{E}(\beta_k \beta_l) \times \overline{\widehat{(N*h_k)}(\xi)} \widehat{(N*h_l)}(\xi)e^{2\pi i (t_{0,k}-t_{0,l})\xi}\\
& =\mathbb{E}(\beta_k \beta_l) \times \left\vert\widehat{N}(\xi)\right\vert^2\overline{\widehat{h_k}(\xi)}\widehat{h_l}(\xi) e^{2\pi i (t_{0,k}-t_{0,l})\xi},
\end{align*}
for all $\xi\in\R$, which implies $\mathcal{C}_{kl}(\xi) = \left\vert corr\left(\beta_k, \beta_l\right)\right\vert$ for all $\xi\in\R$.  
\end{proof}

\begin{lemma}\label{lemma: stationarity with U} 
The vector-valued stochastic process $\pmb{Z}(\omega;t)=\{(Z_1(\omega;t), Z_2(\omega;t), \cdots, Z_K(\omega;t))^T\}_{t\in\mathcal{T}}$ is defined on probability space $(\Omega, 2^{\Omega}, \mathbb{P})$ and is weakly stationary with mean zero, $U:\Omega\rightarrow\mathcal{T}$ is uniformly distributed, and $U(\omega)$ and $\pmb{Z}(\omega;t)$ are independent. Then the vector-valued stochastic process $\pmb{Z}(\omega;t-U(\omega))=\{(Z_1(\omega;t-U(\omega)), Z_2(\omega;t-U(\omega)), \cdots, Z_K(\omega;t-U(\omega)))^T\}_{t\in\mathcal{T}}$ is weakly stationary with mean zero as well. Additionally, $\mathbb{E}[Z_k(t-U)Z_l(t+s-U)]=\mathbb{E}[Z_k(0)Z_l(s)]$, for all $k,l=1,2,\cdots,K$.
\end{lemma}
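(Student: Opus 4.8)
The plan is to prove both assertions by the same device: conditioning on the value of the time shift $U$ and exploiting its independence from $\pmb{Z}$. Because $\Omega$ is discrete and finite and $U$ is uniform on $\mathcal{T}=\{\tau\Delta\}_{\tau=0}^T$, every expectation here is a finite sum, so I can always write $\mathbb{E}[\,\cdot\,]=\sum_{\tau=0}^T \mathbb{P}(U=\tau\Delta)\,\mathbb{E}[\,\cdot\mid U=\tau\Delta\,]$ with $\mathbb{P}(U=\tau\Delta)=1/(T+1)$. Throughout, $Z_k$ is understood through the periodic extension (\ref{eq: periodic extension}), so that evaluations such as $Z_k(t-\tau\Delta)$ at indices outside $\{0,\dots,T\}$ remain well-defined.

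First I would establish the mean-zero property. Partitioning $\mathbb{E}[Z_k(t-U)]$ over the values of $U$ and using the independence of $U$ and $\pmb{Z}$ to drop the conditioning gives $\mathbb{E}[Z_k(t-U)]=\frac{1}{T+1}\sum_{\tau=0}^T \mathbb{E}[Z_k(t-\tau\Delta)]$. Since $\pmb{Z}$ is WSMZ, each term $\mathbb{E}[Z_k(t-\tau\Delta)]$ vanishes, so the shifted process has mean zero.

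The autocovariance computation proceeds identically. After the same partition and the same use of independence, I obtain $\mathbb{E}[Z_k(t-U)Z_l(t+s-U)]=\frac{1}{T+1}\sum_{\tau=0}^T \mathbb{E}[Z_k(t-\tau\Delta)Z_l(t+s-\tau\Delta)]$. The decisive observation is that inside each summand the lag between the two time arguments is $(t+s-\tau\Delta)-(t-\tau\Delta)=s$, which is free of both $\tau$ and $t$; hence weak stationarity of $\pmb{Z}$ forces every summand to equal the single quantity $\mathbb{E}[Z_k(0)Z_l(s)]$. Averaging a constant over the uniform $U$ returns that same constant, so $\mathbb{E}[Z_k(t-U)Z_l(t+s-U)]=\mathbb{E}[Z_k(0)Z_l(s)]$. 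This value depends on $s$ alone, which is exactly the weak-stationarity claim for the shifted process, and it is simultaneously the displayed identity.

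The only step demanding care is the conditioning itself: because the time argument $t-U(\omega)$ is random, one cannot apply the stationarity of $\pmb{Z}$ directly to $Z_k(t-U)$. The correct route is to first detach $U$ from $\pmb{Z}$ by partitioning over the finitely many values of $U$ and invoking independence, and only then to apply weak stationarity for each fixed shift $\tau\Delta$. Once this ordering is respected, the finiteness of $\Omega$ and the periodic extension remove any remaining subtleties, so the obstacle is conceptual rather than computational.
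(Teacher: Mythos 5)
Your proof is correct and follows essentially the same route as the paper's: condition on (equivalently, partition over) the value of $U$, use independence of $U$ and $\pmb{Z}$ to replace the conditional expectation by $\mathbb{E}[Z_k(t-u)Z_l(t+s-u)]$ for each fixed $u$, invoke weak stationarity to reduce each term to $\mathbb{E}[Z_k(0)Z_l(s)]$, and average the resulting constant over $U$. The paper writes this as an integral against $\mu=\mathbb{P}\circ U^{-1}$ rather than an explicit finite sum, but that is only a notational difference.
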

\begin{proof}
Let $\mu$ be the probability measure on $\mathcal{T}$ associated with the uniform random variable $U:\Omega\rightarrow\mathcal{T}$, i.e., $\mu=\mathbb{P}\circ U^{-1}$, then we have $\mathbb{E}\{Z_k(t-U)\}=\int_{\mathcal{T}}\mathbb{E}\{Z_k(t-U)\vert U=u\}\mu(du)$. The independence between $U$ and $\pmb{Z}$ implies $\mathbb{E}\{Z_k(t-U)\vert U=u\}=\mathbb{E}\{Z_k(t-u)\}=0$ for all $t$, which results in $\mathbb{E}\{Z_k(t-U)\}=0$ for all $t\in\mathcal{T}$. Similarly, we have
\begin{align*}
    \mathbb{E}\left\{Z_k(t-U)Z_l(t+s-U)\right\}&=\int_{\mathcal{T}} \mathbb{E}\left\{Z_k(t-U)Z_l(t+s-U)\vert U=u\right\}\mu(du)\\
    &=\int_{\mathcal{T}} \mathbb{E}\left\{Z_k(t-u)Z_l(t+s-u)\right\}\mu(du)\\
    &=\int_{\mathcal{T}} \mathbb{E}\left\{Z_k(0)Z_l(s)\right\}\mu(du)\\
    &=\mathbb{E}\left\{Z_k(0)Z_l(s)\right\},
\end{align*}
This completes the proof of the desired result.  
\end{proof}

\begin{theorem}\label{theorem: asymptotic bias}
Suppose we have the following random vectors and stochastic processes.
\begin{itemize}
    \item $\mathfrak{B}:=\left\{(\beta_k(\omega), \beta_l(\omega))\right\}_{\omega}^n$ are $n$ i.i.d. random vectors.
    \item $\mathfrak{R}:=\left\{\left(\tilde{R}_k(\omega;t), \tilde{R}_l(\omega;t)\right)\vert t\in\mathcal{T}\right\}_{\omega=1}^n$ are $n$ i.i.d. stochastic processes.
    \item $\mathfrak{W}:=\left\{\left(W_k(\omega;t), W_l(\omega;t)\right)\vert t\in\mathcal{T}\right\}_{\omega=1}^n$ are $n$ i.i.d. stochastic processes satisfying (i) random vectors $\left(W_k(\omega;t_1), W_l(\omega;t_1)\right)$ and $\left(W_k(\omega;t_2), W_l(\omega;t_2)\right)$ are independent whenever $t_1\ne t_2$, (ii) the stochastic process $\{\left(W_k(\omega;t), W_l(\omega;t)\right)\vert t\in\mathcal{T}\}$ is weakly stationary with mean zero, and (iii) $\Sigma_{kl}=\mathbb{E}\left\{W_k(t)W_l(t)\right\}$ for all $t\in\mathcal{T}$.
\end{itemize}
The collections $\mathfrak{B}, \mathfrak{R}, \mathfrak{W}$ are independent. Furthermore, we have the following task-fMRI BOLD signals.$^\dagger$ \footnote{$\dagger$: Observed task-fMRI BOLD signals are $Y_k(\omega;t)=\beta_k(\omega)\times N*h_k(t-t_{0,k})+R_k(\omega;t)$, where $R_k(\omega;t)$ are true reference signals. However, the true reference signals are not observable in applications and should be estimated by $\tilde{R}_k(\omega;t)$ using the AMUSE algorithm. The corresponding estimation bias $R_k(\omega;t)-\tilde{R}_k(\omega;t)$ is denoted by $W_k(\omega;t)$. Hence, we have $Y_k(\omega;t)=\beta_k(\omega)\times N*h_k(t-t_{0,k})+\tilde{R}_k(\omega;t)+W_k(\omega;t)$, which is just a representation of $Y_k(\omega;t)=\beta_k(\omega)\times N*h_k(t-t_{0,k})+R_k(\omega;t)$.}
\begin{align*}
    Y_{k'}(\omega;t)=\beta_{k'}(\omega)\times N*h_{k'}(t-t_{0,{k'}})+\tilde{R}_{k'}(\omega;t)+W_{k'}(\omega;t),
\end{align*}
for $k'\in\{k,l\}$ and $\omega=1,\cdots,n$. We define the following estimator for $\mathcal{C}_{kl}(\xi)$.
\begin{align*}
\mathcal{C}_{kl}^{est,n}(\xi):= \left\vert \widehat{\mathcal{A}_{kl}^{est,n}}(\xi)\right\vert \Bigg/ \sqrt{  \left\vert \widehat{\mathcal{A}_{kk}^{est,n}}(\xi) \widehat{\mathcal{A}_{ll}^{est,n}}(\xi)  \right\vert  } ,\ \ \ \mbox{for all } \xi\in\mathbb{R},
\end{align*}
where hat $\widehat{(\cdot)}$ denotes the Fourier transform and 
\begin{align*}
    \notag \mathcal{A}_{kl}^{est,n}(s):=& \frac{1}{T+1}\sum_{t\in\{\tau\Delta\}_{\tau=0}^T}\left[  \frac{1}{n}\sum_{\omega=1}^n  \left\{Y_k\left(\omega;t-U(\omega)\right)Y_l(\omega;t+s-U(\omega))\right\} \right] \\
    & - \frac{1}{T+1}\sum_{t\in\{\tau\Delta\}_{\tau=0}^T}\left[  \frac{1}{n}\sum_{\omega=1}^n  \left\{\tilde{R}_k(\omega;t-U(\omega)) \tilde{R}_l(\omega;t+s-U(\omega))\right\} \right].
\end{align*}
Then we have the following asymptotic behavior of $\mathcal{C}_{kl}^{est,n}(\xi)$ as $n\rightarrow\infty$.
\begin{align}\label{app eq: asymptotic behavior}
    \mathbb{P}\left\{\lim_{n\rightarrow\infty}\mathcal{C}^{est,n}_{kl}(\xi) = 
    \frac{ \left\vert \mathbb{E}(\beta_k\beta_l)\frac{\overline{\widehat{h}_k(\xi)}\widehat{h}_l(\xi)}{\left\vert \widehat{h}_k(\xi) \widehat{h}_l(\xi) \right\vert}  e^{2\pi i \xi (t_{0,k}-t_{0,l})} + \frac{\Sigma_{kl}}{(T+1)\left\vert\widehat{N}(\xi)\right\vert^2\left\vert\widehat{h}_k(\xi) \widehat{h}_l(\xi)\right\vert} \right\vert }{ \sqrt{ \left(\mathbb{E}(\beta_k^2)+\frac{\Sigma_{kk}}{(T+1)\left\vert\widehat{N}(\xi)\widehat{h}_k(\xi)\right\vert^2}\right) \left(\mathbb{E}(\beta_l^2)+\frac{\Sigma_{ll}}{(T+1)\left\vert\widehat{N}(\xi)\widehat{h}_l(\xi)\right\vert^2}\right) } }\right\}=1.
\end{align}
\end{theorem}
\begin{proof}
For each pair $(s,t)\in\mathcal{T}^2$, the \textit{strong law of large numbers} implies that there exists $\mathcal{N}_{s,t}\in 2^{\Omega}$ depending on the pair $(s,t)$ such that $\mathbb{P}(\mathcal{N}_{s,t})=0$ and
\begin{align*}
& \frac{1}{n}\sum_{\omega=1}^n \left[ \left\{Y_k\left(\omega;t-U(\omega)\right)Y_l(\omega;t+s-U(\omega))\right\}-\left\{\tilde{R}_k\left(\omega;t-U(\omega)\right)\tilde{R}_l(\omega;t+s-U(\omega))\right\}\right]\\
   & \rightarrow\mathbb{E}\left\{Y_k(t-U)Y_l(t+s-U)\right\}-\mathbb{E}\left\{\tilde{R}_k(t-U)\tilde{R}_l(t+s-U)\right\}=:\mathcal{D}_{kl}(t,s),
\end{align*}
in $\Omega-\mathcal{N}_{s,t}$ as $n\rightarrow\infty$. Then we have 
\begin{align}\label{app eq: limit of A}
    \lim_{n\rightarrow\infty}\mathcal{A}_{kl}^{est,n}(s)=\frac{1}{T+1}\sum_{t\in\{\tau\Delta\}_{\tau=0}^T} \mathcal{D}_{kl}(t,s)
\end{align}
for all $s\in\mathcal{T}$ in $\Omega-\mathcal{N}$, where $\mathcal{N}:=\bigcup_{{s,t}\in\mathcal{T}}\mathcal{N}_{s,t}$. The limit (\ref{app eq: limit of A}) implies the following in $\Omega-\mathcal{N}$.
\begin{align*}
    \lim_{n\rightarrow\infty}\widehat{\mathcal{A}_{kl}^{est,n}}(\xi)&=\lim_{n\rightarrow\infty}\left\{\frac{1}{T+1}\sum_{s\in\{\tau\Delta\}_{\tau=0}^T} \mathcal{A}_{kl}^{est,n}(s) e^{2\pi i \xi s}\right\}\\
    &=\frac{1}{T+1}\sum_{s\in\{\tau\Delta\}_{\tau=0}^T} \lim_{n\rightarrow\infty}\left\{\mathcal{A}_{kl}^{est,n}(s)\right\} e^{2\pi i \xi s}\\
    &=\frac{1}{T+1}\sum_{t\in\{\tau\Delta\}_{\tau=0}^T} \left\{\frac{1}{T+1}\sum_{s\in\{\tau\Delta\}_{\tau=0}^T}  \mathcal{D}_{kl}(t,s) e^{2\pi i \xi s}\right\}
\end{align*}
for all $\xi\in\mathbb{R}$. In the derivation above, we change the order of summation and limit as the summation contains only finitely many terms. Since $\frac{1}{T+1}\sum_{s\in\{\tau\Delta\}_{\tau=0}^T}  \mathcal{D}_{kl}(t,s) e^{2\pi i \xi s}$ is the Fourier transform of $\mathcal{D}_{kl}(t,s)$ with respect to $s$, repeating the calculation strategy in (\ref{app eq: convolution calculation}), we have 
\begin{align*}
    \frac{1}{T+1}\sum_{s\in\{\tau\Delta\}_{\tau=0}^T}  \mathcal{D}_{kl}(t,s) e^{2\pi i \xi s}=\mathbb{E}(\beta_k\beta_l)\frac{\overline{\widehat{h}_k(\xi)}\widehat{h}_l(\xi)}{\left\vert \widehat{h}_k(\xi) \widehat{h}_l(\xi) \right\vert}  e^{2\pi i \xi (t_{0,k}-t_{0,l})} + \frac{\Sigma_{kl}}{(T+1)\left\vert\widehat{N}(\xi)\right\vert^2\left\vert\widehat{h}_k(\xi) \widehat{h}_l(\xi)\right\vert},
\end{align*}
which does not depend on $t$. Then we have 
\begin{align}\label{app eq: limit of A hat}
    \lim_{n\rightarrow\infty}\widehat{\mathcal{A}_{kl}^{est,n}}(\xi)=\mathbb{E}(\beta_k\beta_l)\frac{\overline{\widehat{h}_k(\xi)}\widehat{h}_l(\xi)}{\left\vert \widehat{h}_k(\xi) \widehat{h}_l(\xi) \right\vert}  e^{2\pi i \xi (t_{0,k}-t_{0,l})} + \frac{\Sigma_{kl}}{(T+1)\left\vert\widehat{N}(\xi)\right\vert^2\left\vert\widehat{h}_k(\xi) \widehat{h}_l(\xi)\right\vert}
\end{align}
for all $\xi\in\mathbb{R}$ in $\Omega-\mathcal{N}$. The limit (\ref{app eq: limit of A hat}) implies
\begin{align*}
\lim_{n\rightarrow\infty}\mathcal{C}_{kl}^{est,n}(\xi)&= \left\vert \lim_{n\rightarrow\infty} \widehat{\mathcal{A}_{kl}^{est,n}}(\xi)\right\vert \Bigg/ \sqrt{  \left\vert \lim_{n\rightarrow\infty} \widehat{\mathcal{A}_{kk}^{est,n}}(\xi)\times  \lim_{n\rightarrow\infty} \widehat{\mathcal{A}_{ll}^{est,n}}(\xi)  \right\vert  }\\
&=\frac{ \left\vert \mathbb{E}(\beta_k\beta_l)\frac{\overline{\widehat{h}_k(\xi)}\widehat{h}_l(\xi)}{\left\vert \widehat{h}_k(\xi) \widehat{h}_l(\xi) \right\vert}  e^{2\pi i \xi (t_{0,k}-t_{0,l})} + \frac{\Sigma_{kl}}{(T+1)\left\vert\widehat{N}(\xi)\right\vert^2\left\vert\widehat{h}_k(\xi) \widehat{h}_l(\xi)\right\vert} \right\vert }{ \sqrt{ \left(\mathbb{E}(\beta_k^2)+\frac{\Sigma_{kk}}{(T+1)\left\vert\widehat{N}(\xi)\widehat{h}_k(\xi)\right\vert^2}\right) \left(\mathbb{E}(\beta_l^2)+\frac{\Sigma_{ll}}{(T+1)\left\vert\widehat{N}(\xi)\widehat{h}_l(\xi)\right\vert^2}\right) } }
\end{align*}
for all $\xi\in\mathbb{R}$ in $\Omega-\mathcal{N}$. Since $\mathbb{P}(\mathcal{N})=\mathbb{P}\left(\bigcup_{s,t\in\mathcal{T}}\mathcal{N}_{s,t}\right)\le\sum_{s,t\in\mathcal{T}}\mathbb{P}(\mathcal{N}_{s,t})=0$, the desired result (\ref{app eq: asymptotic behavior}) follows.  
\end{proof}

\textbf{Proof of Theorem \ref{thm: 1st thm for AMUSE-ptFC} in the manuscript:} Since $\{\epsilon(\omega;t)\}_{t\in\mathcal{T}}$ is WSMZ, Lemma \ref{lemma: stationarity with U} implies that $\{\epsilon(\omega;t-U(\omega))\}_{t\in\mathcal{T}}$ is WSMZ as well. Because of $\mathbb{E}[\beta_{k,\gamma}]=0$ and the independence between $\beta_{k,\gamma}(\omega)$ and $U(\omega)$, we have $\mathbb{E}[R_k(t-U)]=0$ for all $t$. Additionally, the independence assumption in Theorem \ref{thm: 1st thm for AMUSE-ptFC} implies the following
\begin{align*}
\mathbb{E}[R_k(t-U)\cdot R_k(t+s-U)] = \mathbb{E}[Q_k(t-U)\cdot Q_k(t+s-U)] + \mathbb{E}[\epsilon_k(t-U)\cdot \epsilon_k(t+s-U)]
\end{align*}

\textbf{Proof of Theorem \ref{thm: 2nd thm for AMUSE-ptFC} in the manuscript:} In the following equation provided in Theorem \ref{thm: 1st thm for AMUSE-ptFC}, $\pmb{P}$ is a permutation matrix and $\pmb{\Lambda}$ is a diagonal matrix.
\begin{align}\label{app eq: thm 1 equation 1}
    \begin{pmatrix}
    1 & 1 \\
    \frac{1}{\beta^*} & 0
\end{pmatrix} = \pmb{A} \pmb{\Lambda}^{-1} \pmb{P}^{-1}.
\end{align}
Then equation (\ref{app eq: thm 1 equation 1}) implies $\beta^*=\beta_k(\omega)=a_{1i'}/a_{2i'}$ for some $i'\in\{1,2\}$ and $a_{2j}=0$ when $j\ne i'$. Furthermore, the pair $(\pmb{A}, \pmb{S}(\omega;t))$ also satisfies the following equation, which is provided in Theorem \ref{thm: 1st thm for AMUSE-ptFC} as well.
\begin{align}\label{app eq: thm 1 equation 2}
    \pmb{A}
    \begin{pmatrix}
    s_1(\omega;t) \\
    s_2(\omega;t)
    \end{pmatrix} =
    \begin{pmatrix}
    1 & 1 \\
    \frac{1}{\beta^*} & 0
    \end{pmatrix}
    \begin{pmatrix}
    J_k(\omega;t) \\
    R_k(\omega;t-U(\omega))
    \end{pmatrix} =
    \begin{pmatrix}
    Y_k(\omega;t-U(\omega)) - \beta^* C_k \\
    (N*h_k)(t-t_{0,k}-U(\omega))-C_k
    \end{pmatrix}, 
\end{align}
and equation (\ref{app eq: thm 1 equation 2}) implies $a_{2i'}s_{i'}(\omega;t)=(N*h_k)(t-t_{0,k}-U(\omega))-C_k$. Therefore, we have
\begin{align*}
    a_{1i'}s_{i'}(\omega;t)=\frac{a_{1i'}}{a_{2i'}}\times a_{2i'}s_{i'}(\omega;t)=\beta_k(\omega)\times\left\{(N*h_k)(t-t_{0,k}-U(\omega))-C_k\right\}=J_k(\omega;t).
\end{align*}  

The following theorem supports our Assumption 2.
\begin{theorem}
Let signals $\{Q_k(\omega;t)\}_{t\in\mathcal{T}}$ be of the following form
\begin{align}\label{eq: an example of Qk}
    Q_k(\omega;t)=\left\{\sum_{\gamma=1}^\Gamma \beta_{k,\gamma}(\omega)\times \tilde{N}_{\gamma}*\tilde{h}_{k,\gamma}(t)\right\}.
\end{align}
Let $\{\epsilon_k(\omega;t)\}_{t\in\mathcal{T}}$ be a random error signal. $U(\omega)$ is a random variable uniformly distributed on $\mathcal{T}$. Suppose the following conditions are satisfied:
\begin{enumerate}
    \item $\mathbb{E}\beta_{k,\gamma}=0$ for all $\gamma=1,\ldots,\Gamma$;
    \item $U(\omega)$, $\beta_{k,1}(\omega)$, $\beta_{k,2}(\omega), \ldots, \beta_{k,\Gamma}(\omega)$, and $\{\epsilon_k(\omega;t)\}_{t\in\mathcal{T}}$ are independent;
    \item $\{\epsilon_k(\omega;t)\}_{t\in\mathcal{T}}$ is WSMZ.
\end{enumerate}
Then, the reference term $\{R_k(\omega;t)=Q_k(\omega;t)+\epsilon_k(\omega;t)\}_{t\in\mathcal{T}}$ is WSMZ.
\end{theorem}
\noindent\textbf{Remark:} If the condition i) is not satisfied, we may apply the following centralization
\begin{align*}
    R_k(\omega;t)-\mathbb{R}[R_k(t)] = \left\{\sum_{\gamma=1}^\Gamma \left[\beta_{k,\gamma}(\omega)-\mathbb{E}(\beta_{k,\gamma})\right]\times \tilde{N}_{\gamma}*\tilde{h}_{k,\gamma}(t)\right\} + \left(\epsilon_k(\omega;t)-\mathbb{E}[\epsilon_k(t)] \right)
\end{align*}
and use $\left[\beta_{k,\gamma}(\omega)-\mathbb{E}(\beta_{k,\gamma})\right]$ as $\beta_{k,\gamma}(\omega)$.
\begin{proof}
The conditions i) and ii) imply
\begin{align*}
    \mathbb{E}[R_k(t-U)] = \left\{\sum_{\gamma=1}^\Gamma \mathbb{E}\beta_{k,\gamma}\times \mathbb{E}\left[\tilde{N}_{\gamma}*\tilde{h}_{k,\gamma}(t-U)
    \right]\right\} + \mathbb{E}[\epsilon_k(t-U)]=\mathbb{E}[\epsilon_k(t-U)].
\end{align*}
In addition, the condition iii), together with Lemma \ref{lemma: stationarity with U}, implies that $\{\epsilon_k(\omega;t-U(\omega))\}_{t\in\mathcal{T}}$ is WSMZ, hence, 
\begin{align*}
     \mathbb{E}[R_k(t-U)] = \mathbb{E}[\epsilon_k(t-U)] = 0.
\end{align*}
Condition ii) implies the following
\begin{align*}
    \mathbb{E}[R_k(t-U)\cdot R_k(t+s-U)] = \mathbb{E}[Q_k(t-U)\cdot Q_k(t+s-U)] + \mathbb{E}[\epsilon_k(t-U)\cdot \epsilon_k(t+s-U)].
\end{align*}
Since $\{\epsilon_k(\omega;t-U(\omega))\}_{t\in\mathcal{T}}$ is WSMZ, the second term above depends only on $s$. The condition ii) implies the following
\begin{align*}
    \mathbb{E}[Q_k(t-U)\cdot Q_k(t+s-U)] = \left\{\sum_{\gamma=1}^\Gamma \mathbb{E}\left(\beta_{k,\gamma}^2\right)\times \mathbb{E}\left[\tilde{N}_{\gamma}*\tilde{h}_{k,\gamma}(t-U)\cdot \tilde{N}_{\gamma}*\tilde{h}_{k,\gamma}(t+s-U)
    \right]\right\}.
\end{align*}
The calculation in Eq.~\eqref{app eq: convolution calculation} indicates that $\mathbb{E}\left[\tilde{N}_{\gamma}*\tilde{h}_{k,\gamma}(t-U)\cdot \tilde{N}_{\gamma}*\tilde{h}_{k,\gamma}(t+s-U)
    \right]$ depends only on $s$. Hence, $\mathbb{E}[Q_k(t-U)\cdot Q_k(t+s-U)]$ depends only on $s$. Therefore, the reference term $\{R_k(\omega;t)=Q_k(\omega;t)+\epsilon_k(\omega;t)\}_{t\in\mathcal{T}}$ is WSMZ.
\end{proof}

\section{Identifiability of Task-evoked Terms}\label{section: The Identifiability of Task-evoked Terms}

In this section, we prove the identifiability of task-evoked terms using the identifiability theory proposed by \cite{tong1991indeterminacy}. To implement this theory, we need a uniformly distributed random variable $U$ as an auxiliary variable. In our approach to estimating ptFC, the auxiliary random variable $U$ is artificially generated. Hence, all the distributional information of $U$, e.g., its independence, is viewed as available.

We generate $U: \Omega\rightarrow\mathcal{T}$ to be a uniformly distributed random variable. Identifying the task-evoked terms $P_k(\omega;t)$ from $Y_k(\omega;t)=P_k(\omega;t)+R_k(\omega;t)$ is equivalent to identifying $P_k(\omega;t-U(\omega))$ from $Y_k(\omega;t-U(\omega))$. Specifically, the identification of the task-evoked terms $P_k(\omega;t)$ consists of the following steps:
\begin{enumerate}
\item We first generate random variable $U(\omega)$;
    \item then, we identify $P_k(\omega;t-U(\omega))$ from $Y_k(\omega;t-U(\omega))=P_k(\omega;t-U(\omega))+R_k(\omega;t-U(\omega))$;
    \item lastly, we identify $P_k(\omega;t)$ by taking the transform $P_k(\omega;t-U(\omega))\mapsto P_k(\omega;t)$.
\end{enumerate}
To prove the identifiability of the task-evoked terms $P_k(\omega;t)$, it suffices to show the identifiability in the second step above. Our proposed BOLD signal model $Y_k(\omega;t-U(\omega))=P_k(\omega;t-U(\omega))+R_k(\omega;t-U(\omega))$ can be represented in the following ``mixing" form.
\begin{align}\label{app eq: our proposed BOLD signal model}
Y_k\left(\omega;t-U(\omega)\right)=\begin{pmatrix}
    1 & 0 \\
    0 & 1
    \end{pmatrix}\begin{pmatrix}
    P_k(\omega;t-U(\omega))\\
    R_k(\omega;t-U(\omega))
    \end{pmatrix},
\end{align}
where the identity matrix mixes the source signals $P_k(\omega;t-U(\omega))$ and $R_k(\omega;t-U(\omega))$. 

In this section, we provide the identifiability of the task-evoked terms $P_k(\omega;t-U(\omega))=\beta_k(\omega)N*h_k(t-t_{0,k}-U(\omega))$ as well as reference terms $R_k(\omega;t-U(\omega))$ in (\ref{app eq: our proposed BOLD signal model}), up to deterministic coefficients, among the family of \textit{mixing forms} defined as follows.
\begin{align}\label{app eq: linear mixing forms}
    \mathcal{M}:=\left\{ \begin{pmatrix}
    s_1(\omega;t)\\
    s_2(\omega;t)
    \end{pmatrix} \,\Bigg\vert\, \exists \pmb{A}\in\mathbb{R}^{2\times 2}\mbox{ such that }Y_k\left(\omega;t-U(\omega)\right)=\pmb{A}\begin{pmatrix}
    s_1(\omega;t)\\
    s_2(\omega;t)
    \end{pmatrix}\right\},
\end{align}
where all matrices $\pmb{A}$ are deterministic 2-by-2 matrices mixing \textit{source signals} $s_1(\omega;t)$ and $s_2(\omega;t)$. Specifically, we will show that, under some probabilistic conditions, all the stochastic signals $\left(s_1(\omega;t), s_2(\omega;t)\right)$ in (\ref{app eq: linear mixing forms}) are proportional to $(P_k(\omega;t-U(\omega)), R_k(\omega;t-U(\omega)))$ or $(R_k(\omega;t-U(\omega)), P_k(\omega;t-U(\omega)))$. Then the task-evoked terms $P_k(\omega;t-U(\omega))$ are identifiable in $\mathcal{M}$ in the following sense: there exist $i'\in\{1,2\}$ and $\lambda\ne0$ such that $s_{i'}(\omega;t)=\lambda P_k(\omega;t-U(\omega))=\{\lambda\beta_k(\omega)\}N*h_k(t-t_{0,k}-U(\omega))$ for all $t\in\mathcal{T}$. Since the deterministic coefficient $\lambda$ is canceled out in computing $corr(\beta_k,\beta_l)$, the identifiability of $P_k(\omega;t)$ up to a deterministic coefficient implies the exact identifiability of ptFC $\vert corr(\beta_k,\beta_l)\vert$.

The following theorem implies the identifiability of $P_k(\omega;t)$ and provides a formal foundation for the discussion above.
\begin{theorem}\label{thm: identifiability theorem}
Let $U:\Omega\rightarrow\mathcal{T}$ be uniformly distributed. Suppose $U(\omega)$, $\beta_k(\omega)$, and $\{R_k(\omega;t)\vert t\in\mathcal{T}\}$ are independent, and $\{R_k(\omega;t-U(\omega))\}_{t\in\mathcal{T}}$ is WSMZ. Furthermore, we assume that there exists a $t^*\in\mathcal{T}$ such that 
\begin{align*}
    \frac{ \mathbb{E}\left\{P_k(t-U)P_k(t-t^*-U)\right\} }{\mathbb{E}\left\{P_k(t-U)\right\}^2} \ne \frac{ \mathbb{E}\left\{R_k(t-U)R_k(t-t^*-U)\right\} }{\mathbb{E}\left\{R_k(t-U)\right\}^2}.
\end{align*}
If there exists a WSMZ stochastic process $(s_1(\omega;t), s_2(\omega;t))$ such that $\{s_1(\omega;t)\}_{t\in\mathcal{T}}$ and $\{s_2(\omega;t)\}_{t\in\mathcal{T}}$ are uncorrelated, 
\begin{align*}
& \frac{\mathbb{E}\left\{s_1(t)s_1(t-t^*)\right\}}{\mathbb{E}\left\{s_1(t)\right\}^2} \ne \frac{\mathbb{E}\left\{s_2(t)s_2(t-t^*)\right\}}{\mathbb{E}\left\{s_2(t)\right\}^2}, \mbox{ and}\\
&    Y_k(\omega;t-U(\omega))=P_k(\omega;t-U(\omega))+R_k(\omega;t-U(\omega))=\pmb{A}\begin{pmatrix}
    s_1(\omega;t)\\
    s_2(\omega;t)
    \end{pmatrix}
\end{align*}
for some deterministic 2-by-2 matrix $\pmb{A}$, then there exists a permutation matrix $\pmb{P}$ and a non-singular diagonal matrix $\pmb{\Lambda}$ such that
\begin{align}\label{app eq: the last identifiability equation}
    \begin{pmatrix}
    P_k(\omega;t-U(\omega))\\
    R_k(\omega;t-U(\omega))
    \end{pmatrix}=\pmb{P\Lambda}\begin{pmatrix}
    s_1(\omega;t)\\
    s_2(\omega;t)
    \end{pmatrix}.
\end{align}
\end{theorem}
\noindent\textbf{Remark}: (i) Our discussion of identifiability presented at the beginning of this section is based on (\ref{app eq: the last identifiability equation}). (ii) Since $\mathbb{E}\beta_k=\mathbb{E}R_k(t)=0$ for all $t$, we have $\mathbb{E}\left\{P_k(t_1)R_k(t_2)\right\}=0$ for all $t_1, t_2\in\mathcal{T}$, i.e., $P_k(\omega;t_1)$ and $R_k(\omega;t_2)$ are uncorrelated. (iii) One can verify that  $P_k(\omega;t-U(\omega))=\beta_k(\omega)N*h_k(t-t_{0,k}-U(\omega))$ is WSMZ. 

Based on these remarks, Theorem \ref{thm: identifiability theorem} is a straightforward result following from Theorem 2 of \cite{tong1991indeterminacy}.

\section{Bias and Variance --- Performance of ptFCE in Estimating ptFCs}\label{section: The Performance of ptFCE in Estimating ptFCs}

In this section, we analyze the performance of the ptFCE algorithm in terms of estimation bias and variance from both theoretical and simulational perspectives.

We first provide the estimation bias mechanism of the ptFCE algorithm motivated by
\begin{align}\label{app eq: estimation formula with noise}
    \mathbb{P}\left\{\lim_{n\rightarrow\infty}\mathcal{C}^{est,n}_{kl}(\xi) = 
    \frac{ \left\vert \mathbb{E}(\beta_k\beta_l)\frac{\overline{\widehat{h}_k(\xi)}\widehat{h}_l(\xi)}{\left\vert \widehat{h}_k(\xi) \widehat{h}_l(\xi) \right\vert}  e^{2\pi i \xi (t_{0,k}-t_{0,l})} + \frac{\Sigma_{kl}}{(T+1)\left\vert\widehat{N}(\xi)\right\vert^2\left\vert\widehat{h}_k(\xi) \widehat{h}_l(\xi)\right\vert} \right\vert }{ \sqrt{ \left(\mathbb{E}(\beta_k^2)+\frac{\Sigma_{kk}}{(T+1)\left\vert\widehat{N}(\xi)\widehat{h}_k(\xi)\right\vert^2}\right) \left(\mathbb{E}(\beta_l^2)+\frac{\Sigma_{ll}}{(T+1)\left\vert\widehat{N}(\xi)\widehat{h}_l(\xi)\right\vert^2}\right) } }\right\}=1, 
\end{align}
which is from Supplementary Theorem \ref{theorem: asymptotic bias}. Because the quantities in 
\begin{align*}
    \Sigma_{kk}\Big/[(T+1)\vert\widehat{N}(\xi)\widehat{h}_k(\xi)\vert^2]\approx 0,\ \ \ \Sigma_{ll}\Big/[(T+1)\vert\widehat{N}(\xi)\widehat{h}_l(\xi)\vert^2]\approx 0.
\end{align*}
are not exactly zeros, the asymptotic estimation bias
\begin{align*}
    \left\vert \lim_{n\rightarrow\infty}\mathcal{C}^{est,n}_{kl}(\xi) - \mathcal{C}_{kl}(\xi) \right\vert = \left\vert  \lim_{n\rightarrow\infty}\mathcal{C}^{est,n}_{kl}(\xi) - \frac{\left\vert\mathbb{E}(\beta_k \beta_l)\right\vert}{\sqrt{\mathbb{E}(\beta_k^2)\mathbb{E}(\beta_l^2)}} \right\vert
\end{align*}
always exists. We further model that approximation residual $W_{k'}(\omega;t)=R_{k'}(\omega;t)-\tilde{R}_{k'}(\omega;t)$ at different nodes are independent, implying $\Sigma_{kl}=0$ in (\ref{app eq: estimation formula with noise}). Then the following inequality, as a result of (\ref{app eq: estimation formula with noise}), indicates that our ptFCE algorithm tends to underestimate ptFC.
\begin{align}\label{app eq: underestimation inequality}
    \notag\lim_{n\rightarrow\infty}\mathcal{C}_{kl}^{est,n}(\xi)&=_{a.s.}\frac{ \left\vert \mathbb{E}(\beta_k\beta_l) \right\vert }{ \sqrt{ \left(\mathbb{E}(\beta_k^2)+\frac{\Sigma_{kk}}{(T+1)\left\vert\widehat{N}(\xi)\widehat{h}_k(\xi)\right\vert^2}\right) \left(\mathbb{E}(\beta_l^2)+\frac{\Sigma_{ll}}{(T+1)\left\vert\widehat{N}(\xi)\widehat{h}_l(\xi)\right\vert^2}\right) } }\\
    & \le
    \frac{ \left\vert \mathbb{E}(\beta_k\beta_l) \right\vert }{ \sqrt{ \mathbb{E}(\beta_k^2) \mathbb{E}(\beta_l^2) } } = \vert corr(\beta_k, \beta_l)\vert.
\end{align}
Furthermore, the larger the variance of the noise $\epsilon_k(\omega;t)$ in 
\begin{align*}
    Y_k(\omega;t)=\beta_k(\omega)\times \left(N*h_k\right)\left(t-t_{0,k}\right)+R_k(\omega;t),\ \ t\in\mathcal{T},\ \ k=1, \cdots, K,\ \ \omega\in\Omega,
\end{align*}
the less accurate is the approximation $R_{k'}(\omega;t)\approx\tilde{R}_{k'}(\omega;t)$ and the larger $\Sigma_{kk}$ and $\Sigma_{ll}$ in the inequality (\ref{app eq: underestimation inequality}). Hence, the larger the variance of noise $\epsilon_k(\omega;t)$, the larger the underestimation bias. The formula above also indicates that the larger the underlying ptFC, the larger the underestimation bias. Subsequent simulations confirm all the conclusions on bias presented herein. We also show that the underestimation bias is moderate in general in terms of estimating ptFCs.

We present the performance of ptFCE in estimating ptFC by investigating estimation bias and variance using simulated data. Specifically, suppose the true ptFC in our simulation mechanism is $\rho$, and the estimated ptFC from this algorithm is $\widehat{\rho}$, then we investigate bias $\widehat{\rho}-\rho$ and the variance of $\widehat{\rho}$. We implement the following data-generating mechanism, compatible with the definition of ptFC. The main steps of this mechanism are the following, while the details are provided in Appendix E. 

\noindent \textbf{Mechanism 0}: \textbf{Step 1}, we generate random coefficients $\left(\beta_{1}(\omega), \beta_{2}(\omega)\right)^T\sim N_2\left(\pmb{0}, (\sigma_{ij})_{1\le i,j \le 2}\right)$ for $\omega=1,\cdots,n$, and $\rho:=\vert\sigma_{12}/\sqrt{\sigma_{11}\sigma_{22}}\vert$ is the true underlying ptFC between synthetic nodes $1$ and $2$. \textbf{Step 2}, we generate reference signals $\{R_{k'}(\omega;\tau\Delta)=Q_{k'}(\omega;\tau\Delta)+\epsilon_{k'}(\omega;\tau\Delta)\}_{\tau=0}^T$ for $\omega=1,\cdots,n$ and $k'\in\{1,2\}$, where the $2nT$ noise values $\{\epsilon_{k'}(\omega;\tau\Delta) \vert k'=1,2; \omega=1,\cdots,n; \tau=0,\cdots,T\}\sim_{iid} N(0,V)$. \textbf{Step 3}, we compute the signals $\{Y_{k'}(\omega;\tau\Delta)\vert k'=1,2\}_{\tau=0}^T$, for $\omega=1,\cdots,n$, by $Y_{k'}(\omega; \tau\Delta)=9000+\beta_{k'}(\omega)\times \left(N * h_{k'}\right)(\tau\Delta) + R_{k'}(\omega; \tau\Delta)$. 

We investigate sample sizes $n\in\{50, 100,308, 1000\}$, where $308$ is the sample size of the HCP dataset used to obtain the results in our paper. We apply the ptFCE algorithm to estimate the underlying $\rho$ from synthetic signals. The corresponding estimate is denoted by $\widehat{\rho}$. For each $\rho\in\{0.25, 0.5, 0.75\}$, we repeat this procedure 500 times. The resulting estimates $\hat{\rho}$ are summarized in Supplementary Table \ref{table: simulation 1 summary} and Supplementary Figure \ref{fig: Boxplots_estimation}. 

\begin{table}
\centering
\caption{The summaries of the estimated $\widehat{\rho}$ in different underlying ptFC $\rho$ scenarios. The percentage in the parenthesis after each mean shows the corresponding relative bias $(\widehat{\rho}-\rho)/\rho$, where the minus signs indicate underestimation.}\label{table: simulation 1 summary}
\begin{tabular}{llllllllll}
\hline
         &  & $\rho=0.25$ &       &  & $\rho=0.5$ &       &  & $\rho=0.75$ &       \\ \cline{3-4} \cline{6-7} \cline{9-10} 
         &  & mean        & sd    &  & mean       & sd    &  & mean        & sd    \\ \hline
$n=50$   &  & 0.255 (2$\%$)      & 0.120 &  & 0.456 (-8.8$\%$)      & 0.114 &  & 0.670 (-10.7$\%$)       & 0.081 \\
$n=100$  &  & 0.246 (-1.6$\%$)       & 0.093 &  & 0.460 (-8$\%$)      & 0.079 &  & 0.671 (-10.5$\%$)       & 0.055 \\
$n=308$  &  & 0.248 (-0.8$\%$)       & 0.055 &  & 0.457 (-8.6$\%$)      & 0.047 &  & 0.670 (-10.7$\%$)      & 0.030 \\
$n=1000$ &  & 0.246 (-1.6$\%$)       & 0.028 &  & 0.462 (-7.6$\%$)      & 0.026 &  & 0.675 (-10$\%$)      & 0.018 \\ \hline
\end{tabular}
\end{table}
\begin{figure}[h]
    \centering
    \includegraphics[scale=0.65]{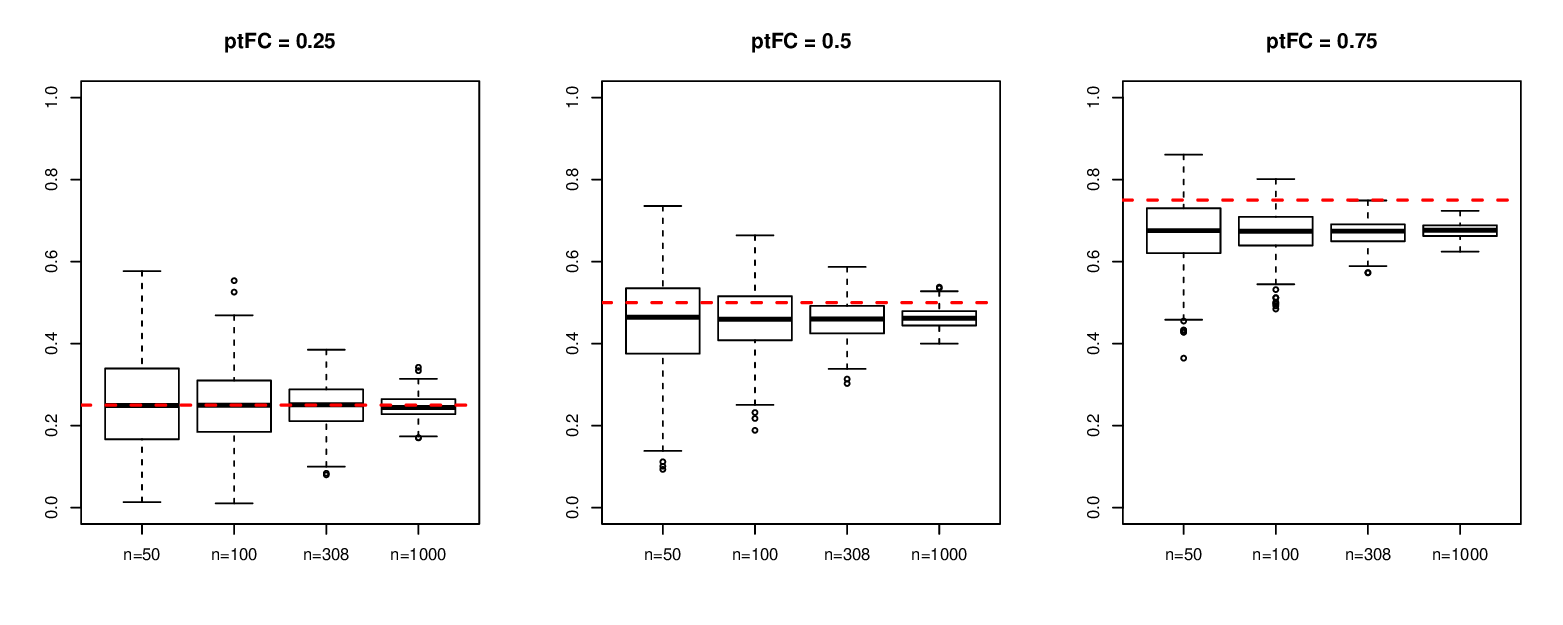}
    \caption{The boxplots summarizing the estimated $\widehat{\rho}$ in different underlying ptFC $\rho$ and different sample sizes $n$ scenarios.}
    \label{fig: Boxplots_estimation}
\end{figure}

In addition, we provide a simulation study for different signal-to-noise ratios. Specifically, we investigate the influence of random noise $\{\epsilon_{k'}(\omega;\tau\Delta)\}_{k',\tau,\omega}\sim_{iid} N(0, V)$ on the ptFCE algorithm. Specifically, we fix sample size $n=308$; for each $\lambda\in[0.5, 5]$, we conduct the 500-run simulation study described above except we change the random noise to $\{\epsilon_{k'}(\omega;\tau\Delta)\}_{k',\tau,\omega}\sim_{iid} N(0, \lambda V)$. For each $\lambda$ and simulation run $r$, the estimated ptFC between synthetic nodes is denoted as $\widehat{\rho}_{\lambda}^{(r)}$, then we obtain 500 curves $\{\widehat{\rho}_{\lambda}^{(r)}\vert\lambda\in[0.5,5]\}_{r=1}^{500}$ presented in Supplementary Figure \ref{fig: Influence of noise}. 

\begin{figure}
    \centering
    \includegraphics[scale=1, angle=270]{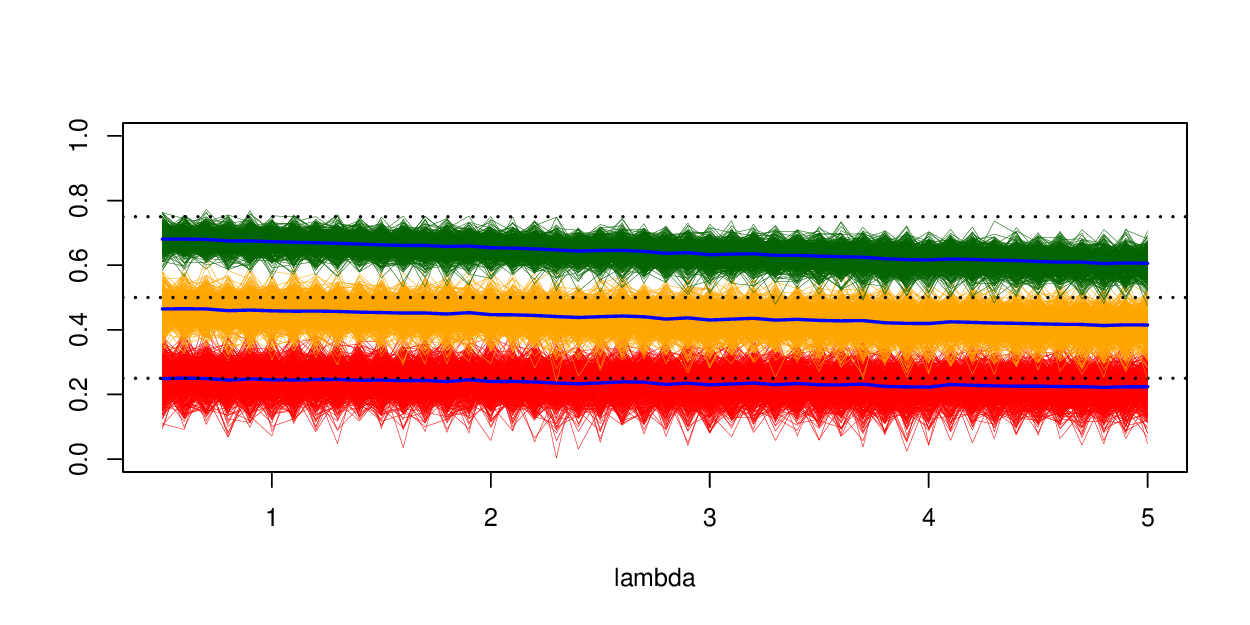}
    \caption{The red, orange, and green curves present the collections $\{\widehat{\rho}_{\lambda}^{(r)}\vert\lambda\in[0.5,5]\}_{r=1}^{500}$ corresponding to underlying ptFC $\rho=0.25, 0.5, 0.75$. The three blue solid curves present the mean curves $\{\frac{1}{500}\sum_{r=1}^{500} \widehat{\rho}_{\lambda}^{(r)}\vert\lambda\in[0.5,5]\}$ in the three underlying ptFC scenarios, and the three black dotted lines present the three underlying ptFC.}
    \label{fig: Influence of noise}
\end{figure}

As expected from the theoretical analysis of the bias mechanism, our proposed ptFCE algorithm tends to underestimate true ptFCs, see Supplementary Table \ref{table: simulation 1 summary} and Supplementary Figures \ref{fig: Boxplots_estimation} and \ref{fig: Influence of noise}. But Supplementary Table \ref{table: simulation 1 summary} shows that the bias is moderate. Simulations also confirm other conclusions on estimation bias presented in the theoretical analysis above. Additionally, Supplementary Figure \ref{table: simulation 1 summary} shows that the magnitude of noise $\epsilon_k(\omega;t)$ does not influence the variance of the ptFCE estimates. The increase in sample size reduces estimation variance. 

\section{Data-generating Mechanisms for Simulations}\label{section: data-generating mechanisms}

In this section, we provide the details of Mechanisms 0, 1, and 2 for generating synthetic signals in simulation studies in our paper. To mimic the HCP data of interest, we apply the following parameters in the two data-generating mechanisms.
\begin{itemize}
    \item Repeatition time $\Delta=0.72$ (seconds)
    \item Number of observation time points $T=283$.
    \item The task of interest (squeezing right toes) is presented by the stimulus signal $N(t)=\mathbf{1}_{[86.5, 98.5)}(t)+\mathbf{1}_{[162, 174)}(t)$.
    \item The tasks that are not of interest are presented by stimulus signals $\tilde{N}_1(t)=\mathbf{1}_{[71.35,83.35)}(t)+\mathbf{1}_{[177.125, 189.125)}(t)$,  $\tilde{N}_2(t)=\mathbf{1}_{[11, 23)}(t)+\mathbf{1}_{[116.63, 128.63)}(t)$, $\tilde{N}_3(t)=\mathbf{1}_{[26.13, 38.13)}(t)+\mathbf{1}_{[146.88, 158.88)}(t)$, and $\tilde{N}_4(t)=\mathbf{1}_{[56.26, 68.26)}(t)+\mathbf{1}_{[131.75, 143.75)}(t)$. They correspond to the tasks of squeezing left toes, squeezing left/right fingers, and moving tongue.
    \item HRF functions $h_k$ and $\tilde{h}_{k,\gamma}$, for $k\in\{1,3\}$ and $\gamma\in\{1,2,3,4\}$, are the double-gamma variate functions implemented in the \texttt{R} function \texttt{canonicalHRF} with parameters \texttt{a1}$=4$, \texttt{a2}$=10$, \texttt{b1}$=0.8$, \texttt{b2}$=0.8$, \texttt{c}$=0.4$; HRF functions $h_2$ and $\tilde{h}_{2,\gamma}$, for $\gamma\in\{1,2,3,4\}$, are the same function with parameters \texttt{a1}$=8$, \texttt{a2}$=14$, \texttt{b1}$=1$, \texttt{b2}$=1$, \texttt{c}$=0.3$. It is important to emphasize that none of the HRFs herein is canonical. The curves of all HRF functions used in our simulations studies are presented in Supplementary Figure \ref{fig: Different HRFs}.
\end{itemize}
\begin{figure}[h]
    \centering
    \includegraphics[scale=1]{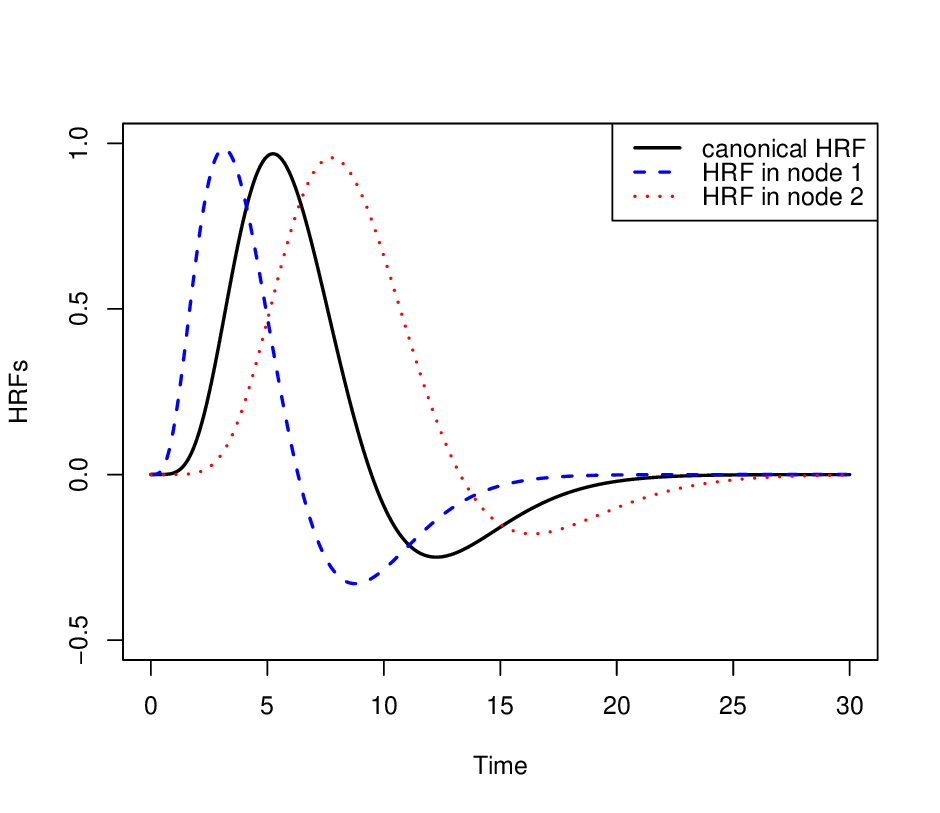}
    \caption{The black solid curve presents the canonical HRF (the \texttt{R} function \texttt{canonicalHRF} with default parameters). The blue dashed curve presents the \texttt{R} function \texttt{canonicalHRF} with parameters \texttt{a1}$=4$, \texttt{a2}$=10$, \texttt{b1}$=0.8$, \texttt{b2}$=0.8$, \texttt{c}$=0.4$. The red dotted curve presents the \texttt{R} function \texttt{canonicalHRF} with parameters \texttt{a1}$=8$, \texttt{a2}$=14$, \texttt{b1}$=1$, \texttt{b2}$=1$, \texttt{c}$=0.3$.}
    \label{fig: Different HRFs}
\end{figure}

\subsection{Mechanism 0}

This mechanism is based on the task-fMRI BOLD signal model proposed in our paper. Specifically, we implement the following model to generate synthetic data.
\begin{align}\label{app eq: data-generating model 1}
    & Y_{k'}(\omega;t)=9000+\beta_{k'}\times N*h_{k'}(t) + \left\{\sum_{\gamma=1}^4 \beta_{k',\gamma}(\omega)\times \tilde{N}_{\gamma}*\tilde{h}_{k',\gamma}(t)\right\}+\epsilon_{k'}(\omega;t),\\
    \notag& \mbox{where } t\in\mathcal{T}=\{\tau\Delta\}_{\tau=0}^T \mbox{ and }k'\in\{1,2\}.
\end{align}
We generate signals $\{(Y_1(\omega;\tau\Delta), Y_2(\omega;\tau\Delta)\}_{\tau=0}^T$, for $\omega=1,\cdots,n$, by the following steps.
\begin{itemize}
    \item \textbf{Step 1}: Generate bivariate normal random vectors $\left(\beta_{1}(\omega), \beta_{2}(\omega)\right)^T\sim_{i.i.d.} N_2\left((0,0)^T, (\sigma_{ij})_{1\le i,j \le2}\right)$ for $\omega=1,\cdots,n$, where $\rho:=\vert\sigma_{12}/\sqrt{\sigma_{11}\sigma_{22}}\vert$ is the true underlying ptFC, and 
    \begin{align}\label{app eq: underlying rho}
        \sigma_{11}=2,\ \ \sigma_{22}=3,\ \ \sigma_{12}=\rho\times\sqrt{\sigma_{11}\sigma_{22}}.
    \end{align}
    \item\textbf{Step 2}: For each $\gamma\in\{1,2,3,4\}$, generate bivariate normal random vector
    \begin{align*}
    \begin{pmatrix}
        \beta_{1,\gamma}(\omega)\\
        \beta_{2,\gamma}(\omega)
    \end{pmatrix}
        \sim_{i.i.d.} N_2\left(\begin{pmatrix}
        0\\
        0
        \end{pmatrix}, \begin{pmatrix}
        2 & 0.3\times\sqrt{2\times3}\\
        0.3\times\sqrt{2\times3} & 3
        \end{pmatrix}\right),
    \end{align*}
    for $\omega=1,\cdots,n$. The four collections $\left\{\left(\beta_{1,\gamma}(\omega), \beta_{2,\gamma}(\omega)\right)^T\right\}_{\omega=1}^n$, for $\gamma\in\{1,2,3,4\}$, are independently generated.
    \item\textbf{Step 3}: For each fixed $\omega\in\{1,\cdots,n\}$, generate (white) noise as follows.
    \begin{align*}
        \begin{pmatrix}
        \epsilon_1(\omega;\tau\Delta)\\
        \epsilon_2(\omega;\tau\Delta)
        \end{pmatrix}\sim_{i.i.d.} N_2\left(\begin{pmatrix}
        0\\
        0
        \end{pmatrix}, \begin{pmatrix}
        30 & 0\\
        0 & 30
        \end{pmatrix}\right), \mbox{ for }\tau=0,\cdots,T.
    \end{align*}
    The $n$ collections $\left\{\left(\epsilon_1(\omega;\tau\Delta), \epsilon_2(\omega;\tau\Delta)\right)\right\}_{\tau=0}^T$, for $\omega\in\{1,\cdots,n\}$, are independently generated.
    \item\textbf{Step 4}: Compute the signals $\{(Y_1(\omega;\tau\Delta), Y_2(\omega;\tau\Delta)\}_{\tau=0}^T$, for $\omega=1,\cdots,n$, by model (\ref{app eq: data-generating model 1}).
\end{itemize}

A pair of simulated $\{(Y_1(\omega;\tau\Delta), Y_2(\omega;\tau\Delta)\}_{\tau=0}^T$ using Mechanisms 0 is presented in Supplementary Figure \ref{fig: Signals from Mechanism 1}.
\begin{figure}
    \centering
    \includegraphics[scale=0.6]{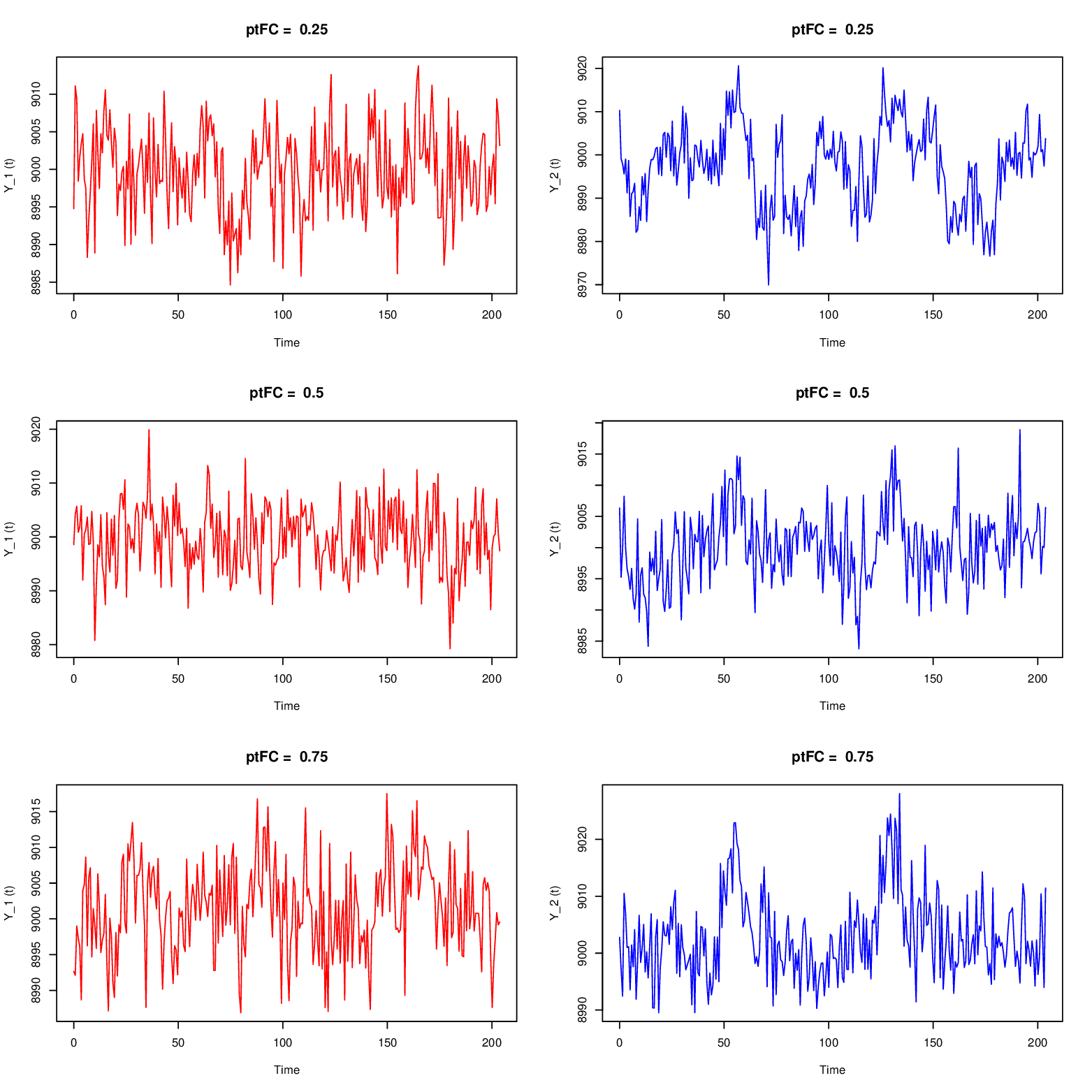}
    \caption{Left panels present signals generated for node 1, and right panels present signals generated for node 2. The underlying ptFCs for pairs in the upper, middle, and lower rows are 0.25, 0.5, and 0.75, respectively. }
    \label{fig: Signals from Mechanism 1}
\end{figure}

\subsection{Mechanism 1}

Mechanism 1 is very similar to Mechanism 0 and is based on (\ref{app eq: data-generating model 1}) as well, except for $k'\in\{1,2,3\}$. We generate signals $\{(Y_1(\omega;\tau\Delta), Y_2(\omega;\tau\Delta), Y_3(\omega;\tau\Delta)\}_{\tau=0}^T$, for $\omega=1,\cdots,n$, using the following steps.
\begin{itemize}
    \item \textbf{Step 1}: Generate normal random vectors $\left(\beta_{1}(\omega), \beta_{2}(\omega), \beta_{3}(\omega)\right)^T\sim_{i.i.d.} N_3\left(\pmb{0}, (\sigma_{ij})_{1\le i,j \le 3}\right)$ for $\omega=1,\cdots,n$, where  
    \begin{align}\label{app eq: underlying rho}
        \left(\sigma_{ij}\right)_{1\le i,j \le3}=\begin{pmatrix}
        2 & \rho_{12}\times\sqrt{2\times3} & 0\\
        \rho_{12}\times\sqrt{2\times3} & 3 & \rho_{23}\times\sqrt{2\times3} \\
        0 & \rho_{23}\times\sqrt{2\times3} & 2
        \end{pmatrix},
    \end{align}
    and $\rho_{ij}:=\vert\sigma_{ij}/\sqrt{\sigma_{ii}\sigma_{jj}}\vert$ for $(i,j)\in\{(1,2), (2,3)\}$ are the true underlying ptFCs.
    \item\textbf{Step 2}: For each $\gamma\in\{1,2,3,4\}$, generate bivariate normal random vector
    \begin{align*}
    \begin{pmatrix}
        \beta_{1,\gamma}(\omega)\\
        \beta_{2,\gamma}(\omega)\\
        \beta_{3,\gamma}(\omega)
    \end{pmatrix}
        \sim_{i.i.d.} N_3\left(\begin{pmatrix}
        0\\
        0\\
        0
        \end{pmatrix}, \begin{pmatrix}
        2 & 0.3\times\sqrt{2\times3} & 0\\
        0.3\times\sqrt{2\times3} & 3 & 0.3\times\sqrt{2\times3} \\
        0 & 0.3\times\sqrt{2\times3} & 2
        \end{pmatrix}\right),
    \end{align*}
    for $\omega=1,\cdots,n$. The four collections $\left\{\left(\beta_{1,\gamma}(\omega), \beta_{2,\gamma}(\omega), \beta_{3,\gamma}(\omega)\right)\right\}_{\omega=1}^n$, for $\gamma\in\{1,2,3,4\}$, are independently generated.
    \item\textbf{Step 3}: For each fixed $\omega\in\{1,\cdots,n\}$, generate (white) noise as follows.
    \begin{align*}
        \begin{pmatrix}
        \epsilon_1(\omega;\tau\Delta)\\
        \epsilon_2(\omega;\tau\Delta)\\
        \epsilon_3(\omega;\tau\Delta)
        \end{pmatrix}\sim_{i.i.d.} N_3\left(\begin{pmatrix}
        0\\
        0\\
        0
        \end{pmatrix}, \begin{pmatrix}
        30 & 0 & 0\\
        0 & 30 & 0\\
        0 & 0 & 30
        \end{pmatrix}\right), \mbox{ for }\tau=0,\cdots,T.
    \end{align*}
    The $n$ collections $\left\{\left(\epsilon_1(\omega;\tau\Delta), \epsilon_2(\omega;\tau\Delta), \epsilon_3(\omega;\tau\Delta)\right)\right\}_{\tau=0}^T$, for $\omega\in\{1,\cdots,n\}$, are independently generated.
    \item\textbf{Step 4}: Compute the signals $\{(Y_1(\omega;\tau\Delta), Y_2(\omega;\tau\Delta), Y_3(\omega;\tau\Delta)\}_{\tau=0}^T$, for $\omega=1,\cdots,n$, by model (\ref{app eq: data-generating model 1}).
\end{itemize}

\subsection{Mechanism 2}

This mechanism is motivated by the Pearson correlation approach and implemented by the following steps.
\begin{itemize}
    \item \textbf{Step 1}: For each fixed $\omega\in\{1,\cdots,n\}$, we independently generate $(\epsilon_1(\omega;\tau\Delta), \epsilon_2(\omega;\tau\Delta),\epsilon_2(\omega;\tau\Delta))^T$, for $\tau\in\{1,\cdots,T\}$, using the following distributions:
    
    \noindent (i) If $N(\tau\Delta)=1$, we apply
    \begin{align}\label{app eq: underlying varrho}
        \begin{pmatrix}
        \epsilon_1(\omega;\tau\Delta)\\
        \epsilon_2(\omega;\tau\Delta)\\
        \epsilon_3(\omega;\tau\Delta)
        \end{pmatrix}\sim N_3\left(
        \begin{pmatrix}
        0\\
        0\\
        0
        \end{pmatrix}, 
        \begin{pmatrix}
        30 & \varrho\times30 & 0\\
        \varrho\times30 & 30 & \varrho\times30\\
        0 & \varrho\times30 & 30
        \end{pmatrix}
        \right);
    \end{align}
    if $N(\tau\Delta)=0$, we implement
    \begin{align*}
        \begin{pmatrix}
        \epsilon_1(\omega;\tau\Delta)\\
        \epsilon_2(\omega;\tau\Delta)\\
        \epsilon_3(\omega;\tau\Delta)
        \end{pmatrix}\sim N_3\left(
        \begin{pmatrix}
        0\\
        0\\
        0
        \end{pmatrix}, 
        \begin{pmatrix}
        30 & 0 & 0\\
        0 & 30 & 0\\
        0 & 0 & 30
        \end{pmatrix}
        \right).
    \end{align*}
    The $n$ collections $\left\{\left(\epsilon_1(\omega;\tau\Delta), \epsilon_2(\omega;\tau\Delta), \epsilon_3(\omega;\tau\Delta)\right)\right\}_{\tau=0}^T$, for $\omega\in\{1,\cdots,n\}$, are independently generated.
    \item\textbf{Step 2}: Compute the signals $\{(Y_1(\omega;\tau\Delta), Y_2(\omega;\tau\Delta), Y_3(\omega;\tau\Delta)\}_{\tau=0}^T$, for $\omega=1,\cdots,n$, by the following.
    \begin{align}\label{app eq: Mechanism 2 model}
    & Y_{k'}(\omega;t)=9000+ N*h_{k'}(t) + \left\{\sum_{\gamma=1}^4  \tilde{N}_{\gamma}*\tilde{h}_{k',\gamma}(t)\right\}+\epsilon_{k'}(\omega;t),\\
    \notag& \mbox{where } t\in\mathcal{T}=\{\tau\Delta\}_{\tau=0}^T \mbox{ and }k'\in\{1,2,3\}.
\end{align}
\end{itemize}

Simulated $\{(Y_1(\omega;\tau\Delta), Y_2(\omega;\tau\Delta), Y_3(\omega;\tau\Delta)\}_{\tau=0}^T$ using Mechanisms 1 and 2, are presented in Supplementary Figure \ref{fig: Signals from Mechanism 2}.

\begin{figure}[h]
    \centering
    \includegraphics[scale=0.9, angle=270]{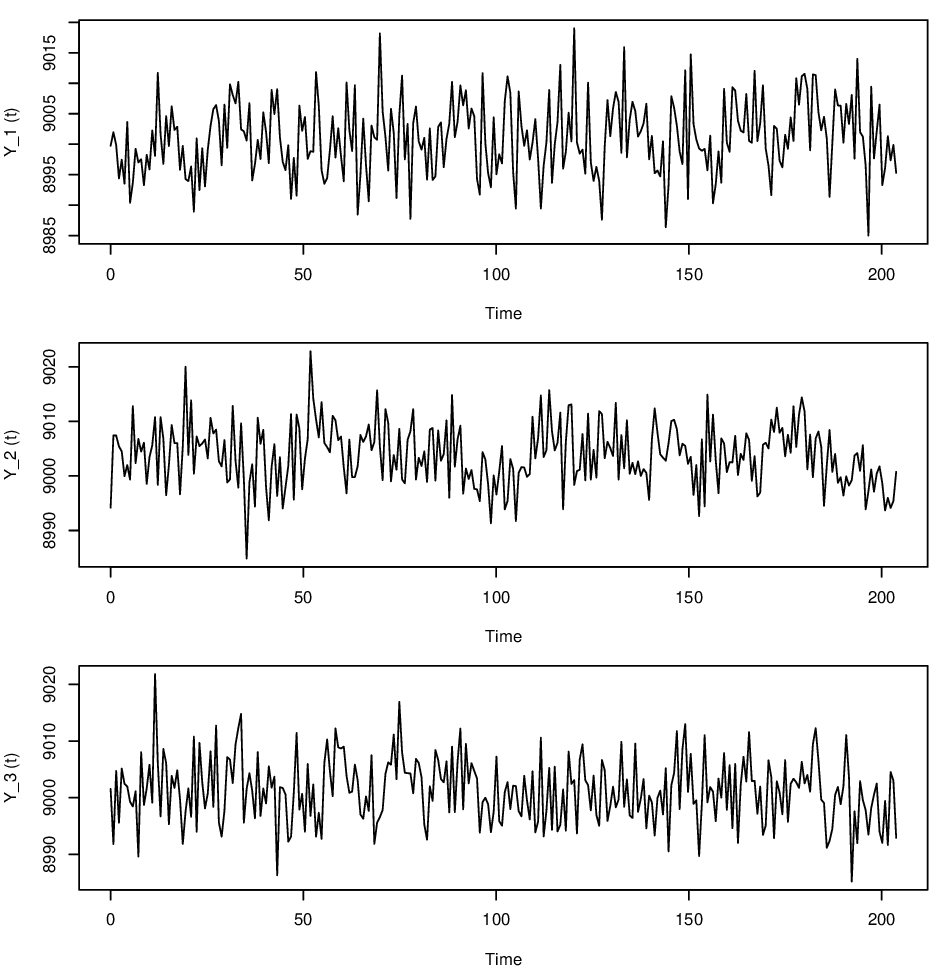}
    \caption{Signals generated for nodes 1, 2, and 3 using Mechanism 2.}
    \label{fig: Signals from Mechanism 2}
\end{figure}

\subsection{The Data-generating Mechanism for the 50-node Study}\label{section: The Data-generating Mechanism for the 50-node Study}

We provide two data-generating mechanisms based on Mechanisms 1 and 2 (see Appendix E.2 and E.3). 

First of all, we generate a $50\times 50$ correlation matrix $(\rho_{ij})_{1\le i,j\le 50}$ by the following as our ground truth connectivity structure
\begin{enumerate}
\item For all $i=1,2,\ldots,50$, we define $\rho_{ii}=1$.
\item For $i=1,\ldots,25$ and $j=i+1,\ldots,50$, randomly generate $\rho_{ij}=\rho_{ji}\sim \operatorname{Unif}(0,\,0.4)$.
\item For $i=26,\ldots,50$ and $j=i+1,\ldots,50$, randomly generate $\rho_{ij}=\rho_{ji}\sim \operatorname{Unif}(0.6,\,1)$.
\end{enumerate}

\subsubsection{Mechanism 1-based Approach}

With the correlation matrix $(\rho_{ij})_{1\le i,j\le 50}$ generated above, we generate signals $\{(Y_1(\omega;\tau\Delta),\ldots, Y_{50}(\omega;\tau\Delta)\}_{\tau=0}^T$ for only one synthetic subject $\omega$ using the following steps
\begin{itemize}
    \item \textbf{Step 1}: Generate normal random vector $\left(\beta_{1}(\omega), \beta_{2}(\omega), \ldots, \beta_{50}(\omega)\right)^T\sim_{i.i.d.} N\left(\pmb{0}, (3\rho_{ij})_{1\le i,j \le 50}\right)$.
    \item\textbf{Step 2}: For each $\gamma\in\{1,2,3,4\}$, generate normal random vector $(\beta_{1,\gamma}(\omega), \beta_{2,\gamma}(\omega),\ldots, \beta_{50,\gamma}(\omega))^T\sim N(\boldsymbol{0}, (3\Tilde{\rho}_{ij})_{1\le i,j\le 50})$, where $\Tilde{\rho}_{ij}=1$ if $i=j$ and $\Tilde{\rho}_{ij}=0.3$ if $i\ne j$.
    \item\textbf{Step 3}: For each $\tau=0,1,2,\ldots,T=283$, generate white noise as follows
    \begin{align*}
        \left(\epsilon_1(\omega;\tau\Delta), \epsilon_2(\omega;\tau\Delta),\ldots, \epsilon_{50}(\omega;\tau\Delta)\right) \sim N(\boldsymbol{0}, \boldsymbol{I}_{50\times 50})
    \end{align*}
    where the $\boldsymbol{I}_{50\times 50}$ denotes the $50\times 50$ identity matrix. 
    \item\textbf{Step 4}: Compute the signals $\{(Y_1(\omega;\tau\Delta), Y_2(\omega;\tau\Delta), \ldots, Y_{50}(\omega;\tau\Delta)\}_{\tau=0}^T$ by model (\ref{app eq: data-generating model 1}).
\end{itemize}

\subsubsection{Mechanism 2-based Approach}

With the correlation matrix $(\rho_{ij})_{1\le i,j\le 50}$ generated at the beginning of Section \ref{section: The Data-generating Mechanism for the 50-node Study}, we generate signals $\{(Y_1(\omega;\tau\Delta),\ldots, Y_{50}(\omega;\tau\Delta)\}_{\tau=0}^T$ for only one synthetic subject $\omega$ using the following steps
\begin{itemize}
    \item \textbf{Step 1}: We independently generate $(\epsilon_1(\omega;\tau\Delta), \epsilon_2(\omega;\tau\Delta),\ldots,\epsilon_{50}(\omega;\tau\Delta))^T$, for $\tau\in\{1,\cdots,T\}$, using the following distributions:
    
    \noindent (i) If $N(\tau\Delta)=1$, we apply $\left( \epsilon_1(\omega;\tau\Delta), \epsilon_2(\omega;\tau\Delta),\ldots,\epsilon_{50}(\omega;\tau\Delta) \right)^T \sim N\left(\boldsymbol{0}, (30\rho_{ij})_{1\le i,j\le 50} \right)$;
    if $N(\tau\Delta)=0$, we implement $\left( \epsilon_1(\omega;\tau\Delta), \epsilon_2(\omega;\tau\Delta),\ldots,\epsilon_{50}(\omega;\tau\Delta) \right)^T \sim N\left(\boldsymbol{0}, \boldsymbol{I}_{50\times 50} \right)$.
    \item\textbf{Step 2}: Compute the signals $\{(Y_1(\omega;\tau\Delta), Y_2(\omega;\tau\Delta),\ldots, Y_{50}(\omega;\tau\Delta)\}_{\tau=0}^T$ by the following.
    \begin{align*}
    & Y_{k'}(\omega;t)=9000+ N*h_{k'}(t) + \left\{\sum_{\gamma=1}^4  \tilde{N}_{\gamma}*\tilde{h}_{k',\gamma}(t)\right\}+\epsilon_{k'}(\omega;t),\\
    \notag& \mbox{where } t\in\mathcal{T}=\{\tau\Delta\}_{\tau=0}^T \mbox{ and }k'\in\{1,2,\ldots,50\}.
\end{align*}
\end{itemize}





\section{Limitations of the Pearson correlation approach}\label{section: Supplementary Figures}

Supplementary Figure \ref{fig: HRF Pearson} illustrates the limitations of the Pearson correlation approach defined as follows in terms of the influence of latency/HRF variance and noise.
\begin{align}\label{app eq: Pearson correlation with the convolution model}
    \left\vert corr(P_k, P_l)\right\vert=\left\vert \int_{\mathcal{T}} \phi_k(t) \times \phi_l(t)\mu(dt)\Bigg/\sqrt{ \int_{\mathcal{T}}\left\vert \phi_k(t)\right\vert^2 \mu(dt)\times \int_{\mathcal{T}}\left\vert \phi_l(t)\right\vert^2 \mu(dt)}\right\vert.
\end{align}

\begin{figure}
    \centering
    \includegraphics[scale=0.55]{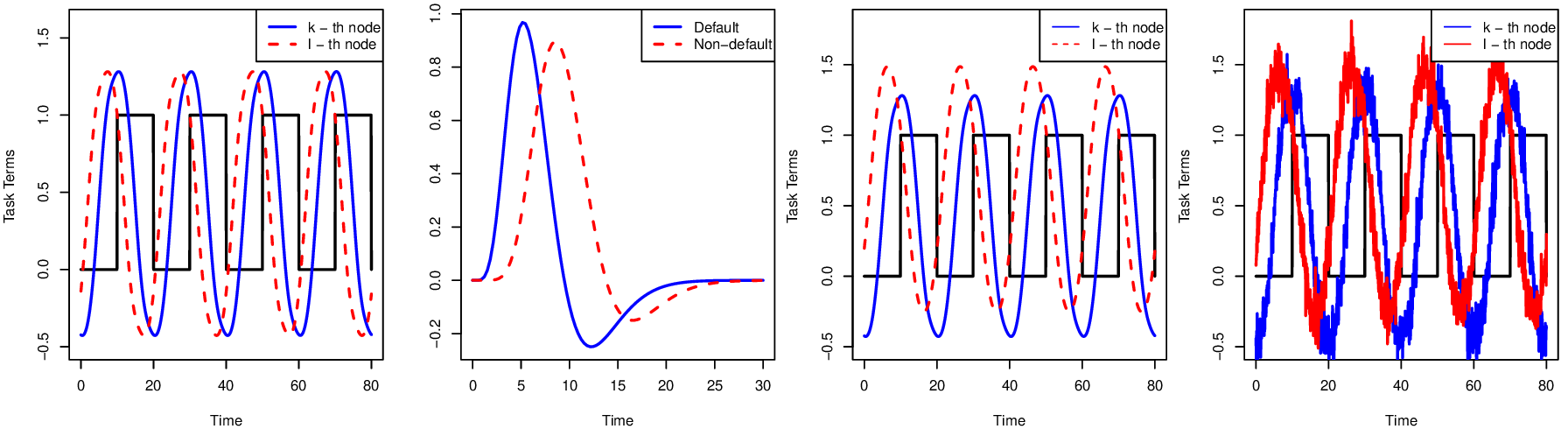}
    \put(-460, -10){\small{(a): $\vert corr(P_k, P_l)\vert=0.571$}}
    \put(-295, -10){\small{(b)}}
    \put(-225, -10){\small{(c): $\vert corr(P_k, P_l)\vert=0.445$}}
    \put(-105, -10){\small{(d): $\vert corr(P_k, P_l)\vert=0.422$}}
    \caption{An example illustrating the limitations of the Pearson correlation approach in (\ref{app eq: Pearson correlation with the convolution model}). Let $N(t)=\sum_{m=1}^4\mathbf{1}_{(20m-10, 20m]}(t)$ be the stimulus signal of a block design, $\mathcal{T}=[0,80]$, and $h_k$ be the \texttt{canonicalHRF} function with default parameters in the \texttt{R} package \texttt{neuRosim}. $h_k$ is illustrated by the solid blue curve in (b). Panel (a) shows the influence of the variation in latency on (\ref{app eq: Pearson correlation with the convolution model}), where $h_l(t)=h_k(t+3)$, i.e., $t_{0,k}=-3$; the task-evoked terms at the $k^{th}$ and $l^{th}$ nodes are presented by blue and red curves, respectively. In (b), $h_l$ is replaced by the \texttt{canonicalHRF} function with parameters $\texttt{a1}=10, \texttt{a2}=15, \texttt{b1}=\texttt{b2}=0.9, \texttt{c}=0.35$, and presented by the dashed red curve. Panel (c) shows the influence of variation in HRF on (\ref{app eq: Pearson correlation with the convolution model}), and the task-evoked terms at the $k^{th}$ and $l^{th}$ nodes are presented by blue and red curves, respectively, where $h_l$ is defined in (b) as the red curve. Panel (d) is a noise-contaminated version of (c), i.e., curves of $P_k(t)+\epsilon_k(t)$ (blue) and $P_l(t)+\epsilon_l(t)$ (red) with $\epsilon_k(t), \epsilon_l(t)\sim_{iid} N(0, 0.1)$ for each $t$. Panel (d) shows that random noise can further influence (\ref{app eq: Pearson correlation with the convolution model}). In panels (a,c,d), the presented $\vert corr(P_k, P_l)\vert$ is computed by (\ref{app eq: Pearson correlation with the convolution model}).}
    \label{fig: HRF Pearson}
\end{figure}

\section{Details of Beta-series Regression and Coherence Analysis}\label{section: Details of Beta-series Regression and Coherence Analysis}

\noindent\textbf{Beta-series regression:} We apply the procedure described in Chapter 9.2 of \cite{ashby2019statistical} to estimate task-evoked FC using beta-series regression, except we ignore the ``nuisance term" therein. For each subject $\omega$ and underlying $\rho_{ij}$ or $\varrho_{ij}$ with $i<j$, we apply beta-series regression to signals $\{Y_i(\omega;\tau\Delta), Y_j(\omega;\tau\Delta)\}_{\tau=0}^T$, estimate the FC evoked by the task of interest $N(t)$, and denote the estimated quantity by $\hat{\rho}^{betaS}_{ij,\omega}$. Then we compute the mean and median of $\{\hat{\rho}^{betaS}_{ij,\omega}\}_{\omega=1}^{n}$ across all $\omega=1,\cdots,n$ and denote them by $\hat{\rho}^{betaS}_{ij,mean}$ and $\hat{\rho}^{betaS}_{ij,median}$. 

\noindent\textbf{Coherence analysis:} For each $\omega$ and $\rho_{ij}$ or $\varrho_{ij}$, compute the coherence between signals $\{Y_i(\omega;\tau\Delta)\}_\tau$ and $ \{Y_j(\omega;\tau\Delta)\}_{\tau}$ by the \texttt{R} function \texttt{coh} in package \texttt{seewave}. The coherence is a function $coh_{ij, \omega}(\xi)$ of $\xi$. Since HRFs act as band-pass filters \citep[$0-0.15$ Hz,][]{aguirre1997empirical}, compute the median of $coh_{ij, \omega}(\xi)$ across all $\xi\in(0,0.15)$ and denote it by $\hat{\rho}^{Coh}_{ij, \omega}$. The mean and median of $\{\hat{\rho}^{Coh}_{ij, \omega}\}_{\omega=1}^{n}$ across all $\omega$ are denoted by $\hat{\rho}^{Coh}_{ij, mean}$ and $\hat{\rho}^{Coh}_{ij, median}$, respectively.

\section{Future Research}\label{section: Future Research}

An interesting extension of our proposed model is the inclusion of an interaction term between the $P_k(\omega;t)$ and $Q_k(\omega;t)$ in model $Y_k(\omega;t)=P_k(\omega;t) + Q_k(\omega;t) + \epsilon_k(\omega;t)$. In order to experimentally validate whether the inclusion of such an interaction term would be biologically relevant, novel experimental designs would be needed. Particularly, we may consider an experiment where the subjects are at rest for a certain period during the scanning session followed by the performance of the task. Using this design, we may obtain estimates of $P_k$ and $Q_k$ considering the model with and without an interaction term and discuss the biological relevance of the results. Future work may investigate this issue and develop theoretical conditions for the identifiability of terms in a model with an interaction term and estimation algorithms.

In many applications, task-evoked FC at the subject level instead of the population level is of interest. In our subsequent research, we will propose a framework parallel to the ptFC one at the subject level. Additionally, implementing the \textit{persistent homology} (PH) framework to FC estimates is an effective way of circumventing the choice of threshold for FC measurements, e.g., see \cite{lee2001relative}. Applying the PH approach to our proposed ptFCs is left for future research as well.

\end{appendix}


\bibliography{sample}

\end{document}